\documentclass[11pt]{article}
\usepackage[margin=1.1in]{geometry}
\usepackage{amsmath,amssymb,amsthm}
\usepackage{graphicx}
\usepackage[hidelinks]{hyperref}
\hypersetup{
  pdftitle={Minimum-makespan completion and vertex selection leave the Wang-Sitters constant at 11/6},
  pdfauthor={Adam Y. Shavit},
  pdfsubject={Approximation algorithms; graph balancing; scheduling on unrelated machines},
  pdfkeywords={graph balancing, restricted assignment, approximation algorithm,
               linear programming relaxation, rounding, Wang-Sitters, makespan}
}

\newtheorem{theorem}{Theorem}
\newtheorem{lemma}[theorem]{Lemma}
\newtheorem{proposition}[theorem]{Proposition}
\newtheorem{corollary}[theorem]{Corollary}
\theoremstyle{definition}
\newtheorem{definition}[theorem]{Definition}
\theoremstyle{remark}
\newtheorem*{remark}{Remark}
\newtheorem{measurement}[theorem]{Measurement}

\newcommand{\OPT}{\mathrm{OPT}}
\newcommand{\ALG}{\mathrm{ALG}}

\title{Minimum-makespan completion and vertex selection\\
leave the Wang--Sitters constant at $11/6$}
\author{Adam Y. Shavit\\
\small Hunter College and the Graduate Center, CUNY\\
\small \texttt{as1127@hunter.cuny.edu} \quad ORCID 0009-0008-1235-0995}
\date{\today}

\newcommand{\supid}{arXiv:2609.03890}
\begin{document}
\maketitle

\begin{abstract}
The $11/6$ worst-case constant of the Wang--Sitters rounding scheme
\cite{Paper1} can naturally be attributed to the freedom in Step~3, where an
arbitrary valid slot matching is permitted. We show that eliminating that
freedom does not improve the constant. A minimum-makespan completion oracle
still has worst-case constant exactly $11/6$ against the optimum; both natural
$7/4$ statements about it are false; and restricting Step~1 to vertices of the
relaxation does not help. The loss therefore cannot be attributed solely to the
freedom in Step~3. We also record what structure survives: a reduction confining every overload to two shapes, a seven-machine
instance defeating the natural two-phase repair, and a strict $7/4$ bound on the
generalized three-path family.
\end{abstract}

\section{Introduction}

Wang and Sitters' rounding scheme for graph balancing \cite[\S3]{WS16}
guarantees makespan strictly below $\tfrac{11}{6}T$ when applied to a feasible
solution of the relaxation of Ebenlendr, Kr\v{c}\'al and Sgall \cite{EKS14}. It
leaves two choices unspecified: Step~1 may return \emph{any} feasible fractional
solution, and Step~3 \emph{any} valid Shmoys--Tardos slot matching \cite{ST93}.
A companion note \cite{Paper1} measures what that latitude permits: the supremum
of $\ALG/\OPT$ over all permitted executions is $\tfrac{11}{6}$, approached and
never attained, and no threshold $\beta \in (\tfrac12,1)$ improves the
guarantee against the relaxation's own value, $\tfrac23$ being the unique
minimizer. Section~\ref{sec:prelim} restates what this note needs from that
one, in statements without proofs. This raises a natural question: how much of
the $\tfrac{11}{6}$ loss is caused by arbitrary matching selection?

We show that optimizing the matching does not improve the worst-case constant.
Even when Step~3 returns a makespan-minimizing valid matching, the
threshold-relative constant remains exactly $\tfrac{11}{6}$, and the
corresponding $\tfrac74$ guarantee against the true optimum is false.
Restricting Step~1 to vertex solutions does not remove the obstruction either:
a family of vertices again approaches $\tfrac{11}{6}$. The obstruction is
therefore structural to this slot-rounding framework rather than an artifact of
arbitrary matching selection, and it persists in substantially more selective
versions of the same scheme.

\paragraph{Contributions.} Three theorems carry the paper, and they answer
the same question on three progressively harder settings of the scheme.

\begin{enumerate}
\item \textbf{Optimizing Step~3 does not improve $\tfrac{11}{6}$}
(Section~\ref{sec:74}). Two statements have been run together under the name
``the $7/4$ conjecture'', and separating them is the first step: the
threshold-relative one is false, by a $19$-machine instance at a least feasible
$T$ on which \emph{every} valid matching reaches $44/25 > \tfrac74$
(Theorem~\ref{thm:74false}). Theorem~\ref{thm:116} then fixes the constant any
such bound must reach at exactly $\tfrac{11}{6}$ --- the guarantee the plain
scheme already has --- so no rule for choosing the matching can do better
against the threshold.

\item \textbf{The oracle is not a $\tfrac74$-approximation}
(Section~\ref{sec:oracle}). Against the \emph{optimum} the question is a
different one, and the answer is the same. Theorem~\ref{thm:74bfalse} gives each
of the previous family's unit jobs its own partner machine: every forced load is
unchanged, the optimum drops to $1$, and the oracle is pushed to
$\tfrac{85}{48} > \tfrac74$ at $g = 4$. The bound is over \emph{every} valid
matching, so it applies to any selection rule whatever, and
Corollary~\ref{cor:oracle} puts the oracle's worst ratio to the optimum at
exactly $\tfrac{11}{6}$: a rule that always picks the best matching guarantees
no more than the polynomial-time scheme already does.

\item \textbf{A vertex in Step~1 buys nothing either}
(Section~\ref{sec:vertex}). Both refuting points are interior, so a variant
selecting a vertex solution of the relaxation --- which can be done in
polynomial time, and is also the kind of solution a simplex implementation
returns --- was the last escape. Theorem~\ref{thm:pinvertex} closes it
with a family whose points are vertices from the start and whose forced load is
the same. Because that family sits at $T = \OPT = 1$, it settles both scales at
once: Corollary~\ref{cor:vertexconst} puts the vertex-restricted constant at
$\tfrac{11}{6}$ against the threshold and against the optimum alike, and
Corollary~\ref{cor:anyrule} extends it to every rule for choosing the matching,
randomized rules included. Theorem~\ref{thm:invariance} collects the four
settings into one statement.
\end{enumerate}

\noindent
The four settings the three theorems between them cover, with the constant in
each and where it comes from:

\smallskip
\centerline{\setlength{\tabcolsep}{9pt}\renewcommand{\arraystretch}{1.15}%
\begin{tabular}{llccl}
\hline
Step~1 returns & Step~3 returns & $\sup \ALG/\OPT$ & Attained? & Where\\
\hline
any feasible $x$ & any valid matching & $11/6$ & no & \cite{Paper1}\\
any feasible $x$ & a best valid matching & $11/6$ & no & Cor.~\ref{cor:oracle}\\
a vertex & any valid matching & $11/6$ & no & Thm.~\ref{thm:pinvertex}\\
a vertex & a best valid matching & $11/6$ & no & Thm.~\ref{thm:pinvertex}\\
\hline
\end{tabular}}
\smallskip

\noindent
Every row carries the same supremum, $\tfrac{11}{6}$, and in no row is it
attained: neither optimizing Step~3 nor restricting Step~1 to vertices of the
relaxation improves the worst-case constant. The third and fourth rows are one
theorem, because its bound is over every valid matching and therefore covers the
best one; that Theorem~\ref{thm:pinvertex}'s family approaches
$\tfrac{11}{6}$ at vertices without reaching it, and that nothing else reaches it
either, is the strict ceiling of \cite{Paper1} applied at vertices, which are
feasible points like any other. Theorem~\ref{thm:invariance} states the four
rows as one statement, and Figure~\ref{fig:invariance} draws them.

\paragraph{What else is here, and where it sits.} Section~\ref{sec:variant}
states the oracle and shows that selecting its matching is NP-hard for a
\emph{supplied} fractional solution --- weaker than it looks, since Step~1 is
not adversarial, so the variant's own complexity stays open.
Section~\ref{sec:classify} confines every overload above $\tfrac74$ to two
explicit shapes (Theorem~\ref{thm:reduction}) and shows the natural two-phase
proof cannot work (Proposition~\ref{prop:twophase}).
Section~\ref{sec:threepath} bounds the generalized three-path family --- the
extremal family already known for this relaxation --- strictly below
$\tfrac74$ (Proposition~\ref{prop:familyceiling}).

\paragraph{What is proved and measured.}
Theorems and propositions below are proved, except where a statement says
otherwise, and no statement in the body now rests on a measurement.
Definition~\ref{conj:74} records two candidate guarantees, and both are
disproved here: form~(a)
by Theorem~\ref{thm:74false} and form~(b) by Theorem~\ref{thm:74bfalse}, on
different instances. Neither was found by searching, and the searches reported in
Section~\ref{sec:vertex} would not have found either. A statement labelled
\emph{Measurement} reports what the deposited code returned at the stated
parameters; it is not a proof. Appendix~\ref{app:computational} carries the
computational record --- controls, independent implementations, search sizes and
the deposited script behind each measured claim that has one --- so that the body is not
interleaved with it.

\section{Preliminaries}\label{sec:prelim}

This section fixes the objects and states, without proof, the three results of
\cite{Paper1} that the arguments below use. Everything in it is that note's;
nothing here is new, and the proofs are there.

\paragraph{The relaxation.} The scheme is run at a \emph{target} $T$ and asks
whether the jobs can be scheduled within it. Write $p_j$ for the size of job
$j$, which is \emph{positive}. For a given $T$ the relaxation below is either
feasible or not, so the \emph{least feasible target}, written
$T_{\mathrm{LP}}$, is the smallest $T$ at which the relaxation is feasible. That
minimum is attained and not merely approached, which \cite{Paper1} proves;
binary search on $T$ is how this note computes the value, not what defines it.
It is at most the true optimum $\OPT$, because an
integral schedule of makespan $\OPT$ is itself a feasible fractional solution at
$\OPT$. \textbf{The two need not be equal, and keeping them apart is what most
of this note is about.} Throughout, $T$ is normalized to $1$.

The relaxation is that of \cite{EKS14}, also used by \cite{WS16}. The variable
$x_{ij} \in [0,1]$ is the fraction of job $j$ placed on machine $i$, and is
supported on the at most two machines job $j$ is allowed to use. Its three
constraint families are
\[
\sum_i x_{ij} = 1 \ \ \forall j, \qquad
\sum_j x_{ij} p_j \le 1 \ \ \forall i, \qquad
\sum_{j : p_j > 1/2} x_{ij} \le 1 \ \ \forall i ,
\]
together with $x_{ij} = 0$ whenever $p_j > 1$: every job is fully placed, no
machine is loaded above the target, and no machine carries more than one big
job's worth of fraction. The third family is what separates this relaxation from
the assignment relaxation. It is \emph{not} the configuration linear program,
which is strictly stronger, and nothing here is a claim about rounding that one.
This relaxation's integrality gap is exactly $7/4$: the upper bound is the
algorithm of \cite{EKS14} and the lower bound their \emph{three-path family},
three long odd paths joining two vertices with weights alternating $1$ and
$1/2 - \varepsilon$ and a dedicated load $1/4$ everywhere. Unless
P${}={}$NP, no polynomial-time algorithm for graph balancing has ratio strictly
smaller than $3/2$ \cite{AJMOZ11}. Call a job
\emph{big} when its size exceeds $T/2$.

\paragraph{The three steps.} All three are Wang and Sitters' \cite[\S3]{WS16},
not \cite{EKS14}'s; the two are different algorithms with different ratios.
\begin{enumerate}
\item[\textbf{Step 1.}] Return \emph{a} feasible solution $x$ of the relaxation
above.
\item[\textbf{Step 2.}] Assign big job $j$ to machine $i$ whenever
$x_{ij} \ge \beta$, with $\beta = 2/3$. Such a job is scheduled on $i$ and takes
no further part: delete its fractions from $x$ on both of its machines. No
machine receives two, because the relaxation's third constraint gives each
machine at most one big job's worth of fraction and $\beta > 1/2$.
\item[\textbf{Step 3.}] Build the \emph{slot structure} of \cite{ST93} from
what is left of $x$, and return \emph{any} valid matching of the surviving jobs
to slots.
\end{enumerate}
Steps~1 and~3 each return \emph{a} feasible object rather than \emph{the} one.
Wang and Sitters prove that every run finishes within $\tfrac{11}{6}T$.

\paragraph{The slot structure and the half-open convention.} On each machine,
order its fractional jobs by nonincreasing size and cut the fractional mass into
unit slots, the last one shorter; a job is adjacent to every slot its own
fraction touches. Slots and job fractions are half-open intervals
$[\,\cdot\,,\cdot\,)$, and \emph{touches} means nonempty intersection, so a job
whose fraction ends exactly where a slot begins is \emph{not} adjacent to that
slot. This is \cite{ST93}'s own convention --- their construction adds the edge
carrying a job into the next slot only if $\sum_{j \le j_s} x_{ij} > s$
\cite[p.~464]{ST93}, a strict inequality --- and it is not cosmetic: the
$\tfrac{11}{6}$ ceiling below is false under the closed reading, at a
three-job witness \cite{Paper1} exhibits. Two consequences are used here. The
number of slots on machine $i$ is $\lceil M_i \rceil$ where $M_i$ is its
surviving fractional mass, so a machine of mass exactly $2$ gets exactly two
slots, which is the count the pigeonhole of Theorem~\ref{thm:116} runs on. And
a \emph{valid matching} assigns each remaining job to one adjacent slot, at most
one job per slot; one exists by Hall's theorem.

\paragraph{The execution space.} Because two of the three steps do not determine
their own output, the object of study is fixed before anything is stated about it.
Write $P(I)$ for the set of feasible solutions of the relaxation on instance $I$
at threshold $T$ and, for $x \in P(I)$, write $\mathcal{M}(I, x)$ for the set of
valid slot matchings available in Step~3 after Step~2 has run on $x$. The
\emph{execution space} of the scheme on $I$ is
\[
  \mathcal{E}(I) \;=\; \{\,(x, M) \;:\; x \in P(I),\ M \in \mathcal{M}(I, x)\,\},
\]
and $\ALG(I, x, M)$ is the makespan the scheme produces on that execution. The
constant \cite{Paper1} computes is
\[
  R_{\mathrm{WS}} \;=\; \sup_{I}\ \sup_{(x,M)\,\in\,\mathcal{E}(I)}
  \frac{\ALG(I, x, M)}{\OPT(I)} .
\]
Three objects must be kept apart, and this note refers to them by these names
throughout.

\smallskip
\centerline{\begin{tabular}{lll}
\hline
Name & Step~1 returns & Step~3 returns\\
\hline
the \emph{scheme} & any $x \in P(I)$ & any $M \in \mathcal{M}(I,x)$\\
an \emph{implementation} & one $x$, solver-determined & one $M$, routine-determined\\
the \emph{oracle} & any $x \in P(I)$ & a makespan-minimizing $M$\\
\hline
\end{tabular}}
\smallskip

\noindent
Every theorem below is about the first or the third. Nothing here is a theorem
about the second: an implementation realizes one execution per instance, and
which one depends on a solver and a matching routine that the published
specification does not fix. Where this note reports what a solver actually
returned, that is evidence about one implementation and is labelled as a
measurement, never as a bound.

\paragraph{Per-machine notation.} Fix a machine $i$ and a valid matching. Write
$b_i$ for the size of the job assigned to $i$ in Step~2 (zero if none); $L'_i$
for the relaxation load on $i$ of the jobs that entered the slot structure. Then
$L'_i \le 1 - \beta b_i$, since the job Step~2 took contributed at least
$\beta b_i$ to machine $i$'s load constraint. Write $M_i$ for the total fraction
on $i$ of those same jobs and $K_i = \lceil M_i \rceil$ for the number of slots;
$s_z$ for the largest size adjacent to slot $z$; and $w_z$ for the
size-weighted mass of slot $z$, so that $\sum_{z \le K_i} w_z = L'_i$.

\paragraph{What is taken from \cite{Paper1}.} The three statements below are
proved there and used here without proof.

\medskip\noindent
\textbf{The chain bound \cite{Paper1}.} \emph{For every machine $i$ carrying at
least one slot job and every valid matching:} (i) $w_{K_i} > 0$, and
$s_z \le w_{z-1}$ for $2 \le z \le K_i$, \emph{hence}
$\sum_{z=2}^{K_i} s_z \le L'_i - w_{K_i}$; (ii) \emph{the load of $i$ is at most}
$b_i + s_1 + L'_i - w_{K_i}$.

\medskip\noindent
\textbf{The $11/6$ ceiling \cite{Paper1}.} \emph{Let the relaxation be feasible
at $T$, let $x$ be any feasible solution, and let any valid matching be taken in
Step~3 with $\beta = 2/3$. Then every machine's load is strictly less than}
$\tfrac{11}{6}T$.

\medskip\noindent
\textbf{The per-machine trichotomy \cite{Paper1}.} \emph{Fix a valid matching
and a machine $i$ carrying at least one slot job, and normalize $T = 1$. Exactly
one of the following holds, and each carries its own bound.}
\begin{itemize}
\item[(N0)] \emph{$b_i = 0$ and the job matched to slot $1$ is absent or of size
at most $\tfrac12$; then} $\mathrm{load}(i) < \tfrac32$.
\item[(N1)] \emph{$b_i = 0$ and a big job $q$ is matched to slot $1$; then}
$\mathrm{load}(i) < 2 - \tfrac{x_{iq}}{2}$.
\item[(S)] \emph{$b_i > 0$; then} $\mathrm{load}(i) < \tfrac32 + \tfrac{b_i}{3}$.
\end{itemize}
\emph{The ceiling above is what these three give together, since
$b_i \le 1$ and $x_{iq} \ge 0$.} Theorem~\ref{thm:reduction} uses the three
bounds rather than the ceiling they imply, which is why they are imported
separately.

\medskip\noindent
\textbf{The $11/6$ family \cite{Paper1}.} \emph{For every
$\delta \in (0, 1/6)$ there are a three-job, three-machine instance with
$\OPT = T = 1$, a feasible relaxation solution and a valid matching whose
makespan is} $\tfrac{11}{6} - \delta$. \emph{Together with the ceiling this
fixes} $R_{\mathrm{WS}} = 11/6$, \emph{a supremum attained by no instance.} A
second, six-job family there is tight only against the relaxation's target,
where its ratio to the true optimum is $\tfrac{11}{8}$; we call it the
\emph{six-job family} below.

\section{The threshold-relative constant is exactly \texorpdfstring{$\tfrac{11}{6}$}{11/6}}\label{sec:74}

This is the first of the note's three central theorems, and the one the other
two are measured against. It says that choosing the Step-3 matching perfectly
--- by a rule returning a makespan-minimizing valid matching
--- does not move the worst-case constant against the threshold at all: it stays
at $\tfrac{11}{6}$, where \cite{WS16}'s analysis already put every run of the
plain scheme. Two statements have to be separated before that can be said
cleanly, and the separation is itself a result, so this section states both
forms of the $7/4$ expectation, refutes the threshold-relative one, and then
pins the constant.

Two statements share the name ``the $7/4$ conjecture'', and they are not the
same statement; Theorem~\ref{thm:74false} below separates them. We
state both, mark each with its status, and record which way the
implication runs.

\emph{Attribution.} Neither form appears explicitly in \cite{EKS14} or
\cite{WS16}. The nearest published statement is \cite[\S5]{EKS14}: ``Even for
the case of graphs there is the remaining gap between $1.5$ and $1.75$. It would
be nice to have a tight(er) bound,'' which records the gap as open and says
nothing about slot matchings or about the best-matching variant.
\cite{WS16}'s only conjecture concerns a $3/2$-approximation for restricted
assignment with interval processing sets, a different problem. We therefore
introduce (a) and (b) as two natural formalisations of the $\tfrac74$ question
for this rounding framework, and not as anyone's conjecture.

\begin{definition}[two candidate guarantees]\label{conj:74}
The following two natural guarantees differ in what the makespan is measured
against.
\begin{enumerate}
\item[(a)] \emph{Threshold-relative form.} Let $x$ be a feasible
relaxation solution at threshold $T$. Then some valid slot matching has
makespan at most $\tfrac74 T$. Equivalently, the best-matching variant of
Section~\ref{sec:variant} has makespan at most $\tfrac74 T$. \textbf{False}, by
Theorem~\ref{thm:74false}.
\item[(b)] \emph{Optimum-relative form.} For every instance, the
best-matching variant run at the least feasible threshold
$T_{\mathrm{LP}}$ is a $\tfrac74$-approximation: its makespan is at most
$\tfrac74 \OPT$. \textbf{False}, by Theorem~\ref{thm:74bfalse}, though
not refuted by Theorem~\ref{thm:74false}: the two forms are refuted by different
instances, which is the point of separating them.
\end{enumerate}
\end{definition}

\noindent
\emph{The difference between the two forms.}
Form (a) compares the rounding's output to the \emph{threshold the
relaxation was solved at}. Form (b) compares it to the \emph{true
optimum}. The two coincide only when the relaxation is tight.

Since the algorithm binary searches to the least feasible threshold and
$T_{\mathrm{LP}} \le \OPT$ --- any integral schedule of makespan $\OPT$ being a
feasible fractional solution at $\OPT$ --- form (a) would imply form (b). The
converse need not hold, because a rounding may exceed
$\tfrac74 T_{\mathrm{LP}}$ while remaining below $\tfrac74 \OPT$.
Theorems~\ref{thm:74false} and~\ref{thm:74bfalse} exploit precisely this
distinction: the two forms fall to different instances, and
Theorem~\ref{thm:74bfalse}'s is not a perturbation of
Theorem~\ref{thm:74false}'s but a one-line change to the family that determines
the threshold-relative constant.

\begin{remark}[what the integrality gap does and does not bound]
\label{rem:gapscope}
Form (a) invites the justification that it is the strongest
statement available for any procedure rounding this relaxation, since the
relaxation's own integrality gap is exactly $7/4$. That inference is
invalid, and it is wrong independently of the counterexample; the
counterexample only makes it consequential.

The premise is very nearly right, and worth stating precisely. For the
relaxation this note actually uses,
Ebenlendr, Kr\v{c}\'al and Sgall put the integrality gap of (LP2) and (LP3) at
exactly $1.75$ \cite[\S4.1]{EKS14}. That is the ceiling the inference appeals
to, and it is correct. The neighbouring statement about the \emph{configuration}
linear program is a different and strictly stronger relaxation, and is worth
keeping apart: Verschae and Wiese record that once each job is required to have
the same processing time on both its machines --- which is exactly graph
balancing --- the configuration linear program's gap is \emph{at most} $7/4$, attributing
the bound implicitly to \cite{EKS08}, and is $2$ once that requirement is
dropped \cite[p.~374]{VW14}. That upper half has since been improved to $1.749$
by \cite[Thm.~1]{JR19}, so it is not the current ceiling for that relaxation
either.

What does not follow is the step from that ceiling to form (a). The
integrality gap bounds $\OPT / T_{\mathrm{LP}}$. It says nothing about what a \emph{particular} rounding,
confined to the slot structure, can reach relative to $T$. On the instance
of Theorem~\ref{thm:74false} the gap is comfortable, $\OPT/T_{\mathrm{LP}} = 3/2
\le 7/4$, while every valid slot matching is at $44/25$. What that
instance shows, and it is the most general finding here, is that
slot-matching rounding cannot meet this relaxation's ceiling.
Theorem~\ref{thm:116} then gives the shortfall exactly: against the
threshold this rounding's constant is $\tfrac{11}{6}$ and the gap's is
$\tfrac74$, so the two differ by $\tfrac1{12}$, and no tightening of the
gap would have shown it.
\end{remark}

Form (b) would \emph{not} have given a
$7/4$-approximation algorithm: no polynomial-time implementation of the
oracle is known, and \cite{EKS14} already achieves $1.75$ in polynomial
time by other means. Its interest was that an algorithm of little more than a page plus a
matching-selection rule would match that ratio, and that a polynomial
selection rule, if one existed, would then be a genuinely simpler
$1.75$-approximation. Theorem~\ref{thm:74bfalse} rules that out for the
algorithm as stated, where Step~1 may return any feasible solution: no
selection rule reaches $7/4$ there, because the bound it violates holds for
\emph{every} valid matching. It does not rule out a Step-1 rule that returns a
\emph{vertex}, which is polynomial, and which
Remark~\ref{rem:74bvertex} leaves open. We report the structure we can prove, and the two instances that
refute the two forms.

Two counting statements are used twice each below, so they are separated out
first. Neither is deep --- both are pigeonhole --- but naming them is what makes
the counterexamples reproducible rather than merely verifiable: they say what a
construction has to arrange, and the arithmetic of arranging it is then all that
is left.

\begin{lemma}[slot deficiency forces jobs back]\label{lem:deficiency}
Let $S$ be a set of slots and let $J$ be a set of jobs, each of which is adjacent
to no slot outside $S$ except, possibly, slots of a single machine designated for
it. Then every valid matching places at least $|J| - |S|$ jobs of $J$ on their
designated machines.
\end{lemma}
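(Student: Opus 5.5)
The argument is a direct pigeonhole over the matching itself, and I would give it in three short steps.

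\textbf{Setup.} Fix an arbitrary valid matching $M$; the bound will then hold for every valid matching, as the statement requires. By the definition of a valid matching in Section~\ref{sec:prelim}, $M$ assigns each job of $J$ to exactly one slot adjacent to it, and no slot receives more than one job.

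\textbf{The partition.} Split $J$ into the jobs $J_S$ whose matched slot lies in $S$ and the jobs $J_{\mathrm{out}}$ whose matched slot lies outside $S$.
\begin{itemize}
\item For a job $j \in J_{\mathrm{out}}$, its slot is adjacent to $j$ and lies outside $S$. By hypothesis, the only slots outside $S$ adjacent to $j$ belong to $j$'s designated machine. So $j$ is placed on that machine.
\item A job with no designated machine is adjacent to no slot outside $S$ at all, so it necessarily falls in $J_S$.
\end{itemize}

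\textbf{The count.} Since $M$ is injective on slots, the jobs of $J_S$ occupy distinct slots of $S$, so $|J_S| \le |S|$. Hence
\[
|J_{\mathrm{out}}| \;=\; |J| - |J_S| \;\ge\; |J| - |S|,
\]
and every job of $J_{\mathrm{out}}$ lies on its designated machine. This is the claim. If $|J| \le |S|$, the bound is vacuous.

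\textbf{Where care is needed.} There is no real obstacle here; the only subtle point is reading the hypothesis correctly. "Adjacent to no slot outside $S$ except slots of a single machine designated for it" must be read as meaning that every adjacent slot outside $S$ sits on the designated machine. Note also that a job's designated machine may carry slots inside $S$ as well. That does not matter: such jobs are simply counted in $J_S$, and the bound $|J_S| \le |S|$ still holds.

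I would add one remark on why this is the right form for the later constructions. Adjacency is taken under the half-open convention, and the lemma is applied only after that adjacency has been determined. The lemma is purely combinatorial about the bipartite adjacency graph, so the convention enters only through which slots count as adjacent, and not through the counting itself.
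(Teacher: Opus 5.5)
Your proof is correct and is the paper's own argument: a valid matching puts at most one job in each slot, so at most $|S|$ jobs of $J$ are matched into $S$. Each of the remaining at least $|J|-|S|$ jobs occupies a slot outside $S$, which by hypothesis lies on its designated machine, and your side remarks (jobs with no designated machine, designated machines that also carry slots of $S$) are consistent with this.
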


\begin{proof}
A valid matching gives each slot at most one job, so at most $|S|$ jobs of $J$ are
matched into $S$ and at least $|J| - |S|$ are matched outside it. A job of $J$
matched outside $S$ occupies a slot of its designated machine, and in particular
has one.
\end{proof}

\begin{lemma}[deficiency collision]\label{lem:collision}
Let $A$ and $B$ be sets of machines drawn from a set of $m$ machines, with
$|A| \ge d_1$ and $|B| \ge d_2$. If $d_1 + d_2 > m$ then $A \cap B \ne \emptyset$.
\end{lemma}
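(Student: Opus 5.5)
The plan is inclusion--exclusion on a single ambient set of size $m$. Both $A$ and $B$ are subsets of the same $m$-element set of machines, so their union is also contained in that set and $|A \cup B| \le m$. Inclusion--exclusion gives
\[
  |A \cap B| \;=\; |A| + |B| - |A \cup B| \;\ge\; d_1 + d_2 - m \;>\; 0 ,
\]
and a set of positive cardinality is nonempty. An equivalent contrapositive form is a direct pigeonhole. If $A \cap B = \emptyset$, then $A$ and $B$ are disjoint subsets of an $m$-set, so $|A| + |B| \le m$. This forces $d_1 + d_2 \le |A| + |B| \le m$, which contradicts the hypothesis.

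The steps, in order, are these:
\begin{enumerate}
\item Bound the union: $|A \cup B| \le m$.
\item Combine this with the lower bounds $|A| \ge d_1$ and $|B| \ge d_2$ via inclusion--exclusion.
\item Read off $|A \cap B| \ge d_1 + d_2 - m \ge 1$.
\end{enumerate}
All cardinalities here are finite nonnegative integers, so the strict inequality $d_1 + d_2 > m$ with integer $d_i$ gives $d_1 + d_2 - m \ge 1$. Even for real $d_i$, strict positivity of an integer cardinality is all that is needed.

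There is no real obstacle. The only point needing care is that $A$ and $B$ must live in the \emph{same} $m$-machine set, which the statement stipulates. In the intended uses, I would expect $A$ and $B$ to come from two applications of Lemma~\ref{lem:deficiency}. Each would be a set of machines forced to receive a job back, of sizes $d_1$ and $d_2$ respectively. The collision then yields a machine receiving both forced jobs. Verifying that the two deficiency sets are indeed drawn from a common pool of $m$ machines is the application's burden, not the lemma's.
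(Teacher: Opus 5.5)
Your proof is correct, and your contrapositive form is exactly the paper's one-line proof: if $A$ and $B$ were disjoint, then $d_1 + d_2 \le |A| + |B| \le m$. The inclusion--exclusion version you lead with is the same pigeonhole stated quantitatively, and it gives the slightly stronger bound $|A \cap B| \ge d_1 + d_2 - m$.
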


\begin{proof}
Otherwise $A$ and $B$ are disjoint and $d_1 + d_2 \le |A| + |B| \le m$.
\end{proof}

\noindent
The two together are the shape of the lower bound that follows: a deficiency
forces a class of jobs back onto designated machines, a second deficiency forces
another class back, and when the two counts exceed the number of designated
machines some machine receives one of each. What a construction must then arrange
is that a machine holding one of each is also carrying something it cannot shed.

\begin{theorem}[form (a) is false]\label{thm:74false}
There is an instance of graph balancing on $19$ machines with $39$ jobs,
and a feasible relaxation solution at $T = 1$ with $T = 1$ least feasible,
for which \emph{every} valid slot matching has makespan at least
$44/25 = 1.76 > \tfrac74$. Its true optimum is $\OPT = 3/2$, so on this
instance $\ALG/\OPT = 88/75 \approx 1.173$.
\end{theorem}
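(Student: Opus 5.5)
The proof is by explicit construction, arranged around Lemmas~\ref{lem:deficiency} and~\ref{lem:collision} exactly as the remark after them describes. I would build the $19$ machines as a small set of \emph{designated} (hub) machines together with gadget machines supplying slots. There would be two classes of jobs. The first is a class of big jobs of size slightly above $\tfrac12$ (say $\tfrac{13}{25}$), each with fraction below $\beta=\tfrac23$ on both its machines, so that Step~2 leaves it in the slot structure. The second is a class of unit-size (or near-unit) jobs, again split fractionally below $\tfrac23$. Each class is adjacent to a shared pool $S$ of slots on the gadget machines, and otherwise only to slots of its designated hub. I would set the fractions so that each hub carries mass exactly $2$, hence exactly two slots by the half-open convention. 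I would also arrange the slot order so that a hub's first slot is adjacent to the big job of the second class, and its second slot to a job of the first class. Then I would verify the relaxation constraints at $T=1$ directly: each job sums to $1$, each machine load is at most $1$, and each machine carries at most $1$ unit of big fraction.

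The second step is least feasibility of $T=1$. I would exhibit a dual certificate, or more simply a counting argument. Summing the load constraints over a set of machines closed under the jobs' allowed pairs gives total size equal to the number of machines. Any $T<1$ then violates the aggregate load bound, or the big-job constraint on a tight subset. I would design the instance so that the total processing time on a connected component equals its machine count, which makes this immediate.

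The third step carries the lower bound. The gadget pool has $|S|$ slots. The first class has $|J_1|$ jobs adjacent only to $S$ or their hubs, so Lemma~\ref{lem:deficiency} forces at least $d_1=|J_1|-|S_1|$ of them back to hubs, and likewise at least $d_2$ of the second class. Choosing $d_1+d_2>m$, where $m$ is the number of hubs, Lemma~\ref{lem:collision} yields a hub holding one job of each class. Its load is then at least the dedicated or unsheddable load plus $1+\tfrac{13}{25}$-type sizes. I would tune the numbers so that this sum is $\tfrac{44}{25}$, for instance a fixed load of $\tfrac{6}{25}$ plus sizes $1$ and $\tfrac{13}{25}$; graph balancing permits such a fixed load as a job with a single machine. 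The main obstacle is making the two deficiency counts genuinely independent. The pools for the two classes must not share slots in a way that lets a matching trade one deficiency against the other. I would separate the gadgets into disjoint slot pools $S_1,S_2$ and check with Hall's theorem that a valid matching still exists. I would also confirm, via the chain bound, that no other hub is forced above the claimed value, although this is not needed for a lower bound.

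Finally, I would show $\OPT=\tfrac32$. For the upper bound I would exhibit an integral schedule routing each unit job and each big job so that no machine exceeds $\tfrac32$. For the lower bound, $\OPT>1$ follows from a parity or counting argument, such as an odd cycle of big jobs forcing two onto one machine, and integrality of sizes on that cycle then pushes the value to $\tfrac32$. The ratio $\tfrac{44/25}{3/2}=\tfrac{88}{75}$ is then arithmetic.
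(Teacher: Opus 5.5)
Your collision step aims at a configuration that no valid matching can produce, for two independent reasons. First, you give each hub mass exactly $2$, hence exactly two slots, so a hub receives at most two jobs. The single-machine job of size $\tfrac{6}{25}$ must occupy one of them, so the hub ends with that job plus at most one other, never the three-job load $\tfrac{6}{25} + 1 + \tfrac{13}{25}$. Second, and more fundamentally, adding slots would not help. Both colliding classes are big (sizes $1$ and $\tfrac{13}{25}$ exceed $T/2$), and you keep both below $\beta$, so Step~2 places neither. The relaxation's third constraint gives $\sum_{j\ \mathrm{big}} x_{ij} \le 1$ on every machine, and the nonincreasing order puts all big jobs first. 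So their intervals lie inside $[0,1)$, and under the half-open convention of Section~\ref{sec:prelim} every big job is adjacent to slot~$1$ of each of its machines and to no other slot. Your requirement that the $\tfrac{13}{25}$-job reach a hub's second slot is therefore unsatisfiable. More generally, when Step~2 assigns nothing, no valid matching ever places two big jobs on one machine. Valid matchings always exist, so a feasible point with your adjacency pattern cannot have $d_1 + d_2 > m$ at all: Lemma~\ref{lem:collision} would force two jobs into some hub's slot~$1$ in every valid matching.

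The missing idea is that one of the two colliding jobs must be \emph{non-big} and must straddle two slots of its blocker. The paper's straddler $u_k$ has size exactly $\tfrac12$, which is not big because big means strictly above $T/2$. Its share $\tfrac{39}{50}$ on blocker $G_k$ therefore costs nothing against the big-job budget. With the unit job $q_k$ at share $\tfrac{17}{50}$ and the dedicated $v_k$ of size $\tfrac{13}{50}$ at share $1$, the blocker's mass is $\tfrac{53}{25} > 2$. Its three slots are met by $q_k$ (slot~$1$), $u_k$ (slots $1$--$2$) and $v_k$ (slots $2$--$3$), so all three fit, for $1 + \tfrac12 + \tfrac{13}{50} = \tfrac{44}{25}$. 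That is your total, split so that only one summand is big. The first deficiency is $21 - 18 = 3$: nine unit jobs and twelve fillers against nine crammers of mass at most $2$. The second is $9 - 2 = 7$: nine straddlers against the two slots of one shared machine $R$ of mass $\tfrac{99}{50}$. Then $3 + 7 > 9$. Least feasibility needs no tight-component argument: a size-$1$ job forces $T \ge 1$, because the relaxation zeroes $x_{ij}$ whenever $p_j > T$. Your $\OPT$ sketch also does not yield $\tfrac32$. The paper's lower bound is a pigeonhole specific to the instance:
\begin{itemize}
\item below $\tfrac32$, $R$ holds at most two straddlers, and no blocker can hold both $q_k$ and $u_k$;
\item so at most two unit jobs sit on blockers, and some crammer triple holds all three of its unit jobs;
\item each of those crammers then takes at most one filler, but that copy has four fillers and only three places.
\end{itemize}
The upper bound is an explicit schedule. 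As written, your proposal specifies no instance, so none of the stated numbers ($19$ machines, $39$ jobs, $\tfrac{44}{25}$, $\OPT = \tfrac32$) is established.
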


\emph{Strategy.} The instance is three copies of one six-machine gadget,
sharing a single partner machine $R$. Two counting arguments each force a
number of jobs to sit at their own blocker; the two counts sum to more
than the number of blockers, so some blocker takes both, and its own
dedicated job has nowhere else to go. No search is involved, and
Figure~\ref{fig:collision} draws the two counts and the collision.

\begin{proof}
Write $G_k$ for the
nine \emph{blockers}, $c_k$ for the nine \emph{crammers} and $R$ for the
shared partner, $k = 1,\dots,9$. A blocker is a machine the counting forces a
job back onto; a crammer is a machine that job could otherwise have gone to. Each blocker gets three jobs: a
\emph{big} $q_k$ of size $1$, a \emph{straddler} $u_k$, the only job
supported on both a blocker and $R$, and a \emph{dedicated} $v_k$,
supported on its blocker alone. Twelve \emph{fillers} sit on crammers
only, four to each of three copies. In full:
\[
\begin{array}{llll}
q_k & \text{size } 1 & x = \{G_k: \tfrac{17}{50},\ c_k: \tfrac{33}{50}\}
& k = 1,\dots,9\\[2pt]
u_k & \text{size } \tfrac12 & x = \{G_k: \tfrac{39}{50},\ R: \tfrac{11}{50}\}
& k = 1,\dots,9\\[2pt]
v_k & \text{size } \tfrac{13}{50} & x = \{G_k: 1\} & k = 1,\dots,9\\[2pt]
j_{1,t},\ j_{2,t} & \text{size } \tfrac{2537}{10000}
& x = \{c_{3t-2}: \tfrac12,\ c_{3t-1}: \tfrac12\} & t = 1,2,3\\[2pt]
j_{3,t} & \text{size } \tfrac{2537}{10000}
& x = \{c_{3t-2}: \tfrac{33}{100},\ c_{3t}: \tfrac{67}{100}\} & t = 1,2,3\\[2pt]
j_{4,t} & \text{size } \tfrac{2537}{10000}
& x = \{c_{3t-1}: \tfrac{33}{100},\ c_{3t}: \tfrac{67}{100}\} & t = 1,2,3.
\end{array}
\]
This $x$ is feasible at $T = 1$: maximum load $999958/1000000$, maximum
big budget $33/50$, every job supported on at most two machines, every
size at most $1$. It is \emph{least} feasible because $q_k$ has size $1$
and the relaxation zeroes $x_{ij}$ whenever $p_j > T$. The largest big
share is $33/50 < \beta = 2/3$, so Step~2 assigns nothing and
$b_i \equiv 0$ throughout. Both of those depend on the convention of
Section~\ref{sec:prelim}: the straddlers have size exactly $\tfrac12$ and so are
not big, which is what keeps $G_k$'s big-job budget at $\tfrac{17}{50}$ rather
than $\tfrac{56}{50} > 1$, and what stops Step~2 from assigning a share of
$\tfrac{39}{50} \ge \beta$. Measurement~\ref{meas:116} moves
Theorem~\ref{thm:116}'s family off the same boundary for the same reason.

Now fix any valid matching.

\emph{(i) Every big job is supported only at its own blocker and one crammer.} At $G_k$ the slot
layout is $q_k$ on $[0,\tfrac{17}{50})$, $u_k$ on
$[\tfrac{17}{50},\tfrac{28}{25})$, $v_k$ on
$[\tfrac{28}{25},\tfrac{53}{25})$. That is three slots. What matters here is a fact about
supports, not about loads: $q_j$ has zero share on $G_k$ for $j \ne k$, so
$\operatorname{supp}(q_j) = \{G_j, c_j\}$ and no big job is adjacent to any slot
of a blocker other than its own. The supports settle it directly; a ``cap''
$\tfrac74 - \mathrm{tail}$ on what a slot-$1$ occupant may carry does not,
since $\mathrm{tail}$ bounds what the later slots \emph{may} hold, not what
they do, so exceeding the cap implies nothing.

\emph{(ii) At least three of the $q_k$ sit at their own blocker.} Apply
Lemma~\ref{lem:deficiency} with $S$ the crammers' slots and $J$ the nine $q_k$
together with the twelve fillers. Each crammer carries two slots, so
$|S| = 18$; the fillers are supported on crammers alone, so they have no
designated machine; and by (i) a $q_k$ meets no slot outside $S$ except at its
own blocker $G_k$, which is therefore its designated machine. Since
$|J| - |S| = 21 - 18 = 3$, at least three of the $q_k$ sit at their own blocker.

\emph{(iii) At least seven straddlers sit at their own blocker.} The
fractional mass at $R$ is $99/50$, so $R$ carries $K_R = 2$ slots. Apply
Lemma~\ref{lem:deficiency} again, with $S$ those two slots and $J$ the nine
$u_k$, each designated to its own blocker, the only other machine it is
supported on: $9 - 2 = 7$ of them sit at their own blocker.

\emph{(iv) Conclusion.} Take $A$ for the blockers holding their own $q_k$ and
$B$ for those holding their own $u_k$, so $|A| \ge 3$ and $|B| \ge 7$ among
$m = 9$ blockers. Since $3 + 7 > 9$, Lemma~\ref{lem:collision} gives a blocker
$G_k$ holding both its own $q_k$ and its own $u_k$. Its dedicated $v_k$ is
supported on $G_k$ alone and has nowhere else to go, so the three jobs
occupy that machine's three slots, and the load there is
\[
1 + \tfrac12 + \tfrac{13}{50} \ =\ \tfrac{44}{25} \ =\ 1.76 \ >\ \tfrac74 .
\]
The bound $\OPT = \tfrac32$ is Lemma~\ref{lem:ce74opt}.
\end{proof}

\begin{figure}[htbp]\centering
\includegraphics[width=0.98\textwidth]{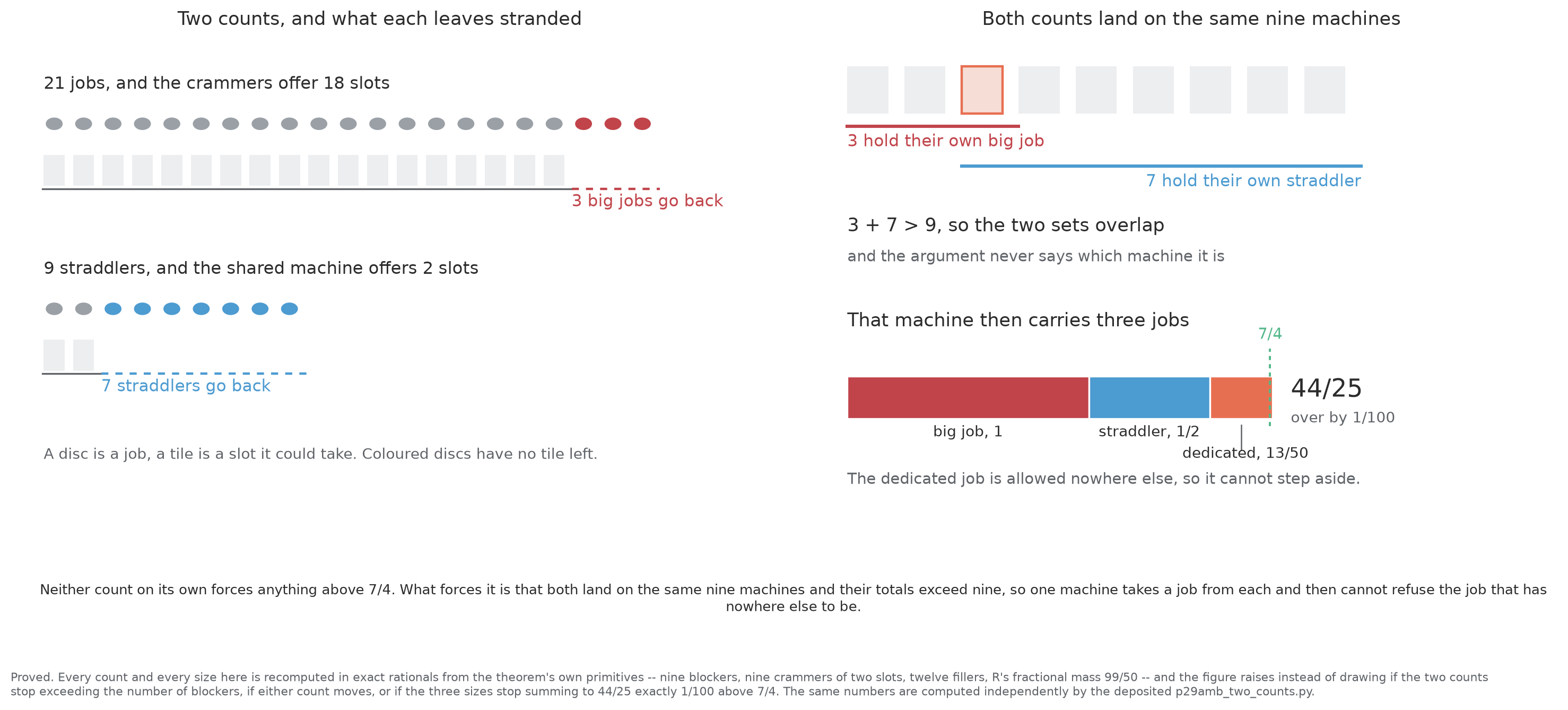}
\caption{Two deficiencies, and the collision between them. \emph{Left:}
twenty-one jobs compete for the crammers' eighteen slots, so
Lemma~\ref{lem:deficiency} returns three big jobs to their own blockers; nine
straddlers compete for the two slots of the shared machine, so seven straddlers
return to theirs. \emph{Right:} both sets lie among nine blockers and
$3 + 7 > 9$, so by Lemma~\ref{lem:collision} some blocker holds one of each ---
which blocker is not determined, and the argument does not need it --- and its
dedicated job, allowed nowhere else, joins them for a load of $\tfrac{44}{25}$,
over $\tfrac74$ by $\tfrac1{100}$. Neither count on its own forces anything
above $\tfrac74$.}
\label{fig:collision}
\end{figure}

\noindent
\emph{Sensitivity of the construction.} The counterexample lies on a slot-count
boundary, and that boundary is what the counting argument of step~(ii) runs on:
six crammers carry fractional mass $\tfrac{199}{100}$ and the three $c_{3t}$
carry exactly $2$, so all nine round up to two slots rather than three. Moving
$10^{-4}$ of $j_{3,1}$'s share from $c_1$ onto $c_3$ keeps the point feasible at
$T = 1$ and lifts one crammer's mass above $2$; that crammer gains a slot, the
deficiency --- the excess of jobs over the slots they may occupy --- falls from
$3$ to $2$, the count reads $2 + 7 = 9$ rather than $10 > 9$, and the perturbed
instance admits a valid matching within $\tfrac74$. Theorem~\ref{thm:74false}
therefore establishes the existence of a violating feasible point, and makes no
robustness claim about a neighbourhood of that point, nor any claim about how
much of the feasible region violates $\tfrac74$. The discontinuity belongs to
slot counting rather than to this construction: a slot count is a ceiling, so
any argument that counts slots is discontinuous at integer mass, and the two
boundary readings the instance sits on are \cite{WS16}'s and \cite{EKS14}'s
own (Section~\ref{sec:prelim}). Two tie-breaks here are genuinely free: the order
among the twelve equal-sized fillers, and the order among the nine equal-sized
straddlers sharing $R$. Reversing both changes $46$ of the $97$ slot-graph edges
and neither verdict; the fillers alone account for $30$ of the $46$
(\texttt{p29lever\_slotgraph\_tiebreak\_2026-09-08.py}).

\noindent
The point exhibited here is not a vertex of the relaxation. That is why the
vertex-restricted form is stated separately in Remark~\ref{rem:74bvertex}, and
why it is settled by Theorem~\ref{thm:pinvertex} rather than by any
perturbation of this instance.

\begin{lemma}[the optimum of Theorem~\ref{thm:74false}'s instance]\label{lem:ce74opt}
The instance of Theorem~\ref{thm:74false} has $\OPT = \tfrac32$.
\end{lemma}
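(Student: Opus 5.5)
The plan is to prove $\OPT \ge \tfrac32$ and $\OPT \le \tfrac32$ separately. The upper bound is an explicit integral schedule. The lower bound is a short case analysis on one three-crammer copy, which pushes straddlers onto the shared partner $R$. No relaxation or slot structure is involved: this is a statement about integral schedules on the instance of Theorem~\ref{thm:74false}.

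\emph{Lower bound.} Fix any integral schedule of makespan below $\tfrac32$; I will derive a contradiction.

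\begin{enumerate}
\item Fix a copy $t$, meaning the crammers $c_{3t-2}, c_{3t-1}, c_{3t}$ with their four fillers $j_{1,t},\dots,j_{4,t}$. Each filler is supported only on crammers of its own copy.
\item Suppose all three big jobs $q_{3t-2}, q_{3t-1}, q_{3t}$ sit on their crammers. Then each of the three crammers already carries load $1$. By pigeonhole, the four fillers on three machines put two fillers on some crammer. That crammer's load is $1 + 2\cdot\tfrac{2537}{10000} = 1.5074 > \tfrac32$.
\item Hence in every copy some $q_k$ sits on its blocker $G_k$. That blocker then carries $1 + \tfrac{13}{50}$, counting its dedicated $v_k$.
\item On that same blocker, $u_k$ cannot also be present, since the load would be $\tfrac{44}{25}$. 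So $u_k$ goes to $R$.
\item This happens once in each of the three copies, for three distinct indices $k$. So $R$ receives three straddlers of size $\tfrac12$, a load of $\tfrac32$. This contradicts the assumption, so $\OPT \ge \tfrac32$.
\end{enumerate}

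\emph{Upper bound.} I exhibit a schedule of makespan exactly $\tfrac32$. For each $t$:
\begin{itemize}
\item put $q_{3t}$ on $G_{3t}$ and $u_{3t}$ on $R$;
\item put $q_{3t-2}, q_{3t-1}$ on their crammers, and every other straddler on its own blocker;
\item put $j_{3,t}, j_{4,t}$ on $c_{3t}$, $j_{1,t}$ on $c_{3t-2}$, and $j_{2,t}$ on $c_{3t-1}$.
\end{itemize}
The resulting loads are:
\begin{itemize}
\item $R$: $3\cdot\tfrac12 = \tfrac32$;
\item $G_{3t}$: $1 + \tfrac{13}{50} = 1.26$;
\item every other blocker: $\tfrac12 + \tfrac{13}{50} = 0.76$;
\item $c_{3t-2}$ and $c_{3t-1}$: $1.2537$;
\item $c_{3t}$: $0.5074$.
\end{itemize}
Each job is placed on a machine in its support, so this is a valid schedule, and its maximum load is $\tfrac32$.

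The only step needing care is step 2 of the lower bound. It uses that every filler is confined to the crammers of its own copy, so the pigeonhole is local to that copy. It also uses the exact filler size $\tfrac{2537}{10000} > \tfrac14$, so that two fillers together exceed $\tfrac12$. This step is the real content of the proof; everything else is direct arithmetic.
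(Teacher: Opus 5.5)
Your proof is correct and essentially matches the paper's. The lower bound uses the same three facts: a copy's four fillers cannot all fit beside three unit jobs, a blocker cannot hold both $q_k$ and $u_k$ next to $v_k$, and $R$ takes at most two straddlers. You chain them in the opposite order, and that spares you the paper's extra pigeonhole of seven unit jobs over three triples. Your schedule differs from the paper's table, which sends $q_{3t-2}$ rather than $q_{3t}$ to its blocker, but it is equally valid and reaches exactly $\tfrac32$.
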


\begin{proof}
\emph{Upper bound.} The schedule below is feasible at $\tfrac32$. Machines are
numbered as in the proof of Theorem~\ref{thm:74false} and jobs in the order the
instance lists them; each entry gives a machine, the jobs it takes, and its
load.

\smallskip
\centerline{\begin{tabular}{ll}
$0$: {0, 2}, load $63/50$ & $1$: {4, 5}, load $19/25$ \\
$2$: {7, 8}, load $19/25$ & $3$: {9, 11}, load $63/50$ \\
$4$: {13, 14}, load $19/25$ & $5$: {16, 17}, load $19/25$ \\
$6$: {18, 20}, load $63/50$ & $7$: {22, 23}, load $19/25$ \\
$8$: {25, 26}, load $19/25$ & $9$: {27, 28, 29}, load $7611/10000$ \\
$10$: {3, 30}, load $12537/10000$ & $11$: {6}, load $1$ \\
$12$: {31, 32, 33}, load $7611/10000$ & $13$: {12, 34}, load $12537/10000$ \\
$14$: {15}, load $1$ & $15$: {35, 36, 37}, load $7611/10000$ \\
$16$: {21, 38}, load $12537/10000$ & $17$: {24}, load $1$ \\
$18$: {1, 10, 19}, load $3/2$ \\
\end{tabular}}
\smallskip

\noindent
Every load is at most $\tfrac32$ and machine $18$ meets it, so
$\OPT \le \tfrac32$.

\emph{Lower bound.} Suppose some schedule had makespan strictly below
$\tfrac32$. The shared machine $R$ then holds at most two straddlers, since
three of them have size $\tfrac32$ exactly; and a straddler is allowed only on
$R$ and on its own blocker, so at least seven sit on their blockers. No blocker
holds both its own $q_k$ and its own $u_k$: its dedicated $v_k$ is allowed
nowhere else, and the three together make $\tfrac{44}{25}$. So each $q_k$ that
sits on its blocker forces $u_k$ onto $R$, and $R$ takes at most two of them:
at most two of the nine $q_k$ sit on blockers, and at least seven sit on
crammers. The crammers come in three triples of three, so some triple holds all
three of its own $q_k$. A crammer holding a unit job cannot also take two
fillers, since $1 + 2 \cdot \tfrac{2537}{10000} > \tfrac32$, so each of those
three crammers takes at most one filler --- and that copy's four fillers are
supported on those three crammers and nowhere else. Four jobs into three places
is the contradiction. Hence $\OPT \ge \tfrac32$, so $\OPT = \tfrac32$ and on
this instance $\ALG/\OPT = 88/75$.
\end{proof}

\paragraph{The instance is not asymptotically extremal.}
Three of its parameters are slack. The blocker's fractional load is
$99/100$ rather than $1$; the shared machine carries share mass
$99/50$ where $K_R = 2$ permits $2$; and the fillers use $33/100$ and
$67/100$, which puts $67/100 + 67/100 = 67/50$ on the third crammer of
each copy where exact thirds would put $4/3$. Tightening all three raises
the forced load, with supremum $16/9$. The family below supersedes that bound:
its fifth member already exceeds $16/9$, and it does
more than raise the number, since it determines the constant outright. The
two are not nested, and one thing carries over from the first. Step~2 assigns nothing in it, while the family below needs Step~2 to assign a job
at every blocker, so among the instances here on which Step~2 is inactive
$16/9$ is still the largest forced load we know.

\begin{figure}[htbp]\centering
\includegraphics[width=0.95\textwidth]{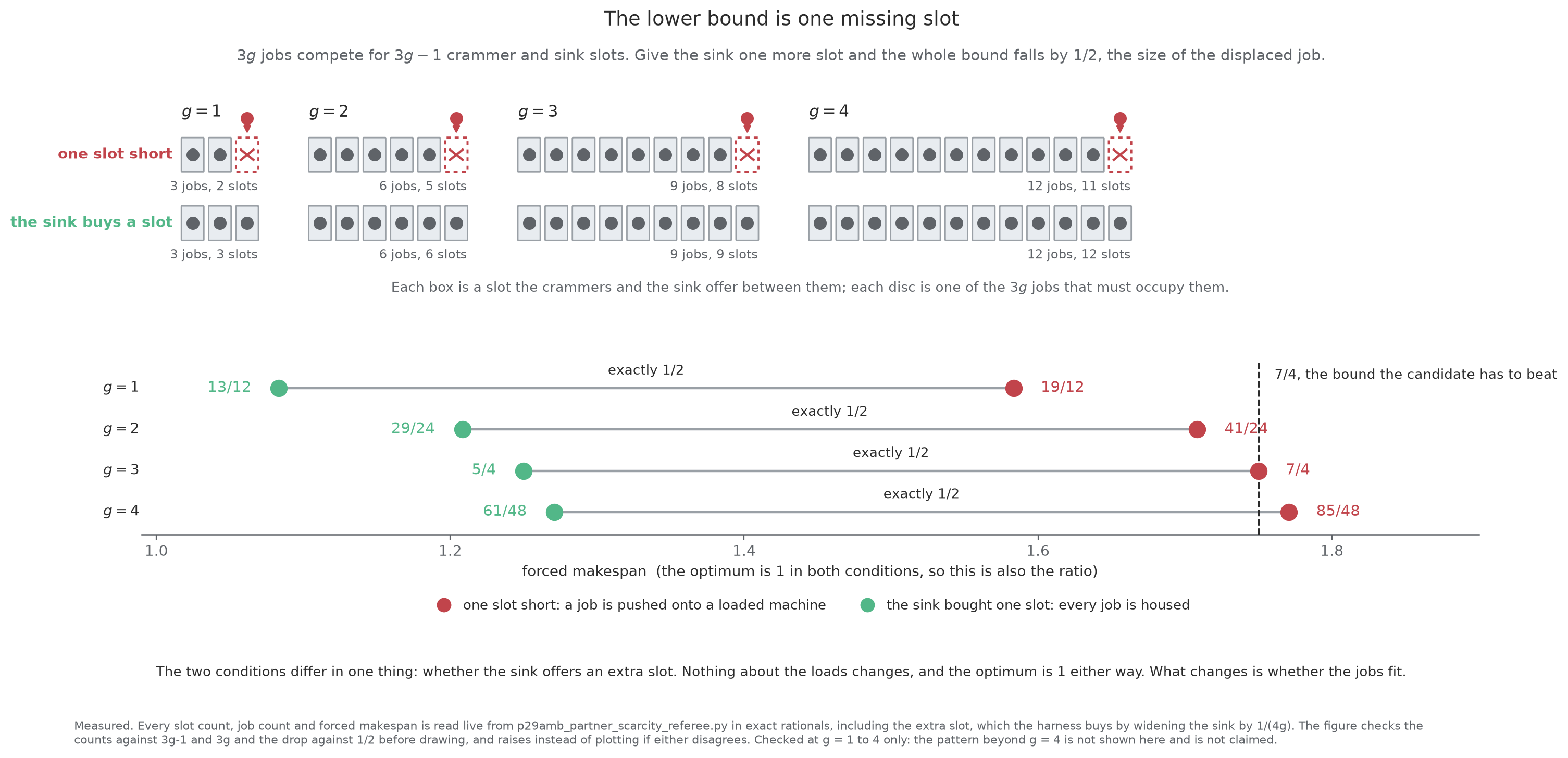}
\caption{The lower bound is one missing slot. Across $g=1$ to $4$,
$3g$ jobs compete for $3g-1$ crammer and sink slots --- deficiency exactly
one --- and one job is left with nowhere to go. Give the sink a single extra
slot and the slots equal the jobs: the forced makespan drops from
$19/12, 41/24, 7/4, 85/48$ to $13/12, 29/24, 5/4, 61/48$, a fall of exactly
$1/2$ at every $g$, the size of the displaced job.
The values are exact minima over every valid matching, drawn at $g \le 4$;
Theorem~\ref{thm:116} carries them to every $g$.}
\label{fig:pigeonhole}
\end{figure}

\begin{theorem}[the threshold-relative constant is exactly $\tfrac{11}{6}$]\label{thm:116}
Let $c^\star$ be the least constant such that, for every instance, every
$T$ at which the relaxation is feasible and every feasible $x$ at $T$,
some valid slot matching has makespan at most $c^\star T$. Then
\[
c^\star \ = \ \tfrac{11}{6} .
\]
The constant is exactly the guarantee \cite{WS16} already prove, and no
instance attains it: on every instance every valid matching finishes
\emph{strictly} below $\tfrac{11}{6}T$.

\emph{One half is imported and one is new.} The upper half --- that every valid matching finishes strictly
below $\tfrac{11}{6}T$ on every instance and every feasible $x$ --- is the
companion note's ceiling \cite{Paper1}, which is a prerequisite for this
statement and is cited rather than reproved. The lower half, the family below,
is new here. What the two together give, and neither gives alone, is that the
value is a supremum and not merely a bound.

\emph{The two halves have opposite quantifier structures.} The upper half says
\[
  \forall I \ \ \forall T \ \ \forall x \in P(I,T) \ \ \exists M \in \mathcal{M}(I,x)
  : \ \ALG(I,x,M) < \tfrac{11}{6}\,T ,
\]
in words: on every instance, at every feasible target, and for every solution
Step~1 may return, \emph{some} valid matching is within the constant. The lower
half says
\[
  \forall g \ \ \exists (I_g, T_g, x_g) \ \ \forall M \in \mathcal{M}(I_g, x_g)
  : \ \ALG(I_g, x_g, M) \ \ge \ \bigl(\tfrac{11}{6} - \tfrac{1}{4g}\bigr) T_g ,
\]
in words: for every $g$ there is an instance and a feasible solution on which
\emph{every} valid matching pays almost the constant. The second is what makes
the first tight, and it is also why no selection rule helps --- it quantifies
over all matchings, so it binds whichever one a rule picks.

The lower half is a family. For every integer $g \ge 1$ there is an
instance of graph balancing on $3g + \lceil 2g/3 \rceil + 1$ machines with
$7g$ jobs, and a feasible relaxation solution at $T = 1$ with $T = 1$
least feasible, for which \emph{every} valid slot matching has makespan at
least $\tfrac{11}{6} - \tfrac1{4g}$. At $g = 1$ that instance has five
machines, seven jobs and $\OPT = \tfrac{13}{12}$, so the best-matching
variant of Section~\ref{sec:variant}, run at the least feasible threshold on
that solution, finishes at no better than
$\tfrac{19}{13} \OPT \approx 1.4615\,\OPT$ there.

The family sits on two boundary readings of Section~\ref{sec:prelim}, both
\cite{WS16}'s and \cite{EKS14}'s own: \emph{big} means size strictly above
$T/2$, so the half-sized jobs are not big; and Step~2's test is non-strict, so a
unit job at share exactly $\beta = \tfrac23$ is assigned. Under the opposite
reading of either, the statement fails.
\end{theorem}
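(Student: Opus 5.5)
The upper half is imported: the $11/6$ ceiling of \cite{Paper1} says that for every feasible $x$ at every feasible $T$, \emph{every} valid matching finishes strictly below $\tfrac{11}{6}T$. In particular some matching does, so $c^\star \le \tfrac{11}{6}$, and no instance attains the value. Everything new is the lower family. I would build it so that the binding machine falls under case (S) of the per-machine trichotomy, whose bound $\tfrac32+\tfrac{b_i}{3}$ reaches $\tfrac{11}{6}$ only when $b_i=1$. So every blocker should receive a unit job in Step~2.

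\emph{Construction.} Take $3g$ blockers, a bank of crammer machines and one sink, for $3g+\lceil 2g/3\rceil+1$ machines in all. Put a unit job on each blocker $G_k$ at share exactly $\tfrac23$, with the other $\tfrac13$ on a crammer. Under the non-strict Step-2 test this job is assigned to $G_k$, so $b_k=1$. By the convention that $L'_i \le 1-\beta b_i$, this leaves slot load at most $\tfrac13$ on $G_k$.

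Give $G_k$ a half-sized straddler $u_k$ with share $\tfrac1{2g}$ on $G_k$ and the rest on crammers or the sink. It is not big, because $\tfrac12 = T/2$ is not strictly above $T/2$. Fill the remaining load of $G_k$ with a dedicated job $v_k$, supported on $G_k$ alone, of size $\tfrac13-\tfrac1{4g}$. Then $G_k$'s load constraint is tight:
\[
\tfrac23 + \tfrac12\cdot\tfrac1{2g} + \bigl(\tfrac13-\tfrac1{4g}\bigr) = 1 .
\]
The remaining jobs are fillers on the crammers and sink, bringing the total to $7g$. I would size their fractional masses so the crammers and sink offer exactly $3g-1$ slots to the $3g$ straddlers. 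Each machine's mass should sit at an integer, as in Theorem~\ref{thm:74false}, so that $\lceil M_i\rceil$ does not add a slot.

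\emph{Verification, in order.}
\begin{enumerate}
\item Check feasibility at $T=1$. This covers every machine load, the big-job budget (at most one unit share per machine), and supports of at most two machines.
\item $T=1$ is least feasible, because a unit job must have $x_{ij}=0$ whenever $p_j>T$.
\item Step~2 assigns exactly the $3g$ unit jobs, and no other share reaches $\beta$.
\item Write down the slot layout on each machine, and confirm that on $G_k$ both $u_k$ and $v_k$ fall into $G_k$'s slots under the half-open convention.
\item Apply Lemma~\ref{lem:deficiency}. Let $S$ be the crammer and sink slots and $J$ the $3g$ straddlers, each designated to its own blocker. At least one $u_k$ is then matched at $G_k$.
\item Since $v_k$ has nowhere else to go, the load of $G_k$ is at least
\[
1+\tfrac12+\tfrac13-\tfrac1{4g} = \tfrac{11}{6}-\tfrac1{4g}.
\]
\end{enumerate}
Letting $g\to\infty$ gives $c^\star\ge\tfrac{11}{6}$.

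For $g=1$ I would separately exhibit an integral schedule of makespan $\tfrac{13}{12}$ and prove the matching lower bound by a short case argument. That gives the ratio $\tfrac{19/12}{13/12} = \tfrac{19}{13}$.

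\emph{Main obstacle.} The hard step is the slot accounting on the crammers and sink. Their fractional masses must come out at exactly integer values, so that the deficiency is precisely one. This has to hold while they also absorb the unit jobs' $\tfrac13$ shares and the straddlers' remaining $1-\tfrac1{2g}$, without exceeding load $1$ or the big budget on any crammer. The unit shares are big, and a crammer can take only three of them. I expect that constraint is what forces the $\lceil 2g/3\rceil$ count.

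I would first try grouping the blockers in triples, each triple sharing one crammer that receives their three $\tfrac13$ unit shares, for big budget exactly $1$. The straddler mass would then be routed to the sink so that it carries $3g-1$ minus the crammers' free slots. The last check is that no filler creates an unintended adjacency, through tie-breaks among equal sizes, that would give a straddler an extra slot.
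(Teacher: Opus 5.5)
\textbf{Genuine gap: the construction is infeasible at $T=1$.} Your upper half is the paper's: you import the strict ceiling of \cite{Paper1}. Your blocker gadget is also the paper's: a unit job at share exactly $\tfrac23$ assigned by the non-strict test, a half-sized job at share $\tfrac1{2g}$ reaching only slot~$1$, and a dedicated job of size $\tfrac13-\tfrac1{4g}$. The gap is in the rest of the instance, and it is not bookkeeping.

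With $3g$ blockers, the stated machine count leaves only $\lceil 2g/3\rceil+1$ other machines. Every blocker's load row is already tight, so all the unit jobs' $\tfrac13$ shares and all the straddlers' off-blocker shares must land on those machines. That is load $g$ from the unit jobs plus $\tfrac12\cdot 3g\bigl(1-\tfrac1{2g}\bigr)=\tfrac{3g}{2}-\tfrac34$ from the straddlers. This exceeds their total capacity $\lceil 2g/3\rceil+1$ for every $g\ge 2$; at $g=2$ it is $\tfrac{17}{4}$ against $3$. So the relaxation is infeasible at $T=1$.

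Your triple grouping shows the failure directly. A crammer holding three unit shares is already at load exactly $1$, so no straddler can use it. Routing every straddler to the sink instead loads the sink to $\tfrac{3g}{2}-\tfrac34>1$. The job count fails too: three jobs on each of $3g$ blockers is already $9g>7g$.

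\textbf{The missing idea: the $3g$ competitors are not $3g$ straddlers.} The paper uses $2g$ blockers and separates two roles you merged.
\begin{itemize}
\item $\lceil 2g/3\rceil$ \emph{partners} take only the unit jobs' $\tfrac13$ shares, three apiece. That is where $\lceil 2g/3\rceil$ comes from. Step~2 then empties them, so they hold no slot and never enter the count.
\item $g$ \emph{crammers} take the straddlers, two apiece at share $1-\varepsilon$. Each crammer's mass $2-2\varepsilon$ is topped up to exactly $2$ by a filler $f_j$ of size $\varepsilon$ at share $2\varepsilon$.
\item Each filler's remaining $1-2\varepsilon$ goes to the \emph{sink}. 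The sink's mass is then exactly $g(1-2\varepsilon)=g-1$, but its load is only $g\varepsilon(1-2\varepsilon)\le\tfrac12$. Slots are counted by mass, and tiny fillers buy mass almost for free in load.
\end{itemize}
The pigeonhole is then over $2g$ straddlers plus $g$ fillers, which is $3g$ jobs, against $2g+(g-1)=3g-1$ crammer and sink slots. This gives $2g+\lceil 2g/3\rceil+g+1$ machines and $3\cdot 2g+g=7g$ jobs, as stated.

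Treating the fillers as competitors, not merely as mass-padding, is what lets the deficiency be exactly $1$ while every load stays at most $1$.
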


\emph{Strategy.} Step~2's test is non-strict, so a big job whose share is
exactly $\beta$ is assigned to its machine and leaves one unit of base load
there. That machine keeps two slots, and a half-sized job on it reaches
only the first. The lower bound is then a count rather than a load argument:
the machines that could otherwise absorb the half-sized jobs carry one
slot fewer than there are jobs to place, so in every valid matching one of
those jobs returns to its blocker.

\begin{proof}
Fix $g \ge 1$ and set $\varepsilon = \tfrac1{2g}$ and
$\rho = \tfrac13 - \tfrac{\varepsilon}{2} = \tfrac13 - \tfrac1{4g}$. The
machines have the three roles they had in Theorem~\ref{thm:74false}'s gadget:
a blocker is the machine that ends up overloaded, the crammers are where the jobs that
would relieve it must otherwise go, and the partners and the sink absorb
share so that the relaxation stays feasible.
Take $2g$ \emph{blockers} $B_0, \dots, B_{2g-1}$, $\lceil 2g/3 \rceil$
\emph{partners} $P_p$, $g$ \emph{crammers} $C_1, \dots, C_g$, one
\emph{sink} $C_0$, and the jobs
\[
\begin{array}{llll}
Q_k & \text{size } 1 & x = \{B_k : \tfrac23,\ P_{\lfloor k/3 \rfloor} : \tfrac13\}
& k = 0, \dots, 2g-1\\[2pt]
A_k & \text{size } \tfrac12
& x = \{B_k : \varepsilon,\ C_{\lfloor k/2 \rfloor + 1} : 1 - \varepsilon\}
& k = 0, \dots, 2g-1\\[2pt]
D_k & \text{size } \rho & x = \{B_k : 1\} & k = 0, \dots, 2g-1\\[2pt]
f_j & \text{size } \varepsilon
& x = \{C_j : 2\varepsilon,\ C_0 : 1 - 2\varepsilon\} & j = 1, \dots, g .
\end{array}
\]
Three jobs sit on each blocker and one on each crammer, which is $7g$ jobs
on the stated number of machines. We call the $Q_k$ the \emph{unit jobs},
after their size. The \emph{forced load} of an instance is the makespan every
valid matching must reach on it; the point of the construction is to make that
number large, and Figure~\ref{fig:pigeonhole} shows the counting that does it.

\emph{Feasibility at $T = 1$, and least feasibility.} Blocker $B_k$ carries
$\tfrac23 + \tfrac12 \varepsilon + \rho = 1$. Each partner takes the
$\tfrac13$ shares of at most three of the $Q_k$, so its load is at most
$1$ and its big-job constraint reads $3 \cdot \tfrac13 \le 1$; that is
what fixes the number of partners at $\lceil 2g/3 \rceil$. Crammer $C_j$
carries $2 \cdot \tfrac12 (1 - \varepsilon) + 2\varepsilon^2 =
1 - \varepsilon + 2\varepsilon^2 \le 1$, and the sink carries
$g \varepsilon (1 - 2\varepsilon) = \tfrac12 (1 - \tfrac1g) \le \tfrac12$.
Every job is supported on at most two machines and every share sums to
$1$; the $D_k$ are supported on one, which the load argument below uses. Only
the $Q_k$ are big --- a job is big when its size \emph{exceeds} $T/2$, and
$A_k$ has size exactly $\tfrac12$ --- so the crammers' big-job constraints
are vacuous. Since $p_{Q_k} = 1$ and the relaxation zeroes $x_{ij}$
whenever $p_j > T$, no smaller threshold is feasible.

\emph{Step 2, and the slot structure.} $Q_k$'s share at $B_k$ is exactly
$\beta = \tfrac23$ and the test is non-strict, so $Q_k$ is assigned there
and $b_{B_k} = 1$; no other machine receives a Step-2 job. What stays
fractional at $B_k$ is $A_k$ at share $\varepsilon$ and $D_k$ at share
$1$, so $M_{B_k} = 1 + \varepsilon$ and $K_{B_k} = 2$. Since
$\tfrac12 > \rho$, the nonincreasing order puts $A_k$ on
$[0, \varepsilon)$ and $D_k$ on $[\varepsilon, 1 + \varepsilon)$: $A_k$ is
adjacent to slot $1$ of its blocker alone, and $D_k$ to both slots.
Crammer $C_j$ carries its two $A$ jobs at $1 - \varepsilon$ each and $f_j$
at $2\varepsilon$, so $M_{C_j} = 2$ exactly and $K_{C_j} = 2$; the sink
carries $g(1 - 2\varepsilon) = g - 1$ exactly, so $K_{C_0} = g - 1$. The
partners carry no fractional job at all.

\emph{The count.} This is Lemma~\ref{lem:deficiency} again, with $S$ the
crammer and sink slots and $J$ the $2g$ jobs $A_k$ together with the $g$ fillers
$f_j$. The crammers and the sink carry $2g + (g-1) = 3g - 1$ slots between them,
so $|S| = 3g-1$; a filler has no share on any blocker, so those machines are the
only ones it is adjacent to a slot of and it has no designated machine; and
$A_k$'s only slot outside $S$ is at its own blocker $B_k$, its designated
machine. Since $|J| - |S| = 3g - (3g-1) = 1$, some $A_k$ is matched at its own
blocker, and slot $1$ is the only slot of $B_k$ it is adjacent to.

\emph{The load.} $D_k$ is supported on $B_k$ alone, so it is matched
there; slot $1$ is taken, so it takes slot $2$. Machine $B_k$ then carries
\[
b_{B_k} + p_{A_k} + p_{D_k} \ = \ 1 + \tfrac12 + \rho
\ = \ \tfrac{11}{6} - \tfrac1{4g} .
\]
That display is the three terms of the Wang--Sitters analysis --- a unit
job, a half and a third --- realized on one machine, short of the third by
$\tfrac1{4g}$, and no valid matching finishes below it.

\emph{Where the count would break.} Two share masses are exact integers,
and both must be: if a crammer's exceeded $2$, or the sink's exceeded
$g-1$, that machine would gain a slot, the set would carry $3g$ of them,
and the pigeonhole would be gone. Conservation is what fixes the
\emph{totals}, and for the sink --- a single machine --- that is all that is
needed. The $A_k$ put $2g\varepsilon = 1$ of their share on blockers and the
remaining $2g - 1$ on crammers; the fillers put $2g\varepsilon = 1$ more
there, giving $2g$ across $g$ crammers, and their remaining $g - 1$ on the
sink. Across the crammers, conservation alone would not be enough: a
$\tfrac52$--$\tfrac32$ split between two of them conserves the same total
and yields $3 + 2 = 5$ slots where the count needs $4$. What makes each crammer's mass exactly $2$ is a
\emph{pair} of choices: each crammer carries exactly two $A$ jobs, and exactly
one filler leaves exactly $2\varepsilon$ there. Either can be broken while the
total is conserved. Keeping two $A$ jobs on every crammer but giving one filler
its whole share to its crammer and another none is still feasible at $T = 1$, and
it splits the crammers' slot counts $3$ and $2$ against the needed $4$: at
$g = 2$ the forced load falls from $\tfrac{41}{24}$ to $\tfrac54$. So the theorem needs both
choices, and naming only the $A$ jobs names only one of them.

\emph{The constant.} The forced loads increase strictly in $g$ with
supremum $\tfrac{11}{6}$, attained by no member, so no constant below
$\tfrac{11}{6}$ is admissible and $c^\star \ge \tfrac{11}{6}$.
The $11/6$ ceiling of \cite{Paper1} supplies the other direction: on every
instance every valid matching --- in particular some valid matching ---
finishes strictly below $\tfrac{11}{6}T$, so $c = \tfrac{11}{6}$ is
admissible. Hence $c^\star = \tfrac{11}{6}$, and the value itself is
reached by no instance.

\emph{The optimum at $g = 1$.} There the instance is $Q_0, Q_1$ of size
$1$, $A_0, A_1$ of size $\tfrac12$, $D_0, D_1$ of size $\tfrac1{12}$ and
one filler of size $\tfrac12$, on five machines. Sending
$Q_0 \mapsto B_0$, $Q_1 \mapsto P_0$, $A_0 \mapsto C_1$,
$A_1 \mapsto B_1$, $f_1 \mapsto C_1$ and each $D_k$ to its own blocker
gives loads $\tfrac{13}{12}, \tfrac7{12}, 1, 1, 0$, so
$\OPT \le \tfrac{13}{12}$. Conversely $Q_0$ and $Q_1$ cannot both go to
$P_0$, which would load it to $2$, so some $Q_k$ shares its blocker with
$D_k$, which is supported nowhere else, for load $1 + \tfrac1{12}$. Hence
$\OPT = \tfrac{13}{12}$ and the forced $\tfrac{19}{12}$ is
$\tfrac{19}{13} \OPT$.
\end{proof}

\noindent
\emph{Three appearances of $\tfrac{11}{6}$, from one family.}
\cite{Paper1}'s $\tfrac{11}{6}$ is a supremum of $\ALG/\OPT$ for
the plain algorithm; this one is a supremum of $\ALG/T$; and
Corollary~\ref{cor:oracle}'s is a supremum of $\ALG/\OPT$ for the oracle.
Their agreement is not a coincidence of three separate
arguments: Theorem~\ref{thm:74bfalse} collapses them, since on that family every valid
matching finishes at $\tfrac{11}{6} - \tfrac1{4g}$ with $\OPT = 1$, so the
\emph{worst} matching gives the plain algorithm's supremum and the
\emph{best} gives the oracle's, while the same family read against
$T = 1$ gives $c^\star$. One family and
the $11/6$ ceiling of \cite{Paper1} determine all three. What the other two
constructions keep is economy rather than independence:
\cite{Paper1}'s witness needs three machines where this one needs
$5g+1$, and Theorem~\ref{thm:116} is the statement in which the threshold,
not the optimum, is the quantity the makespan is compared against.
Figure~\ref{fig:numbers} keeps the threshold-relative
constant apart from the two optimum-relative ones, and draws those two on a
single segment because they coincide. Theorem~\ref{thm:74false} is the same
pair of scales disagreeing: $44/25$ against the threshold and $88/75$
against the optimum, on one instance.

\begin{figure}[htbp]\centering
\includegraphics[width=0.92\textwidth]{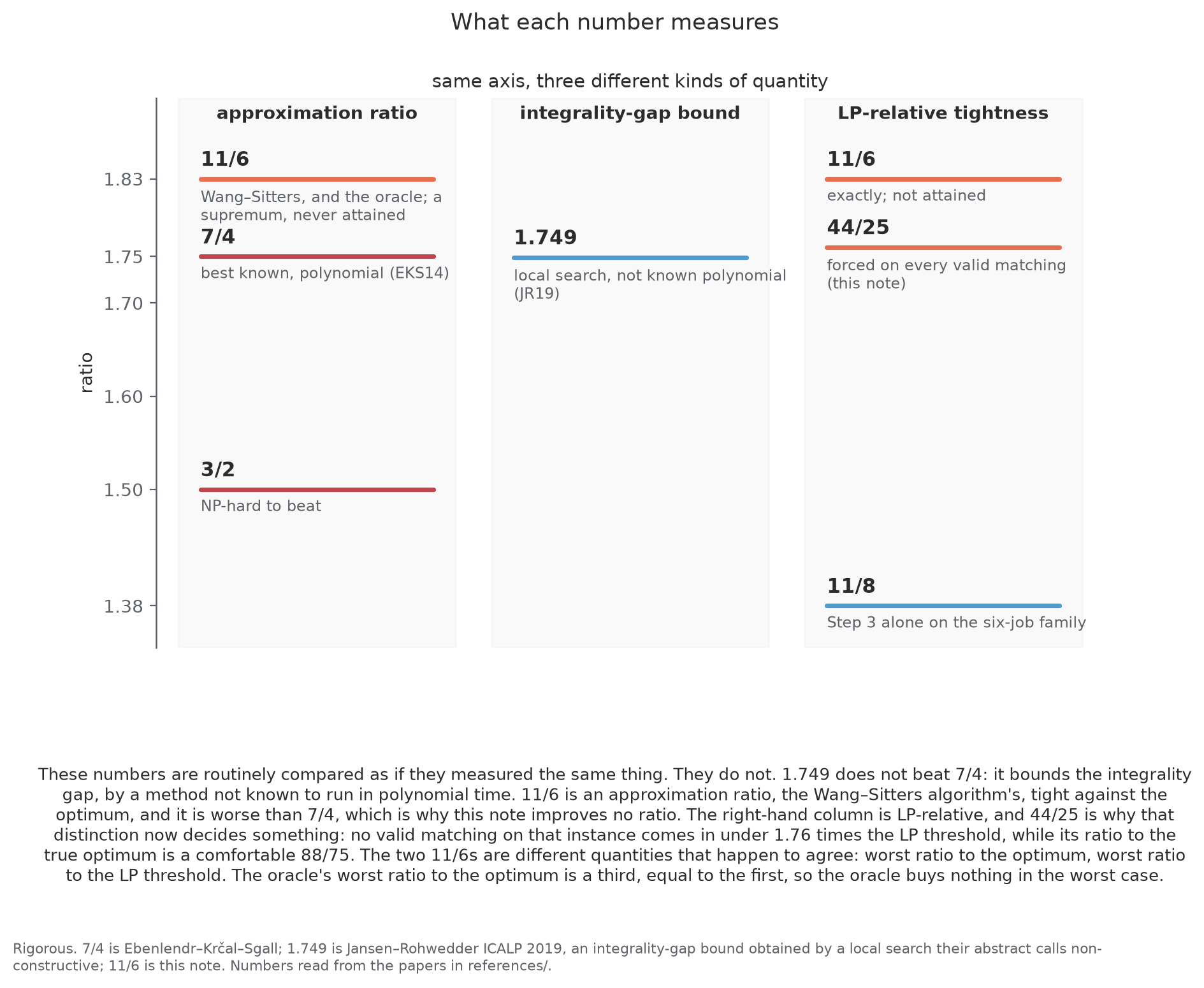}
\caption{Seven numbers that are routinely compared as though they measured
one quantity. They do not. Only the left column is an approximation ratio,
so $1.749$ does not improve on $7/4$: it bounds the integrality gap, by a
local search not known to run in polynomial time. The right column is
tightness against the relaxation, not against the optimum, which is why
this note improves no approximation ratio. That column is also where
Theorem~\ref{thm:74false} belongs: $44/25$ is forced on every valid matching
\emph{relative to the threshold}, while the same instance sits at $88/75$
relative to the true optimum. Reading one as the other is the error this
figure exists to prevent. The right column also carries $11/6$, at the same height as the left
column's: Theorem~\ref{thm:116} settles the threshold-relative constant at
that value. The two are different quantities that happen to agree, one a
worst ratio to the optimum and the other a worst ratio to the threshold, and
nothing forces two such constants to coincide; $44/25$ against $88/75$ is
the same pair of scales disagreeing on one instance. A third quantity now
joins them at $11/6$: by Corollary~\ref{cor:oracle} the best-matching
oracle's worst ratio to the optimum is that too, drawn on the same segment
as the plain algorithm's rather than beside it, since the two values
coincide. Three constants agreeing without any of them being the same
measurement is what this figure exists to separate, and here the
agreement is itself the finding, because it shows the oracle gains
nothing.}
\label{fig:numbers}
\end{figure}

\noindent
\emph{The bound is attained, and one matching suffices to show it.} The count
above forces at least one $A_k$ out of the crammer--sink set; it forces no more
than one. So the family's minimum can be read off a single matching chosen to
send exactly that one job back to its blocker and no others. This is a separate obligation
from the theorem, and the reason to discharge it is that the oracle of
Section~\ref{sec:variant} takes the \emph{minimum} over valid matchings:
``every valid matching finishes at least $\tfrac{11}{6} - \tfrac1{4g}$'' bounds
that minimum from below, and a statement about what the oracle pays needs it
bounded from above as well.

\begin{lemma}[the family is rigid, and its optimum]\label{lem:116attained}
Write $F_g = \tfrac{11}{6} - \tfrac1{4g}$ and $O_g = \tfrac43 - \tfrac1{4g}$. For
every $g \ge 1$, \emph{every} valid slot matching on the family of
Theorem~\ref{thm:116} has makespan exactly $F_g$, and $\OPT = O_g$. So the
best-matching variant of Section~\ref{sec:variant} finishes at $F_g$ there,
neither better nor worse than any other rule, and the family's ratio to its own
optimum is
\[
  \frac{F_g}{O_g} \;=\; \frac{22g-3}{16g-3} .
\]
Both quantities in that ratio are therefore pinned at every $g$ rather than
bracketed: the family admits no better matching and no better schedule.
\end{lemma}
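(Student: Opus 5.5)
The lemma has two parts, and I would prove them separately. First, every valid matching has makespan exactly $F_g$. Second, $\OPT = O_g$. The ratio is then arithmetic: $F_g = \tfrac{22g-3}{12g}$ and $O_g = \tfrac{16g-3}{12g}$. For the first part the lower bound $\ge F_g$ is already the content of Theorem~\ref{thm:116}, so only the matching upper bound is new. I would prove it machine by machine, over an arbitrary valid matching, using the slot structure computed in the proof of Theorem~\ref{thm:116}.

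\emph{Blockers.} $B_k$ carries $b_{B_k} = 1$ and two slots. Slot $1$ is adjacent to $A_k$ and $D_k$; slot $2$ is adjacent to $D_k$ only. $D_k$ is supported on $B_k$ alone, so it is matched there. Hence the machine's jobs are $Q_k$ and $D_k$, possibly together with $A_k$, and its load is at most $1 + \tfrac12 + \rho = F_g$.

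\emph{Crammers.} $C_j$ has $K_{C_j} = 2$ slots. Its adjacent jobs are two $A$ jobs of size $\tfrac12$ and the filler $f_j$ of size $\varepsilon \le \tfrac12$. At most two jobs are placed there, so its load is at most $1 < F_g$.

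\emph{Sink and partners.} The sink has $g-1$ slots and only fillers are adjacent to it, so its load is at most $(g-1)\varepsilon < \tfrac12$. The partners carry no fractional job after Step~2, so they receive nothing.

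Hence every valid matching has makespan at most $F_g$, and together with Theorem~\ref{thm:116} it has makespan exactly $F_g$. The only care needed is to reuse the slot counts exactly as computed in that proof, $K_{C_j}=2$ and $K_{C_0}=g-1$, since the crammer bound depends on them. The upper bound itself does not need the exact integer masses; it only needs that no machine other than a blocker can exceed $1$.

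For $\OPT \le O_g$, I would exhibit the following schedule:
\begin{itemize}
\item every $Q_k$ and $D_k$ on $B_k$, giving load $1+\rho = O_g$;
\item the two $A$ jobs of each crammer on that crammer, giving load $1$;
\item every filler on the sink, giving load $g\varepsilon = \tfrac12$;
\item partners empty.
\end{itemize}

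For $\OPT \ge O_g$ I would argue as in the $g=1$ paragraph of Theorem~\ref{thm:116}. Each $Q_k$ goes either to $B_k$ or to its partner $P_{\lfloor k/3 \rfloor}$. A partner holding two unit jobs has load $2 > O_g$. There are only $\lceil 2g/3 \rceil < 2g$ partners, so in any schedule of makespan below $2$ some $Q_k$ sits on $B_k$. That machine also holds $D_k$, which is allowed nowhere else, so its load is at least $1+\rho = O_g$. Since $O_g < 2$, this proves the lower bound. At $g=1$ it recovers $\tfrac{13}{12}$, which is a useful consistency check.

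The step I expect to need the most attention is the all-matchings upper bound at the blockers. It is the only place the slot-adjacency structure genuinely matters. One must confirm that no job other than $A_k$ and $D_k$ is adjacent to a slot of $B_k$; the big job $Q_k$ has left the structure in Step~2. Together with the two-slot count, this caps the blocker's load at three jobs of sizes $1$, $\tfrac12$ and $\rho$. Everything else is bounded by slot counts times maximum sizes.
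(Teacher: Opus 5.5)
You have a gap in $\OPT \le O_g$ at $g = 1$, which the lemma covers. Your matching-side argument is the paper's: blockers capped at $1+\tfrac12+\rho$ because $A_k$ meets only slot~$1$, crammers at $1$, the sink below $\tfrac12$, partners slotless, and the lower bound from Theorem~\ref{thm:116}'s count. Your proof of $\OPT \ge O_g$ is also the paper's.

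The problem is that your schedule sends every filler to the sink. But $f_j$'s sink share is $1 - 2\varepsilon = 1 - \tfrac1g$, which is zero at $g = 1$. The sink is then outside $f_1$'s support, so the schedule puts a job on a machine it may not use. The paper makes exactly this point in the proof of Theorem~\ref{thm:74bfalse}.

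Nor can you repair this while keeping the partners empty. At $g = 1$ the filler must go to $C_1$, and $Q_0, Q_1$ sit on their blockers. Either both $A$ jobs join $f_1$ on $C_1$, giving load $\tfrac32$, or some $A_k$ returns to $B_k$ alongside $Q_k$ and $D_k$, giving $\tfrac{19}{12}$. Both exceed $O_1 = \tfrac{13}{12}$. So at $g = 1$ a unit job must go to the partner, as in the schedule exhibited in Theorem~\ref{thm:116}'s proof ($Q_1 \mapsto P_0$, $A_1 \mapsto B_1$, $A_0, f_1 \mapsto C_1$), which attains $\tfrac{13}{12}$.

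For $g \ge 2$ your schedule is valid, and it is simpler than the paper's. The paper sends one unit job to each partner and that job's $A_k$ back to its blocker. Leaving the partners empty costs nothing, because every blocker then sits at exactly $1 + \rho = O_g$, the value the lower bound forces anyway.

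One small addition. ``Every valid matching has makespan $F_g$'' gives that the best-matching variant finishes at $F_g$ only once a valid matching is known to exist. That is Hall's theorem, as in Section~\ref{sec:prelim}, and the paper's proof says so explicitly.
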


\begin{proof}
\emph{Every valid matching finishes at $F_g$.} Theorem~\ref{thm:116}'s count
forces some $A_k$ onto its own blocker, which carries $Q_k$ from Step~2 and the
dedicated $D_k$ as well, so some machine is at $1 + \tfrac12 + \rho = F_g$. For
the converse, bound every machine. A blocker carries $Q_k$ from Step~2 and has
two slots, and the only jobs with positive share on it are $A_k$ and $D_k$, so
its load is at most $1 + \tfrac12 + \rho = F_g$. A crammer takes no Step-2 job,
has two slots, and every job with share on it --- its two $A$ jobs and its
filler --- has size at most $\tfrac12$, so its load is at most $1$. The sink has
$g-1$ slots and only fillers, of size $\varepsilon = \tfrac1{2g}$, so its load is
at most $\tfrac{g-1}{2g} < \tfrac12$. The partners hold no slot. So no machine
exceeds $F_g$ under any valid matching, and one exists by Hall's theorem.

\emph{The optimum.} A schedule putting two unit jobs on one partner has makespan
$2$, so below that each partner carries at most one and the
$\lceil 2g/3 \rceil$ partners take fewer than $2g$ of them. Some $Q_k$ is
therefore on its own blocker, which also carries $D_k$ --- allowed nowhere else
--- for a load of $1 + \rho = O_g$; hence $\OPT \ge O_g$. For $g \ge 2$ a
schedule meets it: choose one unit job for each partner and send it there,
sending that job's own $A_k$ to its blocker; send every other unit job to its
blocker and its $A_k$ to that job's crammer; send every filler to the sink,
whose share $1 - 2\varepsilon$ is positive exactly when $g \ge 2$; and every
$D_k$ to its blocker. The loads are $1$ on a partner, $1 + \rho$ or
$\tfrac12 + \rho$ on a blocker, at most $1$ on a crammer and $\tfrac12$ on the
sink, so the makespan is $O_g$. At $g = 1$ the schedule exhibited in
Theorem~\ref{thm:116}'s proof gives $\tfrac{13}{12} = O_1$. So $\OPT = O_g$ at
every $g$.
\end{proof}

\noindent
So the forced load is not a floor that some matching might exceed: on this
family every valid matching returns it, and the oracle has nothing to choose
between. That is what makes the family bind every Step-3 rule at once, and it
holds at every $g$ rather than over an enumerated range.

\begin{measurement}\label{meas:116}
Two of the theorem's hypotheses look like conventions, and each is moved off its
boundary. The unit job's share is exactly $\beta$, which a \emph{strict} Step-2
test would not assign; raising it a little and reducing $\rho$ to match makes
Step~2 assign under a strict test and leaves every other step intact. $A_k$'s
size is exactly $\tfrac12$, which reading \emph{big} as $p_j \ge T/2$ would make
big; lowering it a little puts it below the threshold under either reading. Each
perturbation lowers the forced load by exactly its own size and nothing else,
and the two hold at once: raising the share by $10^{-2}$ and lowering the size
by $5 \cdot 10^{-3}$ turns $g = 2$'s forced $\tfrac{41}{24}$ into $127/75$,
still on every valid matching. Section~\ref{sec:prelim}'s half-open adjacency is
not needed either: for $g = 2, \dots, 10$ the closed reading gives an
edge-for-edge identical slot graph, so the lower bound does not rest on the
convention its paired upper bound does.

Lemma~\ref{lem:116attained} proves rigidity and the optimum at every $g$, so
neither is measured here; what the enumeration adds is a check of both against
an independent computation. At $g = 1, \dots, 6$ the minimum over valid
matchings equals the maximum and equals $F_g$, and the exact optima are
$\tfrac{13}{12}$, $\tfrac{29}{24}$, $\tfrac54$, $\tfrac{61}{48}$,
$\tfrac{77}{60}$ and $\tfrac{31}{24}$ --- the six values $O_g$ predicts ---
giving ratios $\tfrac{19}{13} \approx 1.4615$ down to
$\tfrac{43}{31} \approx 1.3871$, largest at $g = 1$ and decreasing.
Appendix~\ref{app:computational} carries the verification record.
\end{measurement}

\section{The oracle is not a \texorpdfstring{$\tfrac74$}{7/4}-approximation}\label{sec:oracle}

The second central theorem. Section~\ref{sec:74} settles the constant against
the \emph{threshold}; against the \emph{optimum} the question is a different
one, because the two scales need not agree, and a family that is extremal on
one can be unremarkable on the other. The family below is one line away from
Theorem~\ref{thm:116}'s, and that one line moves the optimum without moving the
forced load.

Theorem~\ref{thm:116}'s family was built to pin the \emph{threshold}-relative
constant, and on that scale the optimum plays no role at all. So nothing in the
construction was chosen to make the optimum small, and one thing in it makes
the optimum large. That thing is the partners' scarcity. Each partner absorbs the
$\tfrac13$ shares of up to three unit jobs, because its big-job constraint binds
at $3 \cdot \tfrac13 = 1$, and $\lceil 2g/3 \rceil$ of them is the fewest that
will hold $2g$ such shares. The same scarcity forces a unit job back onto a
blocker \emph{in the optimum too}, which is the whole reason that family has
$\OPT = \tfrac43 - \tfrac1{4g}$ rather than $1$.
Figure~\ref{fig:familycurves} plots the two families against each other and
shows why a larger instance of the first would not have sufficed: its ratio is
largest at its first member and falls, while the modified family's rises.

Give each unit job its own partner and the optimum falls to $1$, while the forced
load does not change at all. The lower bound is a count over crammer and sink slots,
and partners appear nowhere in it.

\begin{figure}[htbp]\centering
\includegraphics[width=0.95\textwidth]{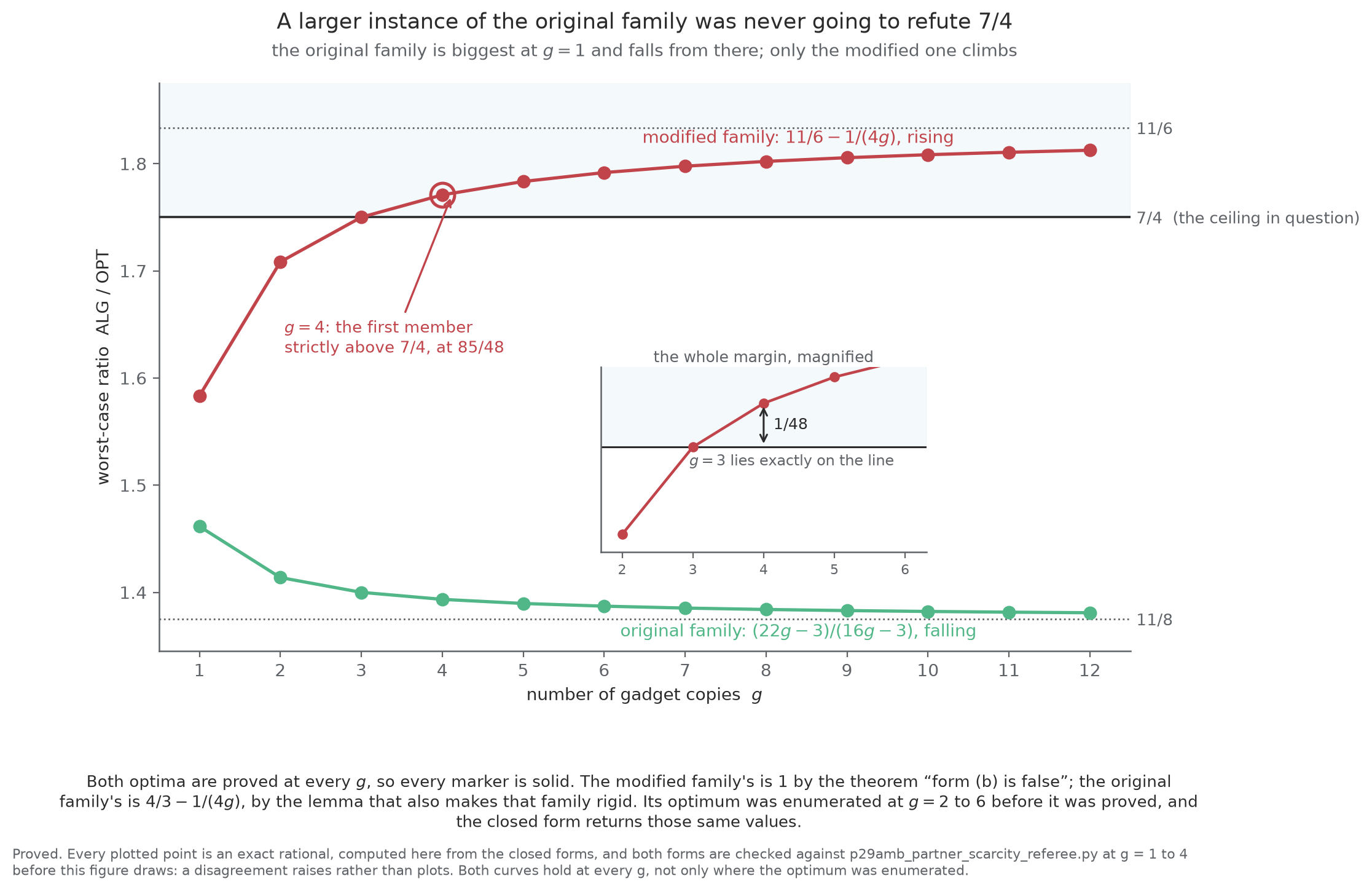}
\caption{Why refuting form~(b) needed a different construction rather than a
larger instance. Theorem~\ref{thm:116}'s family has ratio $(22g-3)/(16g-3)$ (lower curve),
which is largest at its \emph{first} member, $19/13$ at $g=1$, and falls toward
$11/8$; the modified family's $\tfrac{11}{6} - 1/(4g)$ (upper curve) rises,
equals $7/4$ exactly at $g=3$, and first exceeds it at $g=4$ by $1/48$. Every
plotted point is an exact rational.
Both optima are proved at every $g$: the modified family's is $1$ by
Theorem~\ref{thm:74bfalse}, and Theorem~\ref{thm:116}'s is
$\tfrac43 - 1/(4g)$ by Lemma~\ref{lem:116attained}, which also makes that family
rigid, so the hollow markers the figure once used for unenumerated parameters no
longer mark anything.}
\label{fig:familycurves}
\end{figure}

\begin{theorem}[form (b) is false]\label{thm:74bfalse}
For every $g \ge 1$ there is an instance of graph balancing on $5g + 1$
machines with $7g$ jobs, and a feasible relaxation solution at $T = 1$ with
$T = 1$ least feasible, whose optimum is $\OPT = 1$, such that \emph{every}
valid slot matching has makespan at least $\tfrac{11}{6} - \tfrac1{4g}$ and
some valid matching has makespan exactly that. Hence the best-matching variant
of Section~\ref{sec:variant}, run at the least feasible threshold on that
solution, finishes at
\[
\Bigl(\tfrac{11}{6} - \tfrac1{4g}\Bigr)\,\OPT .
\]
At $g = 4$ that is $\tfrac{85}{48} > \tfrac{84}{48} = \tfrac74$, so form~(b)
of the optimum-relative candidate in Definition~\ref{conj:74} is
\textbf{false}. The margin is $\tfrac1{48}$, and
the preceding member $g = 3$ sits at exactly $\tfrac74$, so the family refutes
the ``at most $\tfrac74$'' reading from $g = 4$ on and no earlier. The family
stands on the same two boundary readings as Theorem~\ref{thm:116}'s, and on no
others.
\end{theorem}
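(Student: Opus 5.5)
The construction is Theorem~\ref{thm:116}'s family with the $\lceil 2g/3\rceil$ shared partners replaced by $2g$ private ones. Fix $g\ge1$, $\varepsilon=\tfrac1{2g}$ and $\rho=\tfrac13-\tfrac1{4g}$. Keep the blockers $B_k$, the crammers $C_1,\dots,C_g$, the sink $C_0$, and the jobs $A_k$, $D_k$, $f_j$ exactly as before. Give each unit job $Q_k$ its own partner $P_k$, with $x=\{B_k:\tfrac23,\ P_k:\tfrac13\}$. That makes $2g$ blockers, $2g$ partners, $g$ crammers and one sink, which is $5g+1$ machines and still $7g$ jobs.

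\emph{Feasibility and Step~2.} Feasibility at $T=1$ is the earlier verification with the partners made easier. A partner now carries load $\tfrac13$ and big budget $\tfrac13$, and every other machine's constraints are literally unchanged. Least feasibility again follows from $p_{Q_k}=1$. Step~2 is unchanged: $Q_k$ sits at share exactly $\beta$ on $B_k$ and is assigned there, while the partners receive no Step-2 job and no fractional job. So the slot structure on blockers, crammers and sink is identical to Theorem~\ref{thm:116}'s.

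\emph{The lower bound.} The count of Theorem~\ref{thm:116} transfers verbatim, because partners never enter $S$ or $J$. Apply Lemma~\ref{lem:deficiency} with $S$ the $3g-1$ crammer and sink slots and $J$ the $3g$ jobs $A_k,f_j$. This forces some $A_k$ onto slot~1 of its own blocker. $D_k$ is supported there alone and so must take slot~2, which gives load $1+\tfrac12+\rho=\tfrac{11}{6}-\tfrac1{4g}$ under every valid matching.

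\emph{The matching upper bound.} The per-machine bounds of Lemma~\ref{lem:116attained} apply unchanged. A blocker has at most $1+\tfrac12+\rho$, a crammer at most $1$, the sink below $\tfrac12$, and partners carry nothing. So every valid matching, in particular the best one, finishes at exactly $F_g$.

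\emph{The optimum.} For the lower bound, $\OPT\ge1$ because $Q_k$ has size $1$. For the upper bound I would exhibit an explicit schedule of makespan $1$:
\begin{itemize}
\item each $Q_k$ goes to $P_k$;
\item each $A_k$ and $D_k$ go to $B_k$, for load $\tfrac12+\rho<1$;
\item each filler $f_j$ goes to $C_j$, for load $\varepsilon\le\tfrac12$, which avoids the $g=1$ problem that the sink share $1-2\varepsilon$ vanishes.
\end{itemize}
Then $\OPT=1$, and the oracle at $T_{\mathrm{LP}}=1$ finishes at $(\tfrac{11}{6}-\tfrac1{4g})\OPT$.

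\emph{Arithmetic at $g=3,4$.} At $g=4$ the value is $\tfrac{88}{48}-\tfrac{3}{48}=\tfrac{85}{48}>\tfrac74$. At $g=3$ it is $\tfrac{22}{12}-\tfrac1{12}=\tfrac74$ exactly.

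The one point needing care is the claim that the family stands on the same two boundary readings and no others. The main obstacle is confirming that adding partners introduced no new dependence on conventions. Partners hold no slot, so half-open adjacency is irrelevant to them. The only boundary facts used are the non-strict Step-2 test at share $\tfrac23$ and the reading that a job of size $\tfrac12$ is not big. Both are inherited verbatim from Theorem~\ref{thm:116}.
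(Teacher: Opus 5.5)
Your proof is correct and follows the paper's route. The paper makes the same one-line modification of Theorem~\ref{thm:116}'s family and, as you do, inherits feasibility, Step~2, the slot structure, the deficiency count and the per-machine upper bound, because no partner enters any of them. The one difference is the schedule witnessing $\OPT = 1$. The paper sends each $A_k$ to its crammer and each filler to the sink, which needs a separate substitution at $g = 1$, where the filler's sink share vanishes. Your schedule puts $A_k, D_k$ on $B_k$ and $f_j$ on $C_j$, uses only positive shares at every $g \ge 1$, and so handles all members of the family uniformly.
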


\emph{Strategy.} One line of Theorem~\ref{thm:116}'s construction changes and
nothing else does; Figure~\ref{fig:onelinechange} puts the two side by side at
$g = 4$, where the change first carries the ratio past $\tfrac74$. No constraint anywhere tightens --- the partners get slacker
and no other machine is touched --- so feasibility is inherited rather than
rechecked; every step of the count mentions only machines that did not move,
so the lower bound is inherited too.
What changes is the optimum, and it changes because the modification removes the
only obstruction to a perfect assignment.

\begin{figure}[htbp]\centering
\includegraphics[width=0.98\textwidth]{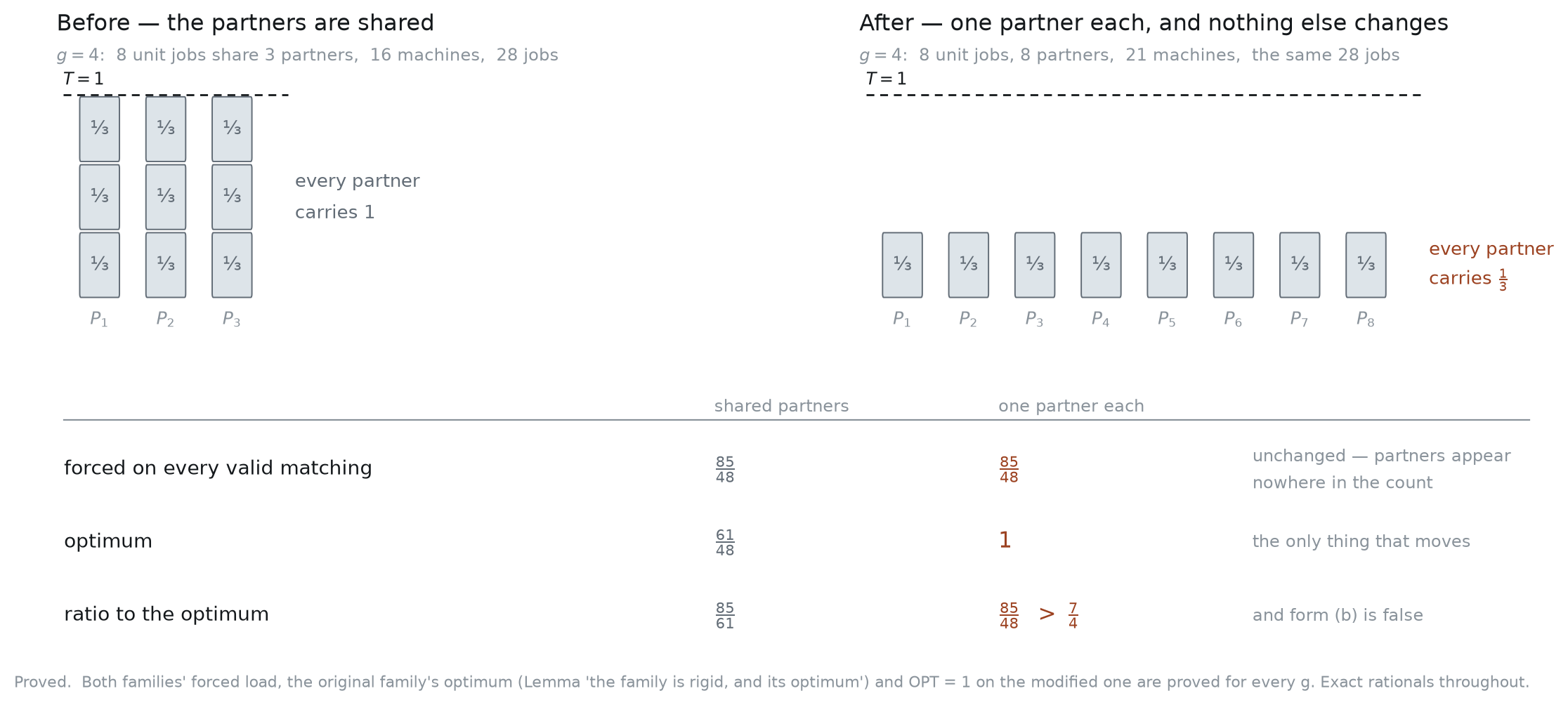}
\caption{The one line, at $g = 4$ --- the first member of the modified family
strictly above $7/4$. Only the partners move: Theorem~\ref{thm:116}'s
$\lceil 2g/3 \rceil$ of them carry up to three $\tfrac13$ shares apiece, for a
load of exactly $1$, and after the change each of the $2g$ unit jobs has a
partner of its own carrying one share, for a load of $\tfrac13$. No constraint
tightens, so feasibility and the lower bound are inherited; the forced load is
the same $\tfrac{11}{6} - 1/(4g)$ on both. What moves is the optimum, from
$\tfrac{61}{48}$ to $1$, and with it the ratio, from $\tfrac{85}{61}$ to
$\tfrac{85}{48} > \tfrac74$, set against the
original family's optimum $\tfrac43 - 1/(4g)$, which
Lemma~\ref{lem:116attained} proves at every $g$ and which was enumerated at
$g = 2$ to $6$ before it was proved; the forced load on both families and
$\OPT = 1$ on the modified one are proved for every $g$ as well.}
\label{fig:onelinechange}
\end{figure}

\begin{proof}
Take Theorem~\ref{thm:116}'s family and replace its $\lceil 2g/3 \rceil$
partners by $2g$ of them, one per unit job, so that
$Q_k$ has $x = \{B_k : \tfrac23,\ P_k : \tfrac13\}$. The blockers, crammers,
sink, the jobs $A_k$, $D_k$, $f_j$, their sizes and all their shares are
unchanged, as are $\varepsilon = \tfrac1{2g}$ and
$\rho = \tfrac13 - \tfrac1{4g}$. The machine count is
$2g + 2g + g + 1 = 5g + 1$ and the job count is unchanged at $7g$.

\emph{Feasibility at $T = 1$, and least feasibility.} Every machine other than a
partner carries exactly what it carried before, since no share on it moved, so
the blocker, crammer and sink computations of Theorem~\ref{thm:116} stand
verbatim. A partner now carries one $\tfrac13$ share instead of up to three: its
load and its big-job constraint both read $\tfrac13 \le 1$, where before each
was at most $1$: exactly $1$ for a partner holding three shares, which is
every partner only when $3 \mid g$ and not otherwise. What matters is the
direction rather than the value: no constraint tightens anywhere, which is all
that inheriting feasibility requires. Least feasibility is unchanged and does not depend on the partners:
$p_{Q_k} = 1$, the relaxation zeroes $x_{ij}$ whenever $p_j > T$, and
$\sum_i x_{iQ_k} = 1$ then fails for every $T < 1$.

\emph{Step~2 and the slot structure are untouched.} $Q_k$'s share at $B_k$ is
still exactly $\beta = \tfrac23$ and its share at its partner is $\tfrac13$ in
either regime, so Step~2 assigns $Q_k$ to $B_k$ and to no partner; no machine
receives two Step-2 jobs, since two shares of at least $\tfrac23$ would sum to
$\tfrac43$ and break the big-job constraint. A job assigned in Step~2 does not
enter the slot structure, on either of its machines, so the partners carry no
fractional job and hold no slot, exactly as in Theorem~\ref{thm:116}, where
they hold none either. Every other machine's residual mass and slot count is
the one computed there: $M_{B_k} = 1 + \varepsilon$ with $K_{B_k} = 2$,
$M_{C_j} = 2$ with $K_{C_j} = 2$, and $M_{C_0} = g-1$ with $K_{C_0} = g-1$.

\emph{The lower bound is inherited.} Theorem~\ref{thm:116}'s count is over the
$2g$ jobs $A_k$ and the $g$ fillers $f_j$, which are supported on the crammers,
the sink, and, for $A_k$, its own blocker; against the $3g - 1$ slots the
crammers and sink carry between them. Not one of those objects moved, and no
$A_k$ or $f_j$ has any share on a partner, so no partner slot could receive one
even if a partner had a slot. The count and the load computation that follows it
therefore hold word for word, and every valid matching has makespan at least
$1 + \tfrac12 + \rho = \tfrac{11}{6} - \tfrac1{4g}$. Lemma~\ref{lem:116attained}'s
matching places nothing on any partner, so it remains valid here and attains
that value.

\emph{The optimum is $1$.} For $g \ge 2$, send each $Q_k$ to its own partner,
each $A_k$ to its crammer, each filler to the sink, and each $D_k$ to its
blocker. Every job goes to a machine on which it has positive share: the
filler's sink share is $1 - 2\varepsilon$, positive exactly when $g \ge 2$. The
loads are $1$ on each partner, $1$ on each crammer --- two jobs of size
$\tfrac12$, the fillers having gone to the sink --- $g\varepsilon = \tfrac12$ on
the sink, and $\rho < \tfrac13$ on each blocker.

At $g = 1$ that assignment is unavailable, because $\varepsilon = \tfrac12$
makes the filler's sink share zero and the sink is then not in its support. One
substitution repairs it: send $f_1$ to its crammer $C_1$ and $A_1$ to its own
blocker $B_1$. Then $C_1$ carries $A_0$ and $f_1$, both of size $\tfrac12$, for
load $1$; $B_1$ carries $A_1$ and $D_1$, for load $\tfrac12 + \tfrac1{12} =
\tfrac7{12}$; each partner carries $1$; and $B_0$ carries $\tfrac1{12}$.

Either way the maximum is $1$, so $\OPT \le 1$; and $\OPT \ge p_{Q_0} = 1$
because a job of size $1$ must go somewhere. So $\OPT = 1$ at every $g$, by an
exhibited schedule and a trivial bound rather than by search, but not by the
\emph{same} schedule at $g = 1$, and saying otherwise would put a job on a
machine it is not allowed to use.

\emph{The number.} $\tfrac{11}{6} - \tfrac1{4g}$ exceeds $\tfrac74$ exactly when
$\tfrac1{12} > \tfrac1{4g}$, that is when $g \ge 4$. At $g = 4$ the ratio is
$\tfrac{85}{48}$ against $\tfrac74 = \tfrac{84}{48}$.
\end{proof}

\begin{corollary}[the oracle's constant]\label{cor:oracle}
Let the best-matching variant be run at the least feasible threshold with
$\beta = \tfrac23$. Then, over all instances and all feasible solutions Step~1
may return,
\[
\sup \ \frac{\ALG}{\OPT} \ = \ \frac{11}{6},
\]
and the supremum is attained by no instance.
\end{corollary}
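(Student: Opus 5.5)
The proof has two halves, an upper bound and a matching lower bound, each drawn from a result already stated.

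\emph{Upper bound.} Fix any instance $I$ and any feasible $x$ at $T_{\mathrm{LP}}$ that Step~1 may return. The oracle returns some $M \in \mathcal{M}(I,x)$, which is nonempty by Hall's theorem. By the $11/6$ ceiling of \cite{Paper1}, every valid matching has every machine strictly below $\tfrac{11}{6}T_{\mathrm{LP}}$, so in particular
\[
\ALG(I,x,M) < \tfrac{11}{6}\,T_{\mathrm{LP}} \le \tfrac{11}{6}\,\OPT(I).
\]
The second inequality holds because $T_{\mathrm{LP}} \le \OPT$, which Section~\ref{sec:prelim} records. This ceiling is strict at every instance. So the supremum of $\ALG/\OPT$ for the oracle is at most $\tfrac{11}{6}$, and no single instance reaches it. Equivalently, the oracle's execution space is a subset of the scheme's, so $R_{\mathrm{WS}} = \tfrac{11}{6}$ from \cite{Paper1} already bounds it, strictly.

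\emph{Lower bound.} Use the family of Theorem~\ref{thm:74bfalse}. For each $g \ge 1$ it has $T = T_{\mathrm{LP}} = 1$ least feasible and $\OPT = 1$. Every valid matching on the stated feasible solution has makespan at least $\tfrac{11}{6} - \tfrac1{4g}$, so the best matching does as well. Since that $x$ is one that Step~1 may return, the oracle's ratio on this instance is at least $\tfrac{11}{6} - \tfrac1{4g}$. Letting $g \to \infty$ gives a supremum of at least $\tfrac{11}{6}$.

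Combining the two halves gives exactly $\tfrac{11}{6}$. Non-attainment follows from the strict inequality in the upper half.

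The only delicate point is the quantifier order. The oracle minimizes over $M$ for a fixed $x$, while the adversary ranges over $x$. The lower bound therefore needs a bound holding for \emph{every} valid matching at one fixed $x$, and Theorem~\ref{thm:74bfalse} supplies exactly that. The upper bound needs only the ceiling, which is uniform in both $x$ and $M$.
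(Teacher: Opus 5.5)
Your proof is correct and matches the paper's. The upper bound is the strict $\tfrac{11}{6}$ ceiling of \cite{Paper1}, applied to whichever valid matching the oracle returns and combined with $T_{\mathrm{LP}} \le \OPT$; the paper re-justifies that inequality inside the proof by checking that an integral schedule of makespan $\OPT$ satisfies the big-job constraint. The lower bound is Theorem~\ref{thm:74bfalse}'s family at $T = \OPT = 1$, with ratios $\tfrac{11}{6} - \tfrac1{4g}$, and non-attainment comes, as you say, from the ceiling being strict on every instance.
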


\begin{proof}
For the upper bound, the $11/6$ ceiling of \cite{Paper1} puts every machine's load
strictly below $\tfrac{11}{6}T$ for any feasible $x$ and any valid matching, and
the variant takes a valid matching, so its makespan is strictly below
$\tfrac{11}{6}T$. The variant is run at $T = T_{\mathrm{LP}}$, and
$T_{\mathrm{LP}} \le \OPT$ because any integral schedule of makespan $\OPT$ is a
feasible fractional solution at $\OPT$: a machine finishing by $\OPT$ carries
at most one job of size exceeding $\OPT/2$, so the big-job constraint holds
there. Hence $\ALG < \tfrac{11}{6}\OPT$ on every instance, which also shows the
value is attained by none. For the lower bound, Theorem~\ref{thm:74bfalse}'s
family gives ratios $\tfrac{11}{6} - \tfrac1{4g}$, increasing in $g$ with
supremum $\tfrac{11}{6}$.
\end{proof}

\noindent
\emph{What this shows.} Three constants in this note now coincide:
the threshold-relative constant $c^\star$ of Theorem~\ref{thm:116}, the plain
algorithm's worst ratio to the optimum in \cite{Paper1}, and the
oracle's worst ratio to the optimum here. They are three different quantities, all determined by one
family, as the passage after Theorem~\ref{thm:116} records, and
Figure~\ref{fig:numbers}'s warning against reading one as another is
unaffected. But the third of them is the damaging one: \textbf{a makespan-minimizing valid
matching does not improve the worst-case constant against the optimum}. It
improves plenty of individual instances --- that is not the claim, and the
phrase ``the oracle does not help'' would run the two together. Removing
Step~3's freedom entirely, which we do not know how to do in polynomial time,
leaves the
constant that freedom was thought to account for exactly where it was.
Removing Step~3's freedom therefore does not improve the worst-case constant.
One possibility remains at this point in the argument --- that the obstruction
needs Step~1 to return an interior feasible point --- and
Section~\ref{sec:vertex} closes it. That is Remark~\ref{rem:gapscope}'s point
carried from the threshold scale to the optimum scale, and no further.

\begin{measurement}\label{meas:74b}
The forced load is identical under both partner regimes at $g = 1, \dots, 7$:
$\tfrac{19}{12}$, $\tfrac{41}{24}$, $\tfrac74$, $\tfrac{85}{48}$,
$\tfrac{107}{60}$, $\tfrac{43}{24}$, $\tfrac{151}{84}$, each equal to
$\tfrac{11}{6} - \tfrac1{4g}$; only the optimum moves. The relaxation is
feasible at $T = 1$ in both regimes throughout, every constraint checked machine
by machine rather than asserted, and the modified family's optimum is decided
exhaustively over every assignment of every job to a machine in its own support
at $g \le 6$, always $1$; the largest of those enumerates
$1{,}073{,}741{,}824$ assignments. All of it is in exact rational arithmetic,
which the half-open convention requires: adjacency turns on exact equality, so
floating point is not merely imprecise here but inadmissible.

One more fact belongs here, because it bears on how the theorem should be read.
It is reported from a run made with the deposited algorithm module and a
linear-programming solver, and no deposited script regenerates it. Solving
Step~1 with a solver returns, on this family, a \emph{benign} point: at
$g = 1, \dots, 4$ the solver's own solution is one on which the best-matching
variant finishes at exactly $\OPT$, and the least feasible threshold it reports
is $T_{\mathrm{LP}} = 1$, as the theorem says. The adversarial point must
therefore be supplied explicitly, which is the same thing \cite{Paper1} records
for its own $11/6$ family. This does not weaken
Theorem~\ref{thm:74bfalse}: Step~1 returns \emph{a} feasible solution, and a
guarantee must hold for every one it might return. But it does say what the
theorem is and is not about --- the scheme as specified, not the behaviour of a
particular solver --- and, taken with Remark~\ref{rem:74bvertex}, it is a second
sign that what survives here is a question about \emph{which} feasible solution
Step~1 is allowed to return.
\end{measurement}

\section{A vertex in Step 1 buys nothing}\label{sec:vertex}

The third central theorem, and the one that closes the escape route the other
two leave open. Both refutations exhibit a point that is not a vertex of the
relaxation polytope, so a variant selecting a vertex solution --- which can be
done in polynomial time, and is the kind of solution a simplex implementation
returns --- might have escaped them. It does not: there is a family whose points are vertices from the start
and whose forced load is the same.

\begin{remark}[what Theorem~\ref{thm:74bfalse} does not reach]\label{rem:74bvertex}
The point exhibited in Theorem~\ref{thm:74bfalse} is not a vertex of the
relaxation, and this has to be said here for the same reason the remark after the $11/6$ family in \cite{Paper1} says it there. Form~(b) does not restrict Step~1 ---
Section~\ref{sec:prelim}'s Step~1 returns \emph{a} feasible solution --- so the
theorem refutes the candidate guarantee as posed. A variant pinning Step~1 to a vertex is
untouched by it; Theorem~\ref{thm:pinvertex} is what reaches that case, and the
evidence below is what made the restriction look promising before it did.

The comparison goes against the refuting point. Measuring the
rank of the tight constraints against the number of variables in exact rational
arithmetic, Theorem~\ref{thm:116}'s point falls short of full rank by
$1, 5, 7, 10$ at
$g = 1, \dots, 4$ and this one by $1, 6, 9, 12$: the two are equal at
$g = 1$, and from $g = 2$ on the refuting point is \emph{further} from a
vertex, not closer. Nor does it survive being pushed to
one. Walking from it to genuine vertices by purification --- moving along a
randomly chosen direction of the null space of the tight constraints until a new
constraint binds, and repeating until that null space is empty --- reaches $180$
certified vertices at $g = 2, 3, 4$, and at $178$ of them the forced load
falls to $\tfrac43 - \tfrac1{4g}$; at the remaining two, both at $g = 2$, it
falls further still, to $1$. That null space has dimension $6$, $9$
and $12$ at those three parameters, so $180$ is a \emph{sample} of the vertices
of a face and not an enumeration of them; ``every one'' means every one reached,
and the paragraph after this says what that means.

The reason is not the one first guessed, and it says something about vertices
rather than about this family. The pigeonhole
does \emph{not} fail: at most of these vertices the count still forces a job out
of the crammer--sink set, sometimes more of them than it did before, and Step~2
still assigns a unit job to almost every blocker. What fails is the
\emph{coincidence}. The forced load $\tfrac{11}{6} - \tfrac1{4g}$ is the three
terms of the Wang--Sitters analysis, $1 + \tfrac12 + \rho$, landing on one
machine, and $A_k$ reaches its blocker's first slot only because it holds the
share $\varepsilon$ there. That share exists for no other purpose: it carries
$\tfrac1{2g}$ of a job and adds one edge to the slot graph. Adjacency is
positivity here --- $A_k$ occupies $[0, \varepsilon)$ of its blocker's residual
mass, which meets the first slot whenever $\varepsilon > 0$ --- so the edge
survives any share however small, and vanishes only when the share is exactly
zero. That is what purification does to it. At every vertex reached, \emph{no}
half-sized job is adjacent to any slot of a machine carrying a Step-2 job, and
wherever the count still forces a job out the load is $1 + \rho$, with the
$\tfrac12$ missing.

So what a vertex restriction does to this construction is sharper than ``it breaks
the count''. Every family in this note creates a slot-graph edge with a vanishing
positive share, and a vertex is precisely the kind of point that cannot carry
such a share unless a tight constraint holds it there.

The complementary search is no more decisive. Asking instead how large the oracle's ratio can be at a vertex of a
\emph{random} instance --- each built around a job of size exactly $1$, so that
the least feasible threshold and the optimum are both pinned at $1$ --- gives
$\tfrac{17}{12} \approx 1.417$ as the largest, over the pool
Measurement~\ref{meas:vertexoracle}(a) reports, with most of that pool sitting
at exactly $1$. Proposition~\ref{prop:vertex}'s own witness gives
the oracle $26571/20000 \approx 1.329$, and the smaller certified vertex
recorded in the deposit gives it $39/29 \approx 1.345$. Theorem~\ref{thm:pinvertex}'s
vertices give it $\tfrac{11}{6} - \tfrac1{4g}$, which is $\tfrac{85}{48} \approx 1.771$
at $g = 4$.

Neither search establishes much. The purification search's best vertex sits $\tfrac{23}{48}$ \emph{below} $\tfrac74$
at $g = 4$ and the random search's best sits $\tfrac13$ below it, so neither is
near the bound being tested. And the random search's maximum tracks how many
vertices were drawn rather than anything structural. A second experiment,
kept separate from the first because it draws from a different pool, splits
the search into four bands of increasing instance size
(Measurement~\ref{meas:vertexoracle}(b)): the best is $\tfrac43$ in the three
smaller bands and only $\tfrac76$ in the largest, which is the band with the
fewest certified vertices, because large instances are harder to build
feasibly. No band produced $\tfrac{17}{12}$; that value comes from the single
larger pool of part~(a) and from nowhere else.
That is what a maximum over a sample does, and it establishes no plateau. We
therefore claim only that the vertex-restricted supremum is at least
$\tfrac{17}{12}$, and that neither search comes near $\tfrac74$. It shows
that this refutation does not extend to the vertex-restricted case, and it says
nothing about whether some other vertex, in some other family, does. This note
certifies two vertices at which the best valid matching stays well below
$\tfrac74$: $26571/20000 \approx 1.329$ at Proposition~\ref{prop:vertex}'s, and
$39/29 \approx 1.345$ at the smaller one in the deposit. Two points are not a
family and determine no supremum.

The successor question was \emph{restricted to vertices of the relaxation, is
the best-matching variant a $\tfrac74$-approximation?} Theorem~\ref{thm:pinvertex}
answers it, negatively, and the searches above are why the answer is worth
stating: neither of them found the family, and one of them was built to.
\end{remark}

\begin{theorem}[a vertex restriction buys nothing]\label{thm:pinvertex}
For every $g \ge 2$ there is an instance of graph balancing on $10g + 3$
machines and a point $x$ of the relaxation at $T = \OPT = 1$ such that $x$ is a
\emph{vertex} of that polytope and \emph{every} valid Step-3 matching has
makespan at least $\tfrac{11}{6} - \tfrac1{4g}$, with equality attained. At $g = 4$ that is
$\tfrac{85}{48} > \tfrac74$. Vertexhood is proved analytically for every $g \ge 2$, by
the spanning-tree rank argument below; the exact eliminations reported in
Measurement~\ref{meas:pinvertex} confirm it at particular $g$ and carry no part of
the proof. The family stands on the same two boundary readings as
Theorem~\ref{thm:116}'s.

Consequently, restricting Step~1 to return a vertex leaves the supremum of
$\ALG/\OPT$ at $\tfrac{11}{6}$, and the answer to
Remark~\ref{rem:74bvertex}'s question is no: the best-matching variant is not a
$\tfrac74$-approximation on vertices either.
\end{theorem}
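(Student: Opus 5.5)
The plan is to start from Theorem~\ref{thm:74bfalse}'s family and leave untouched everything the lower bound reads: the blockers $B_k$, crammers $C_j$, sink $C_0$, the jobs $A_k,D_k,f_j$ and their shares, and the share $\tfrac23$ of each $Q_k$ on $B_k$. What changes is the surroundings, so that each fractional share is \emph{pinned} by tight constraints. The lower bound and its attainment are then inherited word for word, as in the proof of Theorem~\ref{thm:74bfalse}. The count of Lemma~\ref{lem:deficiency} mentions only crammer and sink slots and the jobs $A_k,f_j$, and the load $1+\tfrac12+\rho$ is computed at a blocker. Only two things need new work: that $x$ is a vertex, and that $\OPT=1$.

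For vertexhood I would use the standard rank argument for this relaxation. Treat each job with two positive shares as an edge of a support graph on machines; jobs with a single share are fixed at $1$ by their equality. I would arrange each connected component of the fractional support graph to be a tree in which every machine but one (a ``root'') has a tight load constraint. These tight rows are then linearly independent, which follows by peeling leaves. At a tight leaf machine, exactly one fractional share is not yet determined, so the load equation fixes it, and the job's equality fixes its other share. The tight rows plus the job equalities then number as many as the variables and have full rank, so $x$ is a vertex.

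Along the existing family the peeling order is partner $\to$ blocker $\to$ crammer $\to$ sink.
\begin{itemize}
\item[(1)] A tight partner $P_k$ fixes $x_{P_kQ_k}=\tfrac13$, hence $x_{B_kQ_k}=\tfrac23$.
\item[(2)] The tight blocker, whose load $\tfrac23+\tfrac{\varepsilon}2+\rho=1$ is already exact, then fixes $A_k$'s share $\varepsilon$. This is precisely the vanishing share that purification destroyed in Remark~\ref{rem:74bvertex}; here a tight row holds it.
\item[(3)] Each crammer, whose two $A$ shares are now known, must be made tight, and it then fixes $f_j$'s crammer share $2\varepsilon$.
\item[(4)] The sink is the root, so its slack is allowed. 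This also leaves $M_{C_0}=g-1$ exactly.
\end{itemize}
Crammer $C_j$'s load is $1-\varepsilon+2\varepsilon^2<1$, so I would add to it a dedicated job of size $\varepsilon-2\varepsilon^2$. Its share is $1$ and integral, so it is not in the slot count's $J$, but it changes $M_{C_j}$. I therefore expect to adjust the filler or $A$ shares so that the total mass is still exactly $2$. Alternatively, I would tighten the crammer by a second fractional padding job that continues the tree into a fresh machine.

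The partners need padding to be tight, since their load is only $\tfrac13$. A dedicated job would destroy $\OPT=1$, because $Q_k$ could no longer go to its partner. I would instead give $P_k$ a padding job $E_k$ of size $\tfrac23$ supported on $P_k$ and a fresh machine $P'_k$, with share $1$ on $P_k$ in $x$ and made fractional by a small transfer. The component $P'_k - P_k - B_k - C_j - C_0$ stays a tree, and a chain of tight fresh machines with dedicated fillers pins $E_k$'s share. This costs roughly $5g+2$ new machines, which matches $10g+3=(5g+1)+(5g+2)$.

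I must check three things. First, every new big share is below $\beta=\tfrac23$, so Step~2 assigns nothing new. Second, the big-job constraint holds on $P_k$ and $P'_k$. Third, the new jobs create no slot adjacency for $A_k$ or $f_j$, which holds because they touch only new machines and partners. Then $\OPT=1$ follows from the schedule of Theorem~\ref{thm:74bfalse}: $Q_k\mapsto P_k$, $E_k\mapsto P'_k$, padding on its own machine. Restricting to $g\ge2$ avoids the $g=1$ sink degeneracy.

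The main obstacle is reconciling exact tightness with the exact integer masses the count needs. Making the crammers and partners tight, while keeping $M_{C_j}=2$, $M_{C_0}=g-1$, $\OPT=1$ and a tree-shaped support, over-constrains the numbers. I would first try to resolve this with fresh dedicated-job machines that absorb slack one tight row at a time. The final ``consequently'' is then immediate: the lower bound holds for every valid matching, so it binds the best one, and the $11/6$ ceiling of \cite{Paper1} applies to vertices as to any feasible point.
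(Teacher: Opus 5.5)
Your skeleton is right where it matters. The lower bound and its attainment are inherited, and peeling partner $\to$ blocker $\to$ crammer with the sink as the single slack root is a valid substitute for the paper's tree-incidence rank computation. The paper tightens all $5g+1$ rows and records the one resulting dependency, so its two sink pads are in fact dispensable.

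The gap is the partner padding. As written, $E_k$ has size $\tfrac23>\tfrac12$, so it is big, and $P_k$'s big-job row $x_{P_kQ_k}+x_{P_kE_k}\le 1$ caps its share at $\tfrac23$. Near share $1$ the point is infeasible, and Step~2 would take $E_k$, which contradicts your own first check. At share at most $\tfrac23$ it adds load at most $\tfrac49$, so $P_k$ is not tight.

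More fundamentally, pinning a \emph{fractional} pad by a chain $F$ of tight fresh machines with dedicated fillers is incompatible with $\OPT=1$ for any sizes. In a schedule of makespan $1$, $Q_k$ must sit alone on $P_k$, since on $B_k$ it meets $D_k$. So every other job allowed only on $\{P_k\}\cup F$ must fit on $F$. Yet if each machine of $F$ is tight in $x$ and carries only such jobs, their total work is $|F|$ plus the $\tfrac23$ they place on $P_k$, which exceeds $|F|$.

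The device you are missing is to leave the pad integral: share $1$ on its target and $0$ on a second allowed machine. Its placement equality and the tight nonnegativity there pin both of its coordinates. That machine may then stay slack in $x$ and receive the pad in the optimum. This is the paper's dump, with two non-big pads of size $\tfrac13$ per partner.

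The obstacle you end on is not one, and the remedy you sketch would break the construction. Restoring the crammer's mass to $2$ means pushing share onto the blockers, whose load rows are already at exactly $1$, or onto the sink, which then gains a slot and loses the deficiency. A share-$1$ pad of size below $\tfrac12$ does enter the slot structure, since Step~2 takes only big jobs, but it is harmless. The crammer's residual mass is the integer $2$, and a pad of size $\varepsilon-2\varepsilon^2<\varepsilon$ sorts last. It therefore occupies exactly $[2,3)$, a new slot adjacent to it alone, and every other adjacency is unchanged. The count goes through with the pad in $J$ and its slot in $S$, or with both left out. The same shift handles the partner pads, whose machine has residual mass $0$ after Step~2.

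So your dedicated crammer pad works as it stands, but your optimum does not follow from Theorem~\ref{thm:74bfalse}'s schedule. That schedule puts both $A$ jobs on the crammer, for load $1+\varepsilon-2\varepsilon^2$. Send the $A_k$ to their blockers instead, which then carry $\tfrac12+\rho<1$.
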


\begin{proof}
The instance is Theorem~\ref{thm:74bfalse}'s family with one \emph{padding job}
added per machine whose load row must be made tight. Because that family and
its parameters are needed by name below, we write the padded instance out in
full. Keep $\varepsilon = \tfrac1{2g}$ and $\rho = \tfrac13 - \tfrac1{4g}$ from
Theorem~\ref{thm:74bfalse}, and set
\[
  \nu \;=\; \varepsilon(1 - 2\varepsilon), \qquad
  \sigma \;=\; \frac{1 - g\nu}{2} \;=\; \frac14 + \frac1{4g} .
\]
In words: $\nu$ is the size of the pad that fills a crammer and $\sigma$ the
size of each of the two pads that fill the sink; both are chosen to be exactly
what is left over. For $g \ge 2$ all four quantities are positive and
$\nu < \varepsilon < \sigma < \tfrac12$, an ordering used twice.

The machines, $10g+3$ of them, are the $2g$ blockers $B_k$, the $2g$ partners
$P_k$, the $g$ crammers $C_j$ and the sink $C_0$ of
Theorem~\ref{thm:74bfalse}, together with $5g+2$ \emph{dump} machines, one
private to each padding job. The jobs, $12g+2$ of them, are
Theorem~\ref{thm:74bfalse}'s own $7g$ --- unchanged in size and in share ---
and $5g+2$ pads:
\[
\begin{array}{lccl}
  \text{job} & \text{size } p_j & \text{shares } x_{ij} & \text{count}\\[2pt]
  Q_k   & 1              & B_k : \tfrac23,\ P_k : \tfrac13 & 2g\\
  A_k   & \tfrac12       & B_k : \varepsilon,\ C_{\lfloor k/2\rfloor+1} : 1-\varepsilon & 2g\\
  D_k   & \rho           & B_k : 1 & 2g\\
  f_j   & \varepsilon    & C_j : 2\varepsilon,\ C_0 : 1-2\varepsilon & g\\
  R_{k,t} & \tfrac13     & P_k : 1,\ Z^P_{k,t} : 0 & 4g\\
  S_j   & \nu            & C_j : 1,\ Z^C_j : 0 & g\\
  S_{0,t} & \sigma       & C_0 : 1,\ Z^0_t : 0 & 2
\end{array}
\]
The pads sit two to a partner, one to a crammer and two on the sink: $5g+2$
pads over the $3g+1$ machines whose rows must be made tight. Each pad is
admissible on its target and on its own private dump, with share $1$ on the
first and $0$ on the second.

\emph{Why the pad changes nothing it must not.} Both of a pad's shares sit at
their bounds, so it contributes no edge to the \emph{fractional} support. Its
size is at most $\sigma \le \tfrac38 < \tfrac12$, so it is never big and Step~2
never takes it. It goes to its private dump in the optimum, which is otherwise
empty, so $\OPT$ stays $1$. And on a machine whose post-Step-2 fractional mass
is an \emph{integer}, a share-$1$ job that sorts strictly outside the existing
jobs raises the mass and the slot count by exactly one and occupies the new
slot alone, leaving every other adjacency fixed, because a common integer shift
maps slots to slots. That integrality is a real hypothesis rather than
bookkeeping: the one machine with fractional mass, the blocker at
$1 + \varepsilon$, is the one machine that receives no pad.

\emph{Why it is a vertex.} We use the standard characterisation, stated here
because everything below is organised around it: a feasible point $x$ is a
vertex of a polyhedron exactly when the constraints holding with equality at
$x$ have rank equal to the number of variables. (If the rank is smaller there
is a nonzero $d$ in the kernel of the tight rows; every slack constraint stays
satisfied for $|t|$ small, so $x \pm td$ are both feasible and $x$ is not
extreme. Conversely a rank-$N$ tight system has $x$ as its unique solution.)
One direction of that equivalence is all we need, and it is the easy one: it
suffices to exhibit $N$ independent tight rows. In particular nothing below
depends on our having listed \emph{every} tight constraint, since extra rows
can only raise a rank and the rank is already capped at $N$; the same remark
disposes of the box constraints $x_{ij} \le 1$, which are implied by a job's
placement equality together with nonnegativity and so lie in the span of rows
we already have.

The variables are the $x_{ij}$ over \emph{allowed} pairs --- the instance's
pairs, not the support's, so a pad's zero share at its dump is a variable like
any other. Counting the table row by row,
\[
  N \;=\; \underbrace{4g}_{Q} + \underbrace{4g}_{A} + \underbrace{2g}_{D}
        + \underbrace{2g}_{f} + \underbrace{10g+4}_{\text{pads}}
    \;=\; 22g + 4 ,
\]
the $10g+4$ being two variables for each of the $5g+2$ pads.

\emph{Step 1: which constraints are tight, and in particular that every
non-dump load row is.} This is the step the padding exists to buy, so it is
done by computation of the four loads rather than asserted. Blocker $B_k$
carries $\tfrac23 \cdot 1$ from $Q_k$, $\varepsilon \cdot \tfrac12$ from $A_k$
and $1 \cdot \rho$ from $D_k$, that is
$\tfrac23 + \tfrac1{4g} + \tfrac13 - \tfrac1{4g} = 1$. Partner $P_k$ carries
$\tfrac13$ from $Q_k$ and $\tfrac13$ from each of its two pads, so $1$; this is
the equation the construction was built to obtain, and without the partner pads
it would read $\tfrac13 < 1$. Crammer $C_j$ carries its two half-sized jobs at
$(1-\varepsilon)\tfrac12$ each, its filler at $2\varepsilon \cdot \varepsilon$
and its pad at $\nu$, totalling
$(1-\varepsilon) + 2\varepsilon^2 + \varepsilon - 2\varepsilon^2 = 1$. The sink
$C_0$ carries the $g$ fillers at $(1-2\varepsilon)\varepsilon = \nu$ each and
its two pads at $\sigma$ each, totalling $g\nu + 2\sigma = 1$ by the definition
of $\sigma$. Every dump carries one job at share $0$, so its load is $0$ and
its row is slack. Hence exactly $5g+1$ load rows are tight.

The other tight rows are the $12g+2$ placement equalities, tight by definition,
and the $5g+2$ nonnegativities $x_{ij} = 0$, one per pad at its dump. No other
nonnegativity is tight: this needs $g \ge 2$ twice, for the filler's sink share
$1 - 2\varepsilon = 1 - \tfrac1g$ and for the crammer pad's size
$\nu = \varepsilon(1-2\varepsilon)$, both of which vanish at $g = 1$. And no
big-job row is tight: only the unit jobs $Q_k$ exceed $T/2$ --- the $A_k$ have
size \emph{exactly} $\tfrac12$, and $\rho$, $\varepsilon$, $\tfrac13$, $\nu$,
$\sigma$ are all below $\tfrac12$ --- and they sit at shares $\tfrac23$ on a
blocker and $\tfrac13$ on a partner, so the largest big-job sum anywhere is
$\tfrac23 < 1$. The tight system is therefore
$(12g+2) + (5g+2) + (5g+1) = 22g+5$ rows on $22g+4$ columns.

\emph{Step 2: the pads split off, and the split is triangular rather than
diagonal.} Let $W$ be the coordinate subspace spanned by the $10g+4$ pad
variables. For a pad $r$ with target $i$ and dump $z$, its tight nonnegativity
row is the coordinate vector $e_{(r,z)}$ and its placement equality is
$e_{(r,i)} + e_{(r,z)}$; the two together span $\{e_{(r,i)}, e_{(r,z)}\}$.
Ranging over the $5g+2$ pads, those $10g+4$ rows span exactly $W$, so they are
independent and contribute $\dim W = 10g+4$ to the rank. It is worth saying
what is and is not claimed here, because the natural shorter phrasing is false:
the pad rows meet no other column, but other rows do meet the pad columns ---
each target machine's load row carries the pad's size in the pad column. The
system is block \emph{triangular}, not block diagonal, and triangularity is
enough. Writing $\pi$ for the projection that forgets the pad coordinates,
\[
  \operatorname{rank}(\text{all tight rows})
  \;=\; (10g+4) \;+\; \operatorname{rank}\bigl(\pi(\text{the remaining rows})\bigr),
\]
since the pad rows already account for a complement of $\ker \pi$. In words:
the pads are nailed down by their own two constraints, so the rest of the
system may be read as though the pads were not there.

\emph{Step 3: what remains is the incidence matrix of a tree.} Let $H$ be the
bipartite graph whose nodes are the $5g+1$ non-dump machines together with
Theorem~\ref{thm:74bfalse}'s $7g$ jobs $Q_k, A_k, D_k, f_j$, with one edge per
allowed machine--job pair among them. Then $H$ has
$(5g+1) + 7g = 12g+1$ nodes and, counting $2$ edges from each $Q_k$, $2$ from
each $A_k$, $1$ from each $D_k$ and $2$ from each $f_j$,
$4g + 4g + 2g + 2g = 12g$ edges. It is connected, which we check by walking
outward from the sink: $C_0$ is joined to every filler $f_j$ and $f_j$ to its
crammer $C_j$, reaching all $g$ crammers; $C_j$ is joined to $A_{2j-2}$ and
$A_{2j-1}$ and each $A_k$ to its blocker $B_k$, reaching all $2g$ half-sized
jobs and all $2g$ blockers; and $B_k$ is joined to $D_k$ and to $Q_k$, and
$Q_k$ to its partner $P_k$, reaching the third jobs, the unit jobs and all
$2g$ partners. Every node is on that list, so $H$ is connected; a connected
graph with one fewer edge than node is a tree. Nothing in this paragraph
depends on $g$ beyond $g \ge 2$, which is the sense in which the argument is
uniform.

The remaining rows are the $7g$ placement equalities of $Q_k, A_k, D_k, f_j$
and the $5g+1$ tight load rows. Projected off the pad coordinates, the
placement row of job $j$ has entry $1$ on each edge of $H$ at $j$, and the load
row of machine $i$ has entry $p_j$ on the edge $(i,j)$; every other entry is
zero, because no pad column survives $\pi$ and no non-pad job is allowed on a
dump. So $\pi(\text{remaining rows})$ is a $(12g+1) \times 12g$ matrix indexed
by the nodes and edges of $H$. Scale each column $(i,j)$ by $1/p_j$, legitimate
because the relaxation's job sizes are strictly positive and column scaling
does not change rank; the machine rows become $0/1$, and job row $j$ becomes
$1/p_j$ on each of its edges, which a further row scaling by $p_j$ also makes
$0/1$. What is left is exactly the unsigned node--edge incidence matrix of $H$.

\emph{Step 4: a tree's incidence columns are independent.} Suppose
$\sum_e c_e (\chi_{u(e)} + \chi_{v(e)}) = 0$, the sum over the edges of a
finite tree with ends $u(e), v(e)$. A finite tree with at least one edge has a
leaf $u$; reading the $u$-coordinate of the sum gives $c_e = 0$ for the unique
edge $e$ at $u$. Delete $u$ and $e$, which leaves a tree, and repeat. Every
coefficient vanishes, so the columns are independent and the rank is the number
of edges, $12g$. (Independence is what is needed, and it is why the argument is
insensitive to the sign convention: for a bipartite graph the signed and
unsigned incidence matrices agree up to negating the rows on one side.)

\emph{Conclusion.} $\operatorname{rank} = (10g+4) + 12g = 22g+4 = N$, so the
tight constraints have rank equal to the number of variables and $x$ is a
vertex, for every $g \ge 2$. The $22g+5$ tight rows therefore satisfy exactly
one dependency up to scale, and it is visible: adding the $5g+1$ load rows and
subtracting the size-weighted placement equalities --- with each pad's
equality corrected by its own nonnegativity --- cancels every entry, since each
variable is counted once from each side. That single redundancy is the tree's
extra node, and it is why untightening one machine leaves the rank full while
untightening two does not.

\emph{Why the bound survives.} The deficiency is unchanged at every $g$. The
pads add $g + 2$ competing jobs and $g + 2$ crammer-and-sink slots, so the
increments cancel: $4g + 2$ jobs against $4g + 1$ slots, against the unpadded
$3g$ against $3g - 1$. Theorem~\ref{thm:74bfalse}'s counting therefore applies
with the pads carried on both sides, and the bound is attained rather than
merely met.

Figure~\ref{fig:vertexpad} draws the move this proof makes against the one it
deliberately does not. The point is not walked to a vertex, because walking is
what destroys the edge the bound rests on; the instance is built so that its
point is a vertex already.
\end{proof}

\begin{figure}[htbp]
\centering
\includegraphics[width=0.97\textwidth]{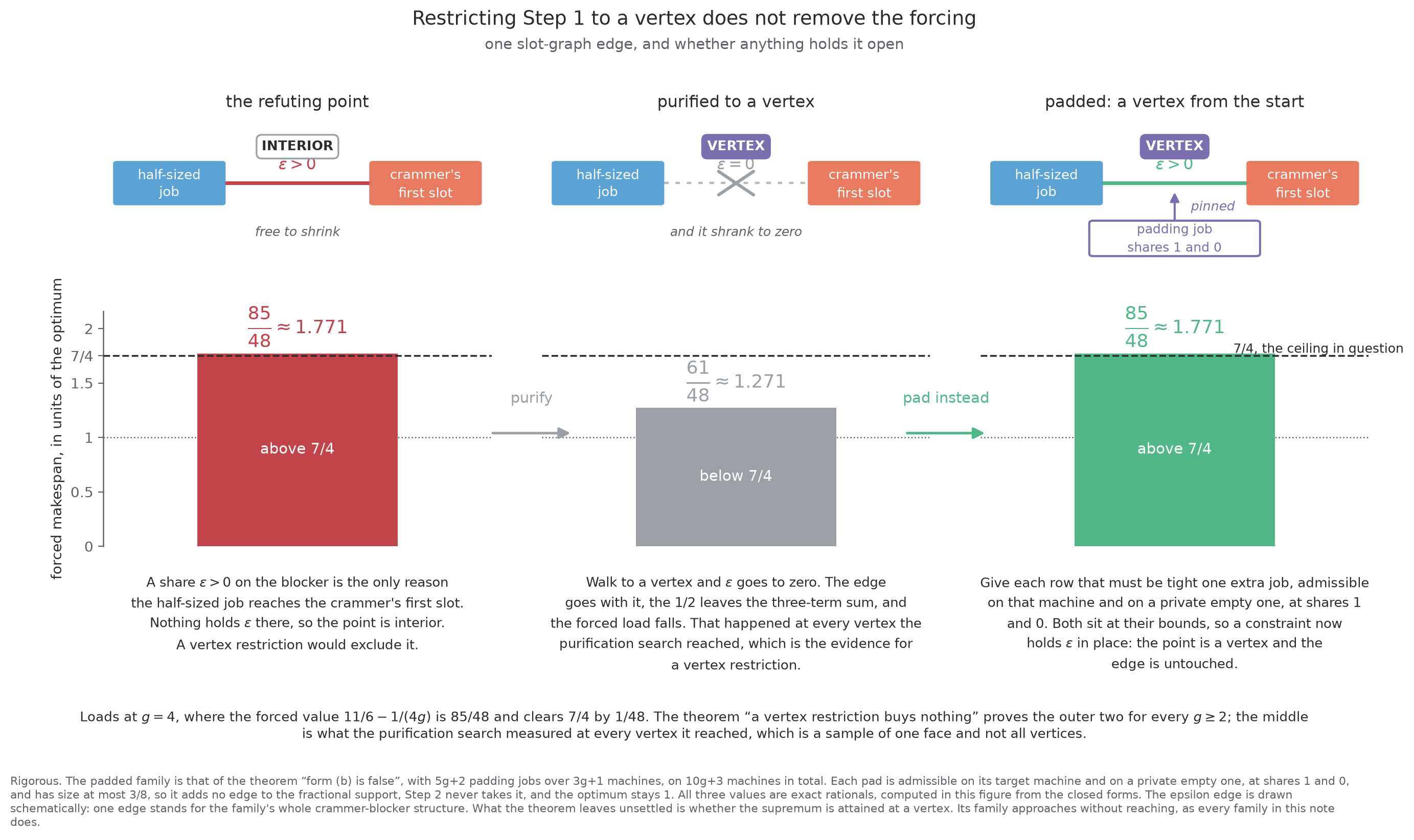}
\caption{Why the vertex restriction fails, and why it looked as though it
should not. The whole construction turns on one slot-graph edge: the
half-sized job $A_k$ reaches its blocker's first slot only because it holds a
share $\varepsilon > 0$ there, and adjacency here is positivity, so the edge
survives any share however small and vanishes only when the share is exactly
zero. \emph{Left:} at Theorem~\ref{thm:74bfalse}'s point nothing holds
$\varepsilon$ where it is, which is what makes the point interior.
\emph{Middle:} purification drives $\varepsilon$ to zero, the edge goes, the
$\tfrac12$ leaves the three-term sum, and the forced load falls from
$\tfrac{11}{6} - \tfrac1{4g}$ to $\tfrac43 - \tfrac1{4g}$ --- $85/48$ to
$61/48$ at $g = 4$. That is real, and it is the evidence that made a vertex
restriction look promising. \emph{Right:} Theorem~\ref{thm:pinvertex} does
not push this point to a vertex. It adds padding jobs, whose two shares sit
at their bounds, so a tight constraint now holds $\varepsilon$ in place; the
point is a vertex from the start and the edge is untouched. The single edge
drawn stands for the family's whole crammer--blocker structure.}
\label{fig:vertexpad}
\end{figure}

\noindent
\emph{What is checked, and how far.} Every step of the proof is verified in
exact rational arithmetic against the deposited builder at $g = 2, \dots, 40$,
and all three conclusions --- vertexhood by rank, $\OPT = 1$, and the minimum
over every valid matching --- are checked outright at $46$ values of $g$ up to
$g = 225$. These are limits of the \emph{check}, not of the theorem; the
independent rank computation and the four controls are in
Appendix~\ref{app:computational}.

The bound is a minimum over \emph{every} valid matching, not the value of one
rule, so it holds for any Step-3 rule whatever, the oracle of
Section~\ref{sec:variant} included.

\noindent
\emph{Two boundary readings the construction sits on.} The half-sized jobs $A_k$ have size \emph{exactly}
$T/2$, so the theorem needs \emph{big} to mean size $> T/2$ rather than
$\ge T/2$; under the second reading the crammer's big-job row would read
$2 - 2\varepsilon > 1$ and the point would not be feasible at all. And the unit
jobs sit at share \emph{exactly} $\beta = \tfrac23$, so the theorem needs
Step~2's test to be non-strict; under a strict test Step~2 declines them and the
forced load falls to $\tfrac53 < \tfrac74$. Both are the published readings and
both were checked in the sources rather than inferred:
\cite[Step~2]{WS16} reads ``If $j$ is a big job and $x_{ij} \ge \beta$, then
assign $j$ to $i$'', and \cite{EKS14} defines the big edges by
$p_e > 0.5$. Theorem~\ref{thm:74bfalse}'s unpadded family stands on the same
two equalities, so neither is introduced here; what is new is only that they are
now written down.

\noindent
\emph{The bound does not depend on the half-open convention, and that is worth
saying because the $11/6$ ceiling of \cite{Paper1} does.} A pad has share exactly
$1$, so its interval begins and ends on slot boundaries, which is the
configuration the convention exists to settle. Half-open, the pad occupies its
own new slot alone. Closed, the padding \emph{creates} adjacencies that do not
exist unpadded: each crammer gains a third slot and the filler's fraction ends
exactly at its boundary. Counted correctly the deficiency is $1$ either way ---
$4g + 2$ competing jobs against $4g + 1$ crammer-and-sink slots, the pads adding
$g + 2$ of each so the increments cancel --- and the minimum over valid
matchings is unchanged, enumerated under both readings at $g \le 20$
(\texttt{p29amb\_\allowbreak crux\_\allowbreak enum.py 20}; the $g = 4$
witness is decided under both readings by
\texttt{p29amb\_\allowbreak crux\_\allowbreak halfopen.py}, and the two
counts below at $g = 2, \dots, 40$ by
\texttt{p29amb\_\allowbreak crux\_\allowbreak mscount.py}). The narrower
count $3g$ against $3g - 1$ is the unpadded one and holds only under the
half-open reading.

\noindent
\emph{Theorem~\ref{thm:pinvertex} does not contradict
Remark~\ref{rem:74bvertex}'s searches, and the difference is the point.} \emph{Purifying} Theorem~\ref{thm:74bfalse}'s own
family to a vertex does destroy the lower bound: over $400$ purification paths at
$g \le 4$, every one certified a vertex, the ratio falls from $\tfrac{85}{48}$
to $\tfrac{61}{48}$, the $\tfrac12$ gone. Padding does not purify that
instance; it builds a different one, in which the constraint that pins the small
share is present from the start. A vertex carrying the bound cannot be reached
from the old family, and can be arranged in a new one.

\begin{measurement}[what remains measured here]\label{meas:pinvertex}
Two neighbours of Theorem~\ref{thm:pinvertex} are checked and not proved.
\emph{(a)} The perturbed family that pushes strictly inside both of
\cite{WS16}'s boundary tests --- Step~2's share at exactly $\beta$, and
\emph{big} at size exactly $T/2$ --- reaches $86407/48000$ at $g = 8$ with both
margins at $1/1000$
(\texttt{p29amb\_\allowbreak pinned\_\allowbreak vertex\_\allowbreak robust.py}).
The padding lemma does not transfer to it: the
perturbation moves the crammers and the sink off integer mass and changes which
jobs are big,
so this is measured at the parameters stated and is not claimed beyond them.
\emph{(b)} At $g = 1$ the family degenerates, two jobs vanish, and nothing is
claimed.
\end{measurement}

\noindent
\emph{Both sources of slack are removed at once.} The family's fractional
solution is a vertex, so the lower bound rests on no interior choice in Step~1;
and the bound is over \emph{every} valid matching, so it rests on no adverse
choice in Step~3. The $\tfrac{11}{6}$ obstruction survives eliminating both
kinds of nondeterminism simultaneously, which is more than either of the two
preceding theorems says on its own.

\begin{proposition}[the dumps can be shared]\label{prop:compressed}
For every $g \ge 2$ the $5g+2$ private dump machines of
Theorem~\ref{thm:pinvertex} can be replaced by a single one. The result is an
instance on $5g+2$ machines and $12g+2$ jobs, with $T = \OPT = 1$ and a point
$x$ that is a vertex of the relaxation, on which every valid slot matching has
makespan exactly $\tfrac{11}{6} - \tfrac1{4g}$.
\end{proposition}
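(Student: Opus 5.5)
The plan is to keep everything in Theorem~\ref{thm:pinvertex}'s padded instance except the dumps. The $5g+2$ private dumps $Z^P_{k,t}, Z^C_j, Z^0_t$ become one machine $Z$, on which every pad is admissible at share $0$. The sizes, the shares and the non-dump machines are all unchanged, so there are $(5g+1)+1 = 5g+2$ machines and still $12g+2$ jobs. I would then re-run the three parts of that proof (feasibility and vertexhood, the forced load, the optimum) and check which of them notice the merge.

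\emph{Feasibility and vertexhood.} Every non-dump machine carries exactly what it carried before. The single dump carries only zero shares, so its load row and its big-job row are slack. The variable set is still one variable per allowed pair, with a pad's pair $(r,Z)$ replacing $(r,Z_r)$, so $N = 22g+4$ as before. The tight rows are the same in number: the placements, one nonnegativity $e_{(r,Z)}$ per pad, and the $5g+1$ non-dump load rows. These $e_{(r,Z)}$ are still distinct coordinate vectors, because they lie in distinct columns even though they sit on one machine. So Step~2 of the earlier proof goes through word for word: each pad's two rows span its two columns, and together they span the pad subspace $W$. After projecting off $W$ the remaining system is the same incidence matrix of the tree $H$, since $Z$ is not a node of $H$. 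The rank is therefore $(10g+4)+12g = N$ again, and $x$ is a vertex.

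\emph{Forced load.} $Z$ carries no fractional mass and hence no slots. Step~2 and the slot structure on every other machine are therefore identical to before, and the deficiency count of $4g+2$ jobs against $4g+1$ slots is unchanged. The upper half of Lemma~\ref{lem:116attained}'s rigidity argument also carries over: the padded machines gain only a share-$1$ pad in their own new slot, and every pad has size below $\tfrac12$. So every valid matching has makespan exactly $\tfrac{11}{6}-\tfrac1{4g}$.

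\emph{The optimum, which I expect to be the main obstacle.} The lower bound $\OPT \ge p_{Q_0} = 1$ is trivial; the question is a schedule of makespan $1$, now that the dump has capacity $1$ shared among all pads. The two private dumps $Z^P_{k,t}$ of the partner pads existed because each $Q_k$ has to leave its blocker in the optimum: $B_k$ already carries $D_k$, and $1+\rho>1$. So $Q_k$ goes to $P_k$, and both pads $R_{k,t}$ of size $\tfrac13$ must then go to the dump. First I would use the other freedoms to keep every other pad at home. Send each $A_k$ to its blocker, giving load $\tfrac12+\rho<1$. Send each filler to its crammer, so the crammer carries $\varepsilon+\nu<1$ with its pad, and the sink keeps its two pads at $2\sigma<1$. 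This leaves only the $4g$ partner pads, total $\tfrac{4g}{3}>1$, for a single dump, so the literal statement with the table's sizes does not give $\OPT = 1$ by any schedule of this shape.

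I would therefore check whether the intended merge keeps a private dump for each partner (or each partner pair), sharing only the crammer and sink dumps. Alternatively the partner pads might be made admissible on some machine with spare room in the optimum, such as the sink or a crammer; their zero share there would then be the tight nonnegativity, and the tree argument is unaffected as long as that pad column stays isolated. Settling this capacity count is the first thing I would do, before writing the rank and count parts out.
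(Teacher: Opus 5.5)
Your diagnosis is correct: if every pad's zero share is moved to one machine $Z$, the optimum cannot be $1$. Any schedule of makespan $1$ must put each $Q_k$ on $P_k$, since on $B_k$ it meets $D_k$ for $1+\rho$. That fills $P_k$, so all $4g$ partner pads go to $Z$, for load $4g/3>1$. The statement's ``replaced by a single one'' therefore cannot mean that every pad's zero share lands on the same machine, and the paper's proof does not read it that way.

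The gap is that your proposal stops at this diagnosis, and neither repair you suggest works. Private dumps for the partner pads (per pad or per pair) give at least $7g+2$ machines, not $5g+2$. The sink and the crammers cannot host the partner pads either, by a volume count. Once every $Q_k$ is on $P_k$ and every $D_k$ on $B_k$, the following must all fit on the crammers, the sink and $Z$:
\begin{itemize}
\item the fillers, of total size $g\varepsilon=\tfrac12$;
\item the crammer and sink pads, of total size $g\nu+2\sigma=1$;
\item the partner pads, of total size $\tfrac{4g}{3}$.
\end{itemize}
Their total $\tfrac32+\tfrac{4g}{3}$ exceeds the capacity $g+2$ whenever $g\ge 2$.

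The missing idea is to give each partner pad $R_{k,t}$ its zero share on its own blocker $B_k$, an existing machine. Only the $g+2$ crammer and sink pads get their zero share on the one new machine $Z$, which gives $5g+2$ machines. The schedule of makespan $1$ then reverses your choice for the half-sized jobs:
\begin{itemize}
\item $Q_k$ goes to $P_k$;
\item each $A_k$ goes to its crammer, where the two fill it to exactly $1$;
\item each filler goes to the sink, for load $\tfrac12$;
\item $D_k$ and both $R_{k,t}$ go to $B_k$, for load $\rho+\tfrac23=1-\tfrac1{4g}$;
\item the remaining pads go to $Z$, for load $g\nu+2\sigma=1$, which is exactly what $\sigma$ was defined to achieve.
\end{itemize}
Your rank and forced-load paragraphs carry over unchanged under this modification. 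A zero share on $B_k$ adds no fractional mass or adjacency there, so Step~2 and every slot structure are as before. Each pad's placement row and nonnegativity row still span its two columns. $B_k$'s load-row entry in the new column disappears under the projection off the pad coordinates, so the same tree $H$ remains.
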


\begin{proof}
Send each partner pad's zero share to its own blocker $B_k$ rather than to a
private machine, and every crammer and sink pad's zero share to one shared
machine $Z$. Nothing else changes, and no share moves off zero, so the
fractional point, the slot structure, the deficiency count and the forced load
are the ones computed in Theorem~\ref{thm:pinvertex}'s proof. The rank argument
is unchanged as well: each pad still contributes exactly two variables and two
tight rows --- its placement equality and its nonnegativity at the machine where
its share is zero --- which span those two coordinates, and projecting them away
leaves the same $(12g+1) \times 12g$ incidence matrix of the same tree. The
tight load rows are still the $5g+1$ of the blockers, partners, crammers and
sink, since $Z$ carries only zero shares and its row is slack. So the rank is
again $(10g+4) + 12g = 22g+4$, the number of variables, and $x$ is a vertex.

Only the integral schedule changes. Send $Q_k$ to $P_k$, $A_k$ to its crammer,
and $D_k$ together with both partner pads $R_{k,1}, R_{k,2}$ to $B_k$, whose
load is then $\rho + \tfrac23 = 1 - \tfrac1{4g}$; send every filler to the sink,
for $g\varepsilon = \tfrac12$, and every remaining pad to $Z$, whose load is
$g\nu + 2\sigma = 1$ by the definition of $\sigma$. The makespan is $1$, and a
job of size $1$ forces $\OPT \ge 1$.
\end{proof}

\noindent
The compression is worth stating because the machine count is what
Section~\ref{sec:open}'s rate question counts: it halves the leading constant.
The construction was also rebuilt from Section~\ref{sec:prelim}'s definitions
and checked outright at $g = 2, 3, 4$ --- feasibility, rank $22g+4$ of $22g+4$,
the minimum over every valid matching, and the schedule above --- with the
unpadded point coming back short of full rank each time, which is what the
padding is for.

\begin{corollary}[the vertex-restricted constants]\label{cor:vertexconst}
Let $c^\star_{\mathrm{vert}}$ be the least constant such that, for every
instance, every $T$ at which the relaxation is feasible and every \emph{vertex}
$x$ of the relaxation at $T$, some valid slot matching has makespan at most
$c^\star_{\mathrm{vert}} T$. Then
\[
  c^\star_{\mathrm{vert}} \;=\; \tfrac{11}{6} \;=\; c^\star ,
\]
and the same value is the supremum of $\ALG/\OPT$ over vertex Step-1 solutions,
whether Step~3 returns an arbitrary valid matching or a makespan-minimizing one.
None of these suprema is attained.
\end{corollary}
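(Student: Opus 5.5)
The plan is a sandwich: an upper bound inherited from the general ceiling, and a lower bound from the vertex family of Theorem~\ref{thm:pinvertex}. Both halves need only be checked on the smaller class of vertex Step-1 solutions.

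\emph{Upper bound.} Every vertex is a feasible point of the relaxation at $T$, so the $11/6$ ceiling of \cite{Paper1} applies to it unchanged. Under every valid matching, and in particular under some valid matching, every machine finishes strictly below $\tfrac{11}{6}T$. Hence $c = \tfrac{11}{6}$ is admissible and $c^\star_{\mathrm{vert}} \le \tfrac{11}{6}$. The same inequality also shows that the value is attained by no vertex and no matching.

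For the optimum-relative suprema I run the scheme at $T = T_{\mathrm{LP}}$ and use $T_{\mathrm{LP}} \le \OPT$, exactly as in the proof of Corollary~\ref{cor:oracle}. This gives $\ALG < \tfrac{11}{6}\OPT$ for every vertex and every valid matching. That covers both the arbitrary-matching rule and the makespan-minimizing one, and again shows that neither supremum is attained.

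\emph{Lower bound.} I invoke Theorem~\ref{thm:pinvertex}, or equivalently Proposition~\ref{prop:compressed}. For each $g \ge 2$ it supplies an instance with $T = \OPT = 1$ and a vertex $x$ on which every valid matching has makespan at least $\tfrac{11}{6} - \tfrac1{4g}$, with equality attained.

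\begin{itemize}
\item Against the threshold, no constant below $\tfrac{11}{6} - \tfrac1{4g}$ is admissible for vertices. Letting $g \to \infty$ gives $c^\star_{\mathrm{vert}} \ge \tfrac{11}{6}$.
\item Since $\OPT = 1$ there, the same numbers are ratios $\ALG/\OPT$.
\item The bound holds for every valid matching, so it binds the best matching, which attains exactly $\tfrac{11}{6} - \tfrac1{4g}$. It therefore also binds the worst matching.
\end{itemize}

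Both optimum-relative suprema are thus at least $\tfrac{11}{6}$.

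\emph{Identifying $c^\star$.} The equality $c^\star_{\mathrm{vert}} = c^\star$ follows by quoting Theorem~\ref{thm:116}, which gives $c^\star = \tfrac{11}{6}$. One could also note $c^\star_{\mathrm{vert}} \le c^\star$ directly, since the vertex class is smaller; the sandwich above makes that inequality unnecessary.

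\emph{Main obstacle.} The one point needing care is that $T = 1$ is a legitimate threshold for the lower-bound instances: the quantifier runs over every feasible $T$, and the threshold relevant to the optimum-relative scale is $T_{\mathrm{LP}}$. I would check this from the size-$1$ job $Q_k$, which together with the zeroing rule $x_{ij} = 0$ for $p_j > T$ makes every smaller $T$ infeasible, so that $T_{\mathrm{LP}} = 1$. That check is routine; everything else is a direct combination of earlier results.
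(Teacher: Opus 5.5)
Your proposal is correct and follows the paper's proof essentially step for step. The upper half applies the strict $11/6$ ceiling of \cite{Paper1} at vertices, which are feasible points, and passes to the optimum scale through $T_{\mathrm{LP}} \le \OPT$; the lower half uses Theorem~\ref{thm:pinvertex}'s vertices at $T = \OPT = 1$ (least feasible because of the size-$1$ jobs), where the bound holds over every valid matching, and lets $g \to \infty$, with your appeal to Theorem~\ref{thm:116} for $c^\star_{\mathrm{vert}} = c^\star$ spelling out what the paper leaves implicit.
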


\begin{proof}
\emph{Upper.} The $11/6$ ceiling of \cite{Paper1} holds at every feasible point
and a vertex is one, so on a vertex every valid matching --- in particular a
minimizing one --- finishes strictly below $\tfrac{11}{6}T$. Running at the
least feasible threshold and using $T_{\mathrm{LP}} \le \OPT$ turns that into
$\ALG < \tfrac{11}{6}\OPT$. So $\tfrac{11}{6}$ is admissible on both scales and
reached on neither.

\emph{Lower.} Theorem~\ref{thm:pinvertex} gives, for every $g \ge 2$, a vertex
at $T = \OPT = 1$ on which \emph{every} valid matching finishes at least
$\tfrac{11}{6} - \tfrac1{4g}$; that $T = 1$ is least feasible there because the
unit jobs have size $1$ and the relaxation zeroes $x_{ij}$ whenever $p_j > T$.
Since $T = \OPT = 1$, the same number bounds both ratios, and since the bound is
over every valid matching it bounds the minimizing one too. Letting
$g \to \infty$ gives $\tfrac{11}{6}$ from below on both scales.
\end{proof}

\noindent
Measurement~\ref{meas:thrvertex} below reaches the threshold-relative half of
this corollary by enumeration on a smaller construction; the corollary does not
depend on it, and a finite enumeration could not have supplied a supremum in any
case.

\begin{corollary}[no Step-3 selection rule improves the constant]\label{cor:anyrule}
Let $S$ be any rule --- deterministic or randomized, of any computational power
--- that, given the slot structure induced by a Step-1 solution, returns a valid
matching, or a probability distribution over valid matchings. Replacing Step~3
by $S$ leaves the worst-case constant at $\tfrac{11}{6}$, against the threshold
and against the optimum alike, and whether Step~1 may return any feasible
solution or only a vertex.
\end{corollary}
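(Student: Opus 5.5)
The plan is to prove two bounds separately, an upper bound that needs nothing about the rule $S$ and a lower bound that quantifies over every valid matching, so that whatever $S$ returns is covered. Randomization is handled by working with the support of the distribution. If every valid matching in $\mathcal{M}(I,x)$ has makespan at least $c$, then every matching in the support of $S$'s distribution has makespan at least $c$. Hence the expected makespan and the worst realized makespan are both at least $c$. In the other direction, if every valid matching finishes strictly below $\tfrac{11}{6}T$, then so does every realization and so does the expectation. The computational power of $S$ never enters, because both bounds are statements about the whole set $\mathcal{M}(I,x)$.

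\emph{Upper half.} Invoke the $11/6$ ceiling of \cite{Paper1}. For every feasible $x$ and every valid matching, every machine's load is strictly below $\tfrac{11}{6}T$. Since $S$ returns a valid matching, or a distribution supported on valid matchings, its makespan stays below $\tfrac{11}{6}T$ on every outcome. To pass to the optimum scale, run at $T_{\mathrm{LP}}$ and use $T_{\mathrm{LP}}\le\OPT$, exactly as in Corollary~\ref{cor:oracle}. A vertex is a feasible point, so the same argument applies verbatim when Step~1 is restricted to vertices. This gives all four upper bounds at once.

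\emph{Lower half.} This splits into the two Step-1 regimes.
\begin{enumerate}
\item[(i)] If Step~1 may return any feasible solution, use Theorem~\ref{thm:74bfalse}'s family. There $T=T_{\mathrm{LP}}=\OPT=1$ and every valid matching has makespan at least $\tfrac{11}{6}-\tfrac1{4g}$. The same number therefore bounds both the threshold-relative ratio and the optimum-relative ratio.
\item[(ii)] If Step~1 is restricted to vertices, use Theorem~\ref{thm:pinvertex} for $g\ge2$. It gives a vertex at $T=\OPT=1$ with the same forced load over every valid matching. Least feasibility holds because the unit jobs have size $1$, as recorded in Corollary~\ref{cor:vertexconst}.
\end{enumerate}
Letting $g\to\infty$ drives both families to $\tfrac{11}{6}$ from below. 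Combined with the upper half, the constant is exactly $\tfrac{11}{6}$ in each of the four combinations of scale and Step-1 regime.

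The one step needing care is the randomized case, where the worst-case constant must be defined for a distribution. I would state the result under either natural reading, the expected makespan or the worst realization. Under both readings the lower bound holds pointwise on the support and the upper bound holds pointwise everywhere, so the choice of reading does not matter. One further point is that $S$ sees only the slot structure. This restriction never matters, since the bounds hold for every element of $\mathcal{M}(I,x)$ whatever information is used to pick it.
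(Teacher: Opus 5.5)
Your proposal is correct and follows the paper's proof. The ceiling of \cite{Paper1} bounds every valid matching from above, hence whatever $S$ returns, on every realization. The lower bound comes from a family at $T = \OPT = 1$ on which every valid matching is forced to $\tfrac{11}{6} - \tfrac1{4g}$, so it holds on both scales, and randomization is handled through the support of the distribution. The only difference is that the paper uses Theorem~\ref{thm:pinvertex}'s vertex family for both Step-1 regimes at once, since a vertex is in particular a feasible point; your separate appeal to Theorem~\ref{thm:74bfalse} in case~(i) is valid but not needed.
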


\begin{proof}
On the family of Theorem~\ref{thm:pinvertex} every valid matching finishes at
least $\tfrac{11}{6} - \tfrac1{4g}$, so whatever $S$ returns does too. For a
randomized $S$ every matching in the support satisfies the bound, so it holds
with probability one and not merely in expectation. Since $T = \OPT = 1$ it is a
bound on both scales. Let
$g \to \infty$. The ceiling of \cite{Paper1} bounds any valid matching, $S$'s
output included, from above.
\end{proof}

\begin{theorem}[the constant is invariant under all four settings]\label{thm:invariance}
For $X \in \{\text{feasible}, \text{vertex}\}$ and $M \in \{\text{arbitrary},
\text{minimum}\}$, write $R_{X,M}$ for the supremum of $\ALG/\OPT$ over all
instances and all executions in which Step~1 returns a solution of kind $X$ and
Step~3 a matching of kind $M$, the scheme being run at the least feasible
threshold with $\beta = \tfrac23$. Then
\[
  R_{\text{feasible},\text{arbitrary}} \;=\;
  R_{\text{feasible},\text{minimum}} \;=\;
  R_{\text{vertex},\text{arbitrary}} \;=\;
  R_{\text{vertex},\text{minimum}} \;=\; \tfrac{11}{6},
\]
and none of the four suprema is attained.
\end{theorem}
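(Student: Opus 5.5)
The plan is a sandwich: one upper bound that covers all four settings at once, and one lower-bound family that binds the most restrictive setting and so binds the other three.

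First, the ordering. A vertex is a feasible point, so vertex executions are a subset of feasible executions. Likewise a makespan-minimizing matching is a valid matching, so minimum executions are a subset of arbitrary ones. It follows that
\[
  R_{\text{vertex},\text{minimum}} \;\le\; R_{\text{vertex},\text{arbitrary}} \;\le\; R_{\text{feasible},\text{arbitrary}},
  \qquad
  R_{\text{vertex},\text{minimum}} \;\le\; R_{\text{feasible},\text{minimum}} \;\le\; R_{\text{feasible},\text{arbitrary}} .
\]
It therefore suffices to bound $R_{\text{feasible},\text{arbitrary}}$ from above by $\tfrac{11}{6}$ and $R_{\text{vertex},\text{minimum}}$ from below by $\tfrac{11}{6}$.

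\emph{Upper bound.} I would argue exactly as in Corollary~\ref{cor:oracle}. The $11/6$ ceiling of \cite{Paper1} holds for every feasible $x$ and every valid matching, and gives makespan strictly below $\tfrac{11}{6}T$. The scheme runs at $T = T_{\mathrm{LP}}$, and $T_{\mathrm{LP}} \le \OPT$ because an optimal schedule is itself feasible for the relaxation at $\OPT$. Hence every single execution satisfies $\ALG < \tfrac{11}{6}\OPT$. This strict inequality is what will later rule out attainment.

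\emph{Lower bound.} I would take Theorem~\ref{thm:pinvertex}'s family for $g \ge 2$. Each member has $T = \OPT = 1$, and $T = 1$ is least feasible because of the unit jobs, as argued in Corollary~\ref{cor:vertexconst}. Its point $x$ is a vertex, and every valid matching, in particular the minimizing one, has makespan at least $\tfrac{11}{6} - \tfrac1{4g}$. The ratio $\ALG/\OPT$ is therefore at least $\tfrac{11}{6} - \tfrac1{4g}$ in the vertex/minimum setting. Letting $g \to \infty$ gives $R_{\text{vertex},\text{minimum}} \ge \tfrac{11}{6}$, and by the ordering all four suprema equal $\tfrac{11}{6}$.

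\emph{Non-attainment.} If some instance and execution reached ratio $\tfrac{11}{6}$ in any of the four settings, that execution would also be a feasible/arbitrary execution, contradicting the strict inequality from the upper bound.

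The only delicate point is that the suprema are taken at the least feasible threshold. So the lower-bound instances must have $T = T_{\mathrm{LP}}$ exactly, which holds here, and must have $\OPT = T$, which Theorem~\ref{thm:pinvertex} supplies. The argument is otherwise just the assembly of Corollaries~\ref{cor:oracle}, \ref{cor:vertexconst} and \ref{cor:anyrule}.
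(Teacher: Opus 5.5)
Your proof is correct, but it is organised differently from the paper's. The paper proves the theorem row by row, crediting each equality to a separate result:
\begin{itemize}
\item the first is \cite{Paper1}'s $R_{\mathrm{WS}}$, whose lower bound comes from that note's three-machine $11/6$ family;
\item the second is Corollary~\ref{cor:oracle}, whose lower bound comes from Theorem~\ref{thm:74bfalse}'s interior-point family;
\item the third and fourth are Corollary~\ref{cor:vertexconst}, whose lower bound comes from Theorem~\ref{thm:pinvertex};
\item non-attainment is the strict ceiling applied to each.
\end{itemize}

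You instead observe that the four execution sets are nested under inclusion. It then suffices to bound the largest setting from above and the smallest from below. The ceiling together with $T_{\mathrm{LP}} \le \OPT$ gives the upper bound, and a single strict inequality per execution settles non-attainment everywhere. Theorem~\ref{thm:pinvertex}'s family gives the lower bound, since $T_{\mathrm{LP}} = \OPT = 1$ there and the bound holds over every valid matching.

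What your route buys is economy. One family carries all four lower bounds, so this theorem needs neither \cite{Paper1}'s $11/6$ family nor Theorem~\ref{thm:74bfalse}; only \cite{Paper1}'s ceiling is imported. What the paper's route buys is provenance: each row is attributed to the result that settled it, with a witness native to that setting. The first row's witness has three machines, and the second row's needs no vertex machinery. The cost is three lower-bound constructions where one suffices.

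A small point: your closing sentence says the argument assembles Corollaries~\ref{cor:oracle}, \ref{cor:vertexconst} and~\ref{cor:anyrule}. In fact you use only the upper-bound half of Corollary~\ref{cor:oracle} and none of its lower-bound family.
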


\begin{proof}
The first is \cite{Paper1}'s $R_{\mathrm{WS}}$. The second is
Corollary~\ref{cor:oracle}. The third and fourth are
Corollary~\ref{cor:vertexconst}. Non-attainment is the strict ceiling of
\cite{Paper1} in each case, since every execution counted by any of the four is
a feasible solution together with a valid matching.
\end{proof}

\noindent
That is the note in one statement, and Figure~\ref{fig:invariance} draws it:
within the Wang--Sitters slot-rounding framework, neither of the two freedoms
accounts for the constant, separately or together.

\begin{figure}[htbp]\centering
\includegraphics[width=0.98\textwidth]{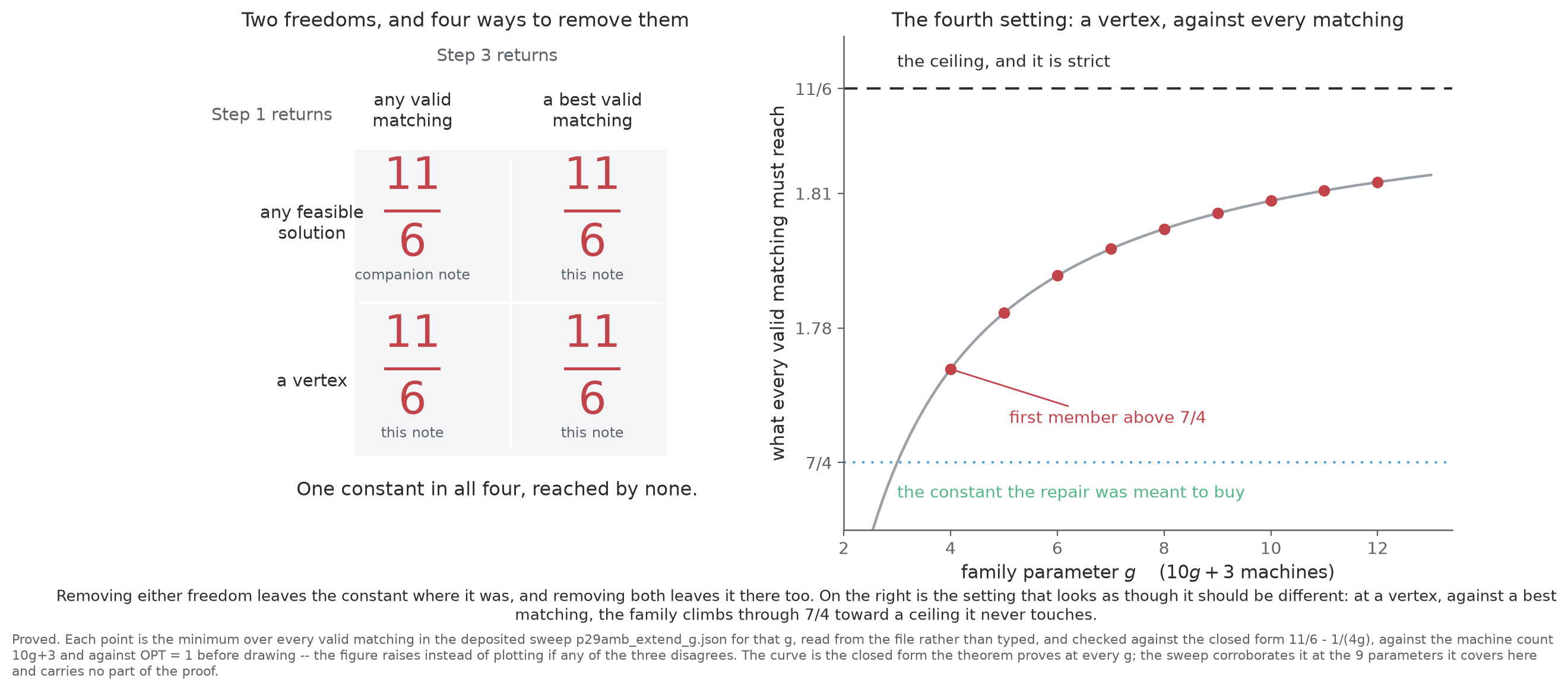}
\caption{The four settings of the two freedoms, and the one constant they
share. Step~1 may return any feasible solution or only a vertex; Step~3 any
valid matching or a best one. Every combination has supremum $\tfrac{11}{6}$,
and none of them attains it. The right panel is the setting that looks as though
it ought to be different: at a vertex, against a makespan-minimizing matching,
Theorem~\ref{thm:pinvertex}'s family is forced to $\tfrac{11}{6} - 1/(4g)$,
which rises with $g$, passes $\tfrac74$ at $g = 4$, and approaches the ceiling
without reaching it, the ceiling being strict. The points are the minima over
every valid matching recorded in the deposited certificates at the eight
parameters they cover; the curve is the closed form the theorem proves at every
$g \ge 2$.}
\label{fig:invariance}
\end{figure}

\noindent
\emph{What this closes, and what it does not.} The supremum of $\ALG/\OPT$
over \emph{vertices} is therefore no smaller than over all feasible points.
Restricting Step~1 to a vertex is a real restriction --- Wang and Sitters chose the Shmoys--Tardos rounding partly
because it \emph{does not} require one, calling that an advantage because the
technique then applies to any fractional assignment \cite{WS16} --- and
giving that up buys nothing here.

\begin{measurement}[the threshold-relative constant is unchanged at vertices]\label{meas:thrvertex}
Theorem~\ref{thm:116}'s family does not take the padding of
Theorem~\ref{thm:pinvertex} directly, and the obstruction is exact rather than a
failed search: three blockers \emph{share} a partner, and that partner's load
row and big-job row are the same equation, so it supplies one equation for three
unknowns. Granting every load row tight still leaves the active set short by
$2g - \lceil 2g/3 \rceil$, and no pad can tighten a blocker's big-job budget.
Giving each unit job its own partner removes the sharing, and then the padding
applies.

Enumerated in exact rational arithmetic at $g = 2, \dots, 8$, on
$41$ machines at $g = 8$: the minimum over every valid matching is
$\tfrac{41}{24}$, $\tfrac74$, $\tfrac{85}{48}$, $\tfrac{107}{60}$,
$\tfrac{43}{24}$, $\tfrac{151}{84}$, $\tfrac{173}{96}$, each exactly
$\tfrac{11}{6} - \tfrac1{4g}$, at points certified vertices by rank $138$ of
$138$ variables under both exact rational Gauss--Jordan and Bareiss
fraction-free elimination. Because $\OPT$ plays no part on this scale the pads
need no private machines, so the instances are half the size of
Theorem~\ref{thm:pinvertex}'s. What this adds to
Corollary~\ref{cor:vertexconst}, which already settles
$c^\star_{\mathrm{vert}} = \tfrac{11}{6}$ without it, is that the smaller
dump-free construction carries the same forced loads at vertices over the range
enumerated. It supplies no supremum of its own: a finite enumeration cannot, and
the largest value it reaches is $\tfrac{173}{96} \approx 1.802$. No general-$g$
derivation was attempted for this variant; Theorem~\ref{thm:pinvertex}'s proof
is for the optimum-relative family and does not cover it. Controls fire: one extra sink slot
collapses the lower bound by exactly $\tfrac12$, removing the partner padding costs
vertexhood, and dropping one pad leaves a vertex while dropping two does not.
\end{measurement}

\begin{proposition}[a certified vertex witness]\label{prop:vertex}
There are an instance on thirteen machines with twenty-three jobs, all
sizes integer multiples of $10^{-4}$, and a point $x$ of the relaxation at
$T = \OPT = T_{\mathrm{LP}} = 2$ which is a \emph{vertex} of that polytope, such that
the \emph{worst} valid Step-3 matching has makespan $36571/10000$. Hence,
for the scheme as \cite{WS16} state it---Step~3 taking any valid
matching---restricting Step~1 to return a vertex leaves the supremum of
$\ALG/\OPT$ in $[36571/20000,\ 11/6] = [1.82855,\ 1.8333\ldots]$, and
unlike the $11/6 - \delta$ of \cite{Paper1} the value is
\emph{attained}.
\end{proposition}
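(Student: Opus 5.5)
The plan is to build the witness by hand rather than search for it, from the three-job, three-machine $11/6$ family of \cite{Paper1}. We scale to $T=2$, so that multiples of $10^{-4}$ can carry the margins, and then pad it in the manner of Theorem~\ref{thm:pinvertex} until the point is a vertex. In that family one machine takes a Step-2 job of size close to $T$. Its first slot can then receive a big job $q$ whose share there is small, and the remaining slots carry the chain of the Wang--Sitters bound, for a load near $b_i + s_1 + (L'_i - w_{K_i})$.

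We fix the target ratio $36571/20000$. This means designing a machine $i$ with $b_i = 2$ at share exactly $\beta = 2/3$. Slot $1$ receives a job of size $1.0001$ or close to it, which is big at $T=2$. The later slots are loaded so that the chain bound of \cite{Paper1} is nearly met, and the total comes to $3.6571$. The instance is then written out in a table with thirteen machines and twenty-three jobs, with all sizes and shares in $10^{-4}\mathbb{Z}$.

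The proof proceeds in four steps. (1) Feasibility at $T=2$: check every load row, every big-job row, and every support, and note $T_{\mathrm{LP}}=2$ because some job has size $2$. (2) $\OPT = 2$: exhibit a schedule with makespan $2$, using private or shared dump machines as in Proposition~\ref{prop:compressed}; the lower bound is trivial. (3) Vertexhood, via the characterisation used in the proof of Theorem~\ref{thm:pinvertex}:
\begin{itemize}
\item list the tight rows, namely all placements, one zero-share nonnegativity per pad, and the load rows made tight by pads;
\item split off the pad coordinates triangularly;
\item show that the remaining fractional support, together with the tight load rows, is the incidence structure of a spanning tree, or of a tree plus tight big-job rows of matching count;
\item conclude that the rank equals the number of variables.
\end{itemize}
(4) The worst valid matching: exhibit one matching reaching $36571/10000$ on machine $i$, checking adjacency under the half-open convention. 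For the upper direction, use the per-machine trichotomy of \cite{Paper1} on every machine other than $i$, and a direct slot-by-slot maximum on $i$, to show that no valid matching exceeds $36571/10000$.

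The supremum statement then follows. The lower end of the interval is this witness, with ratio $36571/20000$ to $\OPT$. The upper end is the strict $11/6$ ceiling of \cite{Paper1} applied at a vertex, combined with $T_{\mathrm{LP}}\le\OPT$. Attainment holds because the witness is one fixed instance and a finite maximum over matchings.

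The main obstacle is step (3) combined with the requirement that the bad edge survive. As Remark~\ref{rem:74bvertex} explains, a vertex tends to zero out the small share that creates the slot-$1$ adjacency the bound rests on. That share therefore has to be held by a tight row. The first thing to try is to make the big-job row of machine $i$ tight, so that the small share of $q$ is determined by the Step-2 job's $2/3$ together with $q$'s budget. Failing that, the share should be pinned through a load row on $q$'s other machine, closed with a pad, so that the support graph stays a tree. The worst-matching maximum in step (4) is a finite check, and I would certify it by an explicit case analysis over the slot-$1$ occupant of $i$.
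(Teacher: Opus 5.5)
There is a genuine gap, and it lies in the mechanism itself. You give machine $i$ a Step-2 job of size $2$ at share $\beta = \tfrac23$, and then ask its slot~$1$ to receive a \emph{big} job of size about $1.0001$. That cannot happen.

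Any other big job $q$ with positive share on $i$ has $x_{iq} \le 1 - \tfrac23 = \tfrac13$ by $i$'s big-job row, and it cannot be supported on $i$ alone. So its share on its other machine is at least $\tfrac23 \ge \beta$. Step~2 therefore assigns it there, and it never enters $i$'s slot structure. On a machine carrying a Step-2 job, every slot job has size at most $T/2$; that is exactly why case (S) of the trichotomy reads $\tfrac32 + \tfrac{b_i}{3}$.

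The same fact defeats your first pinning idea. A non-big slot-$1$ job does not appear in $i$'s big-job row, and a big one has already been removed, so that row can never hold the small share in place. Only your fallback can: a load row closed by pads with tree support, as in Theorem~\ref{thm:pinvertex}.

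Your reading of the source family is also off. The $11/6$ family of \cite{Paper1} has no Step-2 job at all (Remark~\ref{rem:o2}). It works through case (N1), a big job whose two shares both stay below $\beta$. The paper's witness runs on exactly that dodge. Its job~$0$ has size $2 = T$ and shares $3363/10000$ and $6637/10000$, so Step~2 never places it, and the witness collapses as soon as $\beta \le 6637/10000$.

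Second, this proposition is a certificate, and its proof is the certificate. The paper exhibits a specific deposited instance and checks in exact arithmetic:
\begin{itemize}
\item its feasibility;
\item $T_{\mathrm{LP}} = 2$, from the size-$2$ job;
\item $\OPT = 2$, by exhaustive search over all $2^{23}$ assignments;
\item vertexhood, by rank $46$ of $46$ on the active set ($23$ equalities, $5$ tight inequalities, $19$ zero coordinates);
\item the worst matching, by enumerating all $135$ valid matchings.
\end{itemize}
You say the table ``is then written out'' but never write it. So none of thirteen machines, twenty-three jobs, or $36571/10000$ is established.

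The outline can be repaired cheaply, since only the \emph{worst} matching matters:
\begin{itemize}
\item Give the slot-$1$ job size exactly $T/2 = 1$, which is not big under the strict reading, like the $A_k$.
\item Give the dedicated job size $0.6571$.
\item Give the Step-2 job share $6667/10000$ rather than exactly $\tfrac23$. At exactly $\tfrac23$ the partner's closing pads would need total size $\tfrac43 \notin 10^{-4}\mathbb{Z}$. Your demand that shares lie in $10^{-4}\mathbb{Z}$ already contradicts a share of exactly $\beta$ anyway.
\end{itemize}
The blocker's tight load then fixes the small share at $0.0095$. Pads of sizes $0.6667, 0.6667$ on the partner and $0.5047, 0.5048$ on the crammer, each with a zero share on a dump, close those rows and leave a tree. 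One matching reaches $2 + 1 + 0.6571$, and none exceeds it. Single-machine filler jobs on extra machines then bring the counts to thirteen and twenty-three without changing rank, $\OPT$ or the worst load. That would be a hand proof through case (S) rather than the paper's (N1) witness, but it is not the argument you wrote.
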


\begin{proof}
The instance and the point are deposited in
\texttt{p29amb\_\allowbreak vertex13\_\allowbreak witness.json}, written by
\texttt{p29amb\_\allowbreak vertex13\_\allowbreak deposit.py} from the audit
artifact. Every step is finite and was checked in exact rational
arithmetic by programs sharing no code:
\texttt{p29amb\_vertex13\_verify.py} rebuilds the polytope, the rank, the
slot structure and the matching enumeration from
Section~\ref{sec:prelim} alone, and
\texttt{p29amb\_vertex13\_rank\_bareiss.py} redecides $\OPT$ by a
constraint solver and by a dynamic program over the machine graph's
components, and the rank by Bareiss elimination. Before either verdict
was believed, \texttt{p29amb\_vertex13\_poscontrol.py} ran the same
pipeline against this note's earlier $49/29$ witness --- deposited
2026-08-03 and verified then by a separate program, not a result from the
literature --- and rediscovered it:
$\OPT = 1$, rank $26$ of $26$, worst valid matching $49/29$ and best
$39/29$, under both size-tie-break orders. Logs are deposited beside
each script.
$\OPT = 2$: job~$0$ has size exactly $2$, so $\OPT \ge 2$; a schedule of
makespan $2$ exists, found by exhaustive branch and bound over all
$2^{23}$ assignments in integer arithmetic and confirmed independently by
a constraint solver. $T_{\mathrm{LP}} = 2$: the point certifies feasibility, and for
$T < 2$ the relaxation forces $x_{i0} = 0$ at every machine against
$\sum_i x_{i0} = 1$. The point is feasible, with all twenty-three job
equalities exact, every load at most $2$ and every big-job budget at most
$1$. It is a vertex because the constraints active at it---twenty-three
equalities, five tight inequalities (the loads on machines $0,1,3,4$ and
the big-job budget on machine~$4$) and nineteen coordinates at
zero---have rank $46$, the number of variables, which is the extreme-point
characterisation; the rank was computed twice, by exact Gauss--Jordan and
by fraction-free Bareiss elimination on the integer-scaled matrix. The two
runs do not use the same row set, and saying so matters because the redundancy
claim below is exactly what the difference tests: Bareiss
(\texttt{p29amb\_vertex13\_rank\_bareiss.py}) runs on all sixty-six active rows,
the nineteen coordinates at $1$ included, while the forty-seven-row rank ---
those nineteen dropped --- is \texttt{p29amb\_vertex13\_minimal\_rank.py}, by
Gauss--Jordan. Both return $46$. The nineteen coordinates equal to $1$ are
omitted as redundant---a job placed integrally has its other coordinates
among the zero rows already, so $x_{ij} = 1$ is the job equality minus
them---and including them gives the same rank, which was also computed.
The
slot structure was rebuilt from the definition in
Section~\ref{sec:prelim}, and the value by enumerating all $135$ valid
matchings, whose makespans run from $26571/10000$ to $36571/10000$. All
twenty-three sizes are distinct, so no tie-break arises.
\end{proof}

\noindent
One feature of this witness needs separating out, because it is
stronger than the statement needs. The makespan $36571/10000$ is what the
\emph{faithful} algorithm returns on the solver's own solution, and that
matching is provably the worst of the $135$. So the witness requires no
adversarial choice at Step~3 at all: it is what our implementation's
augmenting-path rule returns on this point --- Step~3 as defined takes
\emph{any} valid matching, and on this witness they run from
$26571/10000$ to $36571/10000$ --- not
what an adversary could make it do.

What the witness does show is that the Step-2 dodge survives the
restriction. Job~$0$ has size exactly $2$ with shares $3363/10000$ and
$6637/10000$, both below $\beta = 2/3$ but the larger by a margin of only
about $0.003$, so Step~2 does not place it. The witness binds precisely
for $\beta \in (6637/10000,\ 1]$, which contains $2/3$, and collapses at
$\beta \le 6637/10000$, where job~$0$ is pinned to machine~$3$.

The proposition bounds a supremum from below and settles nothing above it:
$36571/20000 \approx 1.82855$ lies below $\tfrac{11}{6}$, and the interval
between them is closed by Theorem~\ref{thm:pinvertex} rather than here.

A smaller vertex witness exists: eight machines and fourteen jobs, at
$T = \OPT = 1$, with worst valid matching $49/29 \approx 1.6897$, active
set of rank $26$. It is dominated as a bound on the \emph{worst} valid
matching, $49/29$ against $36571/20000$, but not as one on the best: its
oracle value $39/29 \approx 1.345$ exceeds this witness's
$26571/20000 \approx 1.329$, so it is the larger of the note's two
certified-vertex values for the oracle. It is recorded in the deposit.

\section{Choosing the matching, and the cost of choosing well}\label{sec:variant}

The three theorems above are about a procedure this section defines: the
\emph{oracle}, which runs Steps 1 and 2 unchanged and returns a
makespan-minimizing valid matching in Step~3. It is stated here rather than
earlier because the theorems do not need its definition to be motivated ---
their bounds are over \emph{every} valid matching, so they cover any selection
rule --- and because what this section adds is a cost rather than a bound: the
selection is NP-hard for a supplied fractional solution.

Every family above exploits the same ambiguity: Step~3 may take
\emph{any} valid matching, and each family makes one of them
bad. Choosing the rounding rather than accepting any is not new:
Shchepin and Vakhania \cite{SV05} improve the Lenstra--Shmoys--Tardos
factor from $2$ to $2 - 1/m$ that way, and show theirs ``gives the best
possible approximation ratio that can be achieved using the rounding
approach.'' That phrase is narrower than it sounds, and our separate paper
\cite{Shavit26} depends on the point: the supporting argument is a closing
paragraph of \cite[\S4]{SV05} whose witness is a single job on $m$
identical machines, and it compares a non-preemptive optimum against an
optimal preemptive distribution. \cite[p.~372]{VW14} give the scope as
best possible ``among all rounding algorithms for this LP''. Their
rounding and their relaxation are both different from these, so nothing
below is subsumed; what they anticipate is the strategy. On the $11/6$ family of \cite{Paper1}, of the eight valid
matchings the worst gives $11/6 - \delta$ and the best gives $1 = \OPT$.

\begin{proposition}\label{prop:variant}
Let the \emph{best-matching variant}, which we also call \emph{the oracle},
run Steps 1 and 2 unchanged and return a makespan-minimizing valid matching in
Step~3. Its makespan is at
most $\tfrac{11}{6}T$ on every instance. On the families of
the $11/6$ family and the six-job family of \cite{Paper1}, run on the feasible solutions
those proofs exhibit, it equals $\OPT$. The second claim is about those
instance-and-solution pairs and not about every solution Step~1 may return;
the proof says why the difference matters.
\end{proposition}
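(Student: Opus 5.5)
The statement has two claims, and I would handle them separately.

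\emph{The upper bound.} This is immediate from the $11/6$ ceiling of \cite{Paper1}. Steps~1 and~2 are unchanged, so the oracle runs on a feasible $x$ at $T$. A valid matching exists by Hall's theorem, so $\mathcal{M}(I,x)$ is nonempty and the minimum over this finite set is attained. The minimizing matching is itself a valid matching, so the ceiling puts every machine strictly below $\tfrac{11}{6}T$.

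\emph{The equality with $\OPT$.} I would proceed instance by instance, using the solutions exhibited in \cite{Paper1}. For the three-job, three-machine $11/6$ family I would take the point $x$ from that proof and run Step~2, recording which jobs (if any) it assigns. I would then rebuild the slot structure by the half-open rule of Section~\ref{sec:prelim} and list the adjacencies. The text already says there are eight valid matchings, so the plan is to exhibit one of them explicitly and check that its makespan is $1$. Since $\OPT = T = 1$ on that family and every schedule has makespan at least $\OPT$, this gives equality.

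For the six-job family I would do the same: fix its exhibited point, apply Step~2, build the slots, and exhibit a matching whose makespan equals $\OPT$. I would also need $\OPT$ there. On this family the ratio $\tfrac{11}{8}$ to $\OPT$ is recorded against the threshold, so $\OPT$ is presumably $\tfrac{11}{8}\cdot\tfrac{11}{6}$-adjusted, or more likely some value strictly above $T$; I would take it as computed in \cite{Paper1}. The lower bound $\ALG \ge \OPT$ is again trivial, because the oracle's output is an integral schedule respecting supports.

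\emph{Main obstacle.} The hard step is reconstructing the exact slot adjacencies of the six-job family, where boundary coincidences under the half-open convention decide which edges exist. I would recompute each machine's cumulative masses in exact rationals and check every strict inequality $\sum x > s$.

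I would close with the caveat the statement flags. The equality holds for these particular instance-and-solution pairs only. Theorem~\ref{thm:74bfalse} shows that other solutions Step~1 may return, on other instances, force every matching far above $\OPT$.
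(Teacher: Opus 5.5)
Your upper bound is the paper's argument: a minimizing matching is a valid matching, so the ceiling of the companion note applies. Your skeleton for the second claim is also the paper's: exhibit one valid matching of makespan $\OPT$ and use the trivial $\ALG \ge \OPT$.

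As written, though, the second claim is a plan rather than a proof. Its whole content is \emph{which} matching, and you never name one. The paper does:
\begin{itemize}
\item On the $11/6$ family, send $q$ to machine~$1$, $h$ to machine~$2$ and $r$ to slot~$2$ of machine~$0$. The makespan is $1 = \OPT$.
\item On the six-job family, $r$ is adjacent only to slots $2$ and $3$ of machine~$0$, so it is forced onto machine~$0$. Sending $q$ to machine~$0$ and $h$ to machine~$2$'s slot gives maximum load $\tfrac43 + \tfrac52\delta$, which is that family's $\OPT$.
\end{itemize}
Your guess at that optimum is garbled. A worst matching near $\tfrac{11}{6}T$ at ratio $\tfrac{11}{8}$ to the optimum puts $\OPT$ near $(\tfrac{11}{6}/\tfrac{11}{8})\,T = \tfrac43 T$, which is a quotient, not a product. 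Both of the paper's checks are symbolic in $\delta$, so they cover every parameter of the families at once. Recomputing slot adjacencies in exact rationals at chosen parameters would not give you that.

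Your closing caveat also misses the distinction the statement flags. That distinction is between a fixed instance-and-solution pair and the \emph{same} instance run with a different Step-1 output. Theorem~\ref{thm:74bfalse} concerns different instances. It is entirely consistent with the oracle returning $\OPT$ at every feasible point of these two instances, so it does not show the restriction is needed.

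What is missing is a second feasible point, on one of these instances, at which the oracle loses optimality. The paper takes the endpoint $x_{h0} = 0$, $x_{q0} = \tfrac23 + \delta$ of the segment of feasible solutions on the $11/6$ family's instance (the segment in the companion note's remark after that theorem):
\begin{itemize}
\item At that point machine~$0$'s load is exactly $1$ and its big-job budget is $\tfrac23 + \delta \le 1$, so the point is feasible at $T = 1$.
\item Since $x_{q0} \ge \beta$, Step~2 assigns $q$ to machine~$0$.
\item $r$ is allowed on machine~$0$ alone, so it is matched there in every valid matching.
\end{itemize}
The oracle therefore finishes at $p_q + p_r = \tfrac43 - \delta > 1 = \OPT$, with no choice left at Step~3 to repair it. That same-instance counterpoint is the idea your last paragraph needs in place of the appeal to other instances.
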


\begin{proof}
The analysis of \cite{WS16} bounds the makespan of \emph{every} valid
matching by $\tfrac{11}{6}T$ --- their proof uses no property of the
particular matching beyond validity, as the $11/6$ ceiling of \cite{Paper1}
re-derives --- and a minimizing valid matching is in particular a valid
one. The second claim needs no measurement, and the argument is two lines.
On the $11/6$ family of \cite{Paper1}, sending $q$ to machine~$1$, $h$ to
machine~$2$ and $r$ to slot~$2$ of machine~$0$ is a valid matching of makespan
$1 = \OPT$. On the six-job family of \cite{Paper1}, $r$ is adjacent only to slots $2$ and
$3$ of machine~$0$ and so is forced there; sending $q$ to machine~$0$ and $h$ to
machine~$2$'s slot then gives maximum load $\tfrac43 + \tfrac52\delta = \OPT$.
Both hold at every parameter of their families, where the Measurements above
sampled three.

Both are statements about those solutions, and the first of them does not
survive being quantified over the instance instead. Step~1 returns \emph{a}
feasible solution, and the instance of the $11/6$ family of \cite{Paper1} has another: at the endpoint
$x_{h0} = 0$, $x_{q0} = \tfrac23 + \delta$ of the segment in the remark
following that theorem, machine~$0$'s load is exactly $1$ and the big-job
budget there is $\tfrac23 + \delta \le 1$, so the point is feasible at
$T = 1$. There $x_{q0} \ge \beta$, so Step~2 assigns $q$ to machine~$0$;
$r$ is allowed on machine~$0$ alone and is matched there; and the makespan
is $p_q + p_r = \tfrac43 - \delta$ against $\OPT = 1$, with no choice left
at Step~3 to repair it. The same endpoint that makes the $11/6$ family of \cite{Paper1}
safe --- it destroys the $\tfrac{11}{6}$ behaviour, which is all that theorem
needs --- costs the oracle its optimality here. We therefore state both claims
for the exhibited solutions and make no claim about the others, including on
the six-job family of \cite{Paper1}, where its $O(\delta)$-box
leaves Step~1 much less room.
\end{proof}

Before the hardness we record a structural fact about slot graphs, which we
have not found stated and on which the half-open convention of
Section~\ref{sec:prelim} bears.

\begin{lemma}[realizability]\label{lem:realize}
Let $x$ satisfy $\sum_i x_{ij} = 1$ for every job, and form the slot
structure. Then every job--slot edge lies in some valid matching.
Consequently an edge is avoidable by some valid matching if and only if its
job is adjacent to at least two slots.
\end{lemma}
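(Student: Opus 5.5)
The plan is to realize the fractional solution as a fractional matching on the job--slot graph that puts positive weight on \emph{every} edge, and then decompose it into valid matchings. For each machine $i$ and each slot $z$ on $i$, set $y_{j,(i,z)}$ equal to the length of the intersection of job $j$'s fraction interval on $i$ with slot $z$. Here both are the half-open intervals of Section~\ref{sec:prelim}, measured in the residual mass that entered the slot structure.

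\emph{Step 1: $y$ is a fractional matching.} The slots on $i$ partition $[0, M_i)$, so job $j$'s interval on $i$ is split among the slots it meets. Summing over slots and machines gives $\sum_{z} y_{j,(i,z)} = x_{ij}$ and $\sum_{(i,z)} y_{j,(i,z)} = \sum_i x_{ij} = 1$ for every job. This is the hypothesis of the lemma, applied to the jobs that survive Step~2; their shares are untouched by Step~2. Each slot has length at most $1$ and the job intervals on a machine are disjoint, so every slot receives total weight at most $1$. Hence $y$ lies in the polytope
\[
  P \;=\; \Bigl\{\, y \ge 0 \;:\; \textstyle\sum_{s} y_{js} = 1 \ \forall j,\ \ \sum_{j} y_{js} \le 1 \ \forall s \,\Bigr\},
\]
with $y$ supported on job--slot pairs.

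\emph{Step 2: the support of $y$ is exactly the edge set.} This is where the half-open convention enters, and I expect it to be the one point needing care. A job is adjacent to a slot exactly when the two half-open intervals $[a,b)$ and $[c,d)$ have nonempty intersection. A nonempty intersection of two half-open intervals is itself a half-open interval $[\max(a,c), \min(b,d))$ of positive length. So $y_e > 0$ for every edge $e$. Conversely, $y$ vanishes off the edges. Under the closed reading this would fail: an endpoint contact is an edge of weight zero, and the argument below would not reach it.

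\emph{Step 3: decomposition.} The constraint matrix of $P$ is the incidence matrix of a bipartite graph, together with slack variables for the slot inequalities, so it is totally unimodular. Hence $P$ is an integral polytope. Its integral points are exactly the valid matchings: each job is assigned to one adjacent slot, with at most one job per slot. $P$ is bounded and nonempty (it contains $y$), so $y$ is a convex combination $\sum_t \lambda_t \chi_{M_t}$ of valid matchings with $\lambda_t > 0$. For any edge $e$, $y_e > 0$ forces $e \in M_t$ for some $t$, which proves the first claim.

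\emph{The consequence.} If job $j$ is adjacent to a second slot $s' \ne s$, apply the first claim to the edge $(j,s')$. The resulting valid matching sends $j$ to $s'$ and therefore avoids $(j,s)$. If $s$ is the only slot adjacent to $j$, every valid matching must send $j$ there, so the edge $(j,s)$ is unavoidable.
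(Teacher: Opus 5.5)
Your proof is correct, and it takes a different route from the paper's.

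The paper argues through Hall's theorem directly. It deletes $j_0$ and an adjacent slot $v$, and supposes some set $S$ of the remaining jobs violates Hall. A length count, $|S| = \sum_i \mu_i \le \sum_i |V_i| = |N(S)| \le |S|$, then shows that every slot in $N(S)$ would have length exactly $1$ and be completely covered by $S$-intervals. That is impossible for $v$, because $j_0$ meets it in positive length.

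You instead treat the overlap-length vector $y$ as a point of the bipartite matching polytope of the slot graph whose support is exactly the edge set, and conclude by integrality. Since $y$ is a convex combination of valid matchings, every edge of positive weight lies in one of them. Your displayed $P$ is best read on the slot graph's edge set, but nothing depends on that: the matchings in a positive-coefficient decomposition of $y$ are automatically supported inside $\operatorname{supp}(y)$, which Step~2 shows is the edge set.

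Your route buys three things.
\begin{itemize}
\item It is close to Shmoys and Tardos's own existence argument.
\item It isolates the role of the half-open convention in a single line: the support of $y$ equals the edge set.
\item It proves more than the lemma states, namely a probability distribution over valid matchings under which each job--slot edge is used with probability exactly its overlap length.
\end{itemize}

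The paper's route avoids polyhedral theory altogether. Running the same length count with no deletion re-proves that a valid matching exists, so the lemma is self-contained at the level of Hall's theorem. Your argument also gives existence, from nonemptiness of $P$, but relies on total unimodularity to do so.
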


\begin{proof}
Fix a job $j_0$ and an adjacent slot $v$; we must match the rest after
deleting both. Suppose Hall fails for some
$S \subseteq \text{Jobs} \setminus \{j_0\}$: writing $N'$ for
neighbourhoods in the deleted graph, $|N'(S)| \le |S| - 1$. We show this is
impossible. The only slot deleted is $v$, so $|N(S)| \le |N'(S)| + 1 \le
|S|$; and $v \in N(S)$, since otherwise $N'(S) = N(S)$ and the original
graph would already violate Hall. Let $I$ be the set of
\emph{all} machines carrying positive $S$-mass and $\mu_i$ that mass; then
$|S| = \sum_{j \in S} \sum_i x_{ij} = \sum_{i \in I} \mu_i$. On machine
$i$ the $S$-intervals are disjoint, of total length $\mu_i$, and lie inside
the union of the slots they meet, $V_i = N(S) \cap \mathrm{slots}(i)$;
since each slot has length at most $1$, $|V_i| \ge \sum_{v' \in V_i}
\mathrm{len}(v') \ge \mu_i$, and $|V_i|$ is an integer, so
$|V_i| \ge \lceil \mu_i \rceil$. Summing,
$|S| \le \sum_i |V_i| = |N(S)| \le |S|$, so equality holds throughout: every
slot in $V_i$ has length exactly $1$ and is entirely covered by
$S$-intervals. In particular $v$ is entirely covered. But $j_0 \notin S$
meets $v$ in positive length and distinct jobs' intervals are disjoint, so
$v$ carries length greater than $1$, a contradiction. The same chain with
no deletion proves Hall's condition in the original graph, so the lemma is
self-contained. For the consequence, a job adjacent to one slot must use it
in every valid matching, and one adjacent to two can avoid either by the
first part.
\end{proof}

\noindent
The lemma uses only $\sum_i x_{ij} = 1$ and the slot construction, not the
load constraints, not the big-job budget, and not $|M(j)| \le 2$. Its consequence
is worth stating on its own, because it says which kind of argument a lower bound
in this setting can be.

\begin{corollary}[an overload is necessarily collective]\label{cor:collective}
In a Shmoys--Tardos slot graph, a job adjacent to at least two slots is matched
to no particular one of them by every valid matching. Hence an obstruction that
forces an overload using only jobs of slot-degree at least two cannot be
certified by the unavoidable placement of any single job.
\end{corollary}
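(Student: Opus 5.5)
The plan is to read the first sentence of the corollary directly off Lemma~\ref{lem:realize}, and then obtain the second sentence as its contrapositive. No new counting is needed: everything rests on the realizability lemma. That lemma uses only $\sum_i x_{ij} = 1$ and the slot construction, so the corollary holds in exactly the same generality. In particular it holds after Step~2 has deleted its jobs, since the surviving jobs still have fractions summing to $1$ over the slot structure they enter.

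\emph{Step 1.} Fix a job $j$ adjacent to slots $v \ne v'$, and suppose, for contradiction, that every valid matching sends $j$ to $v$. By Lemma~\ref{lem:realize} the edge $(j,v')$ lies in some valid matching $M'$. A valid matching gives each job exactly one slot, so in $M'$ the job $j$ is at $v'$ and not at $v$. This contradicts the assumption. Since $v$ was an arbitrary neighbour of $j$, no single slot is forced for $j$.

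\emph{Step 2.} Next I would make precise what it means to certify an overload by the unavoidable placement of a single job. Such a certificate is an argument of the form ``job $j$ occupies slot $v$ in every valid matching, and with $j$ at $v$ the load exceeds the bound.'' Its premise is precisely the statement that Step~1 refutes for every job of slot-degree at least two. So any obstruction built only from such jobs must instead constrain the matchings jointly. Lemmas~\ref{lem:deficiency} and~\ref{lem:collision} are arguments of exactly this kind: they say that \emph{some} job of a class returns to its machine, without saying which.

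\emph{Main obstacle.} The only delicate point is scope: the claim is about slots, not machines. A job whose two adjacent slots lie on the same machine is still forced onto that machine, so I would state explicitly that the corollary concerns slot placement. I would also point out that in the paper's constructions the relevant jobs have their adjacent slots on two different machines. For example, $A_k$ meets slot~$1$ of $B_k$ and a slot of its crammer, so for those jobs the conclusion holds at the machine level too.
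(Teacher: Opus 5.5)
Your proposal is correct and is essentially the paper's proof. For each slot $v$ adjacent to the job you take a second adjacent slot $v'$ and apply Lemma~\ref{lem:realize} to the edge at $v'$, which yields a valid matching that avoids $v$; you then read off the second sentence in the same way the paper does. Your slot-versus-machine caveat is a correct refinement that the paper leaves implicit: its own examples of local forcing, $D_k$ and $v_k$, are each adjacent to two slots of a single machine, so they are forced only at the machine level.
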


\begin{proof}
By Lemma~\ref{lem:realize} each of the job's edges lies in some valid matching,
so for each of its adjacent slots there is a valid matching that sends it
elsewhere.
\end{proof}

\noindent
So the two mechanisms available are not on the same footing. \emph{Degree-one
forcing} --- the dedicated jobs $v_k$ and $D_k$, supported on one machine and so
matched there in every valid matching --- is local, and by
Corollary~\ref{cor:collective} it is the only local kind. Everything else must
therefore be \emph{collective}: each individual obligation can be avoided, and
what cannot be avoided is meeting enough of them at once. The collision forcing
of Lemmas~\ref{lem:deficiency} and~\ref{lem:collision} is one such mechanism, and
the one both constructions here use; Corollary~\ref{cor:collective} rules out
single-job certificates, not every collective mechanism besides theirs. Both counterexamples above
are built from one of each, which is why no single unavoidable edge appears in
either, and it is what
Proposition~\ref{prop:twophase} exhibits from the other side.

\begin{theorem}[minimum-makespan completion is NP-hard for a supplied
fractional solution]\label{thm:step3hard}
Given a graph balancing instance, a feasible solution $x$ of the relaxation
at target $T$, and the induced slot structure, deciding whether some valid
matching has makespan at most $T$ is NP-complete.
\end{theorem}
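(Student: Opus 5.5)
Membership in NP is immediate: a valid matching is a polynomial-size certificate, and adjacency, validity and every machine's load can be checked in polynomial time in exact rational arithmetic. For hardness I would reduce from 3-SAT, or more conveniently from a bounded-occurrence variant in which every variable occurs at most three times, using only the two forcing mechanisms the paper has isolated. The first is \emph{degree-one forcing}: a job adjacent to a single slot is matched there in every valid matching. The second is \emph{collective} choice: by Lemma~\ref{lem:realize} a job adjacent to two slots can take either one. The reduction builds an instance and a fractional point $x$ whose slot graph realises a prescribed bipartite choice structure, and whose loads exceed $T$ exactly when a clause is violated.

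\emph{Gadgets.} For each variable $y$ I would use a ``switch'' job $s_y$, split between two machines $Y^{+}$ and $Y^{-}$. Its position in the nonincreasing order on each machine is chosen so that on each side it reaches only one slot, the slot that already holds that side's literal job. Matching $s_y$ on one side therefore displaces the literal job there into that literal's next slot. For each occurrence of a literal in a clause $c$, the displaced literal job is adjacent to a slot of the clause machine $C_c$. The clause machine is built so that it has fewer free slots than the number of literal jobs that could be pushed onto it. Its load exceeds $T$ exactly when all three of its literals are false, so that all three literal jobs are pushed there; equivalently, its last slot is reached only in that case. Sizes would be chosen so that each machine's base load, made up of its Step-2 job $b_i$ and its dedicated degree-one jobs, sits just below $T$. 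One extra unit-ish job then tips the machine over $T$ if and only if the corresponding forced event occurs. The count itself would be argued with Lemma~\ref{lem:deficiency} applied locally to each clause machine.

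\emph{Feasibility of $x$.} I must then exhibit $x$ explicitly and verify the three constraint families at target $T$. Every job needs its shares on at most two machines. Each machine's fractional load must be at most $T$ even though its integral matched load can exceed $T$. The slack comes from the chain bound: a matched job of size $p$ that holds only a share $\varepsilon$ on the machine contributes $\varepsilon p$ to the relaxation load but $p$ to the makespan. This is exactly the vanishing-share edge exploited in Theorem~\ref{thm:116}, so I would reuse its shape, giving the adjacency-creating jobs tiny shares $\varepsilon$. I would choose $\beta$-level shares so that Step~2 assigns nothing, or assigns only the intended base jobs. Slot masses must be made exactly integral wherever a slot count matters, padding with dump-supported pads as in Theorem~\ref{thm:pinvertex}.

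\emph{Main obstacle.} The hard step is making the slot \emph{adjacencies} realise the intended gadget. Adjacency is dictated by the sorted order of sizes and the cumulative fractional mass, so each job's intervals must straddle exactly the intended slot boundaries. This has to hold simultaneously with load feasibility, and with correctness in both directions: a satisfying assignment must yield a valid matching of makespan at most $T$, and conversely. I would first try to keep every machine to two or three slots, with distinct sizes so that no tie-breaks arise, and fix the interval layout machine by machine. If the clause gadget proves awkward, my fallback is a reduction from a matching-with-forbidden-pairs or 3-dimensional-matching style problem, which maps more directly onto ``each slot holds one job, and the heavy jobs on a machine must not coincide.''
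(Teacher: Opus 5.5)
Your membership argument is fine, but the hardness half is a plan rather than a proof. No instance, no sizes and no point $x$ are ever fixed. The step you call the main obstacle --- realising the adjacencies while keeping $x$ feasible, and proving both directions --- is the whole of the reduction.

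As sketched, the gadgets also fail:
\begin{itemize}
\item \emph{The switch reaches only one occurrence.} The switch $s_y$ occupies one slot on the side it chooses, so it displaces exactly one literal job. A literal job, supported on at most two machines, reaches one clause machine. But any SAT variant that is still NP-hard has some literal occurring at least twice, since CNF-SAT in which every variable occurs at most twice is polynomial. So the value of $y$ never reaches all its occurrences.
\item \emph{The clause gadget runs two mechanisms together.} If $C_c$ has fewer free slots than the literal jobs that can be pushed onto it, then no valid matching pushes all three there. The all-false case is then excluded by slot counting, never by a load exceeding $T$. Slot-capacity constraints by themselves describe a bipartite matching problem, solvable in polynomial time, and by Lemma~\ref{lem:realize} a valid matching always exists. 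So all of the hardness must come from the sizes and the threshold, which is exactly what the sketch leaves open.
\item \emph{The $\varepsilon$-share device works against you.} The device you borrow from Theorem~\ref{thm:116} puts three literal jobs with tiny shares on $C_c$ in a short stretch of its order. There they touch at most two slots, unless other jobs are interleaved between them, and you have not fixed where those go.
\end{itemize}

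The missing idea is that no gadgets are needed, because the question is asked at the relaxation's own target. The paper reduces from \textsc{Partition} with $r \ge 4$ numbers $q_k$ summing to $2Q$. It uses two machines, every job allowed on both with share $\tfrac12$ on each, and jobs $a_k = C(r-k+1) + q_k$, $b_k = C(r-k+1)$ with $C = 1 + \max_k q_k$, scaled to total size $2$ at $T = 1$. The strict interleaving $a_1 > b_1 > a_2 > \dots > b_r$ makes each pair $(a_z, b_z)$ tile slot $z$ exactly on both machines. The slot graph is therefore $r$ disjoint copies of $K_{2,2}$ --- one independent binary choice per number --- and each machine's fractional load is exactly $1$. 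The two integral loads sum to $2$, so makespan at most $T$ forces them to be equal, which is a partition. The choice of $C$ and $r \ge 4$ keeps every job small, so Step~2 assigns nothing. That gives only weak NP-completeness. A completed 3-SAT reduction along your lines would give strong NP-hardness, which the paper leaves open, but the proposal does not reach it.
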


\noindent
\emph{What this does not say, stated before the proof because the title of this
theorem is the easiest thing in the note to over-read.} The hardness is in the
supplied $x$. It does \emph{not} say that the best-matching variant of
Proposition~\ref{prop:variant} is NP-hard as a function of the graph balancing
instance alone: Step~1 is not adversarial, a different Step-1 output on the same
instance may make Step~3 trivial, and no claim is made here that a polynomial
selection rule for the solutions Step~1 actually returns cannot exist. The
variant's complexity is open, and this theorem does not close it.

\begin{proof}
Membership: a valid matching is a certificate of polynomial size, and
checking validity and makespan is immediate.

For hardness, reduce from \textsc{Partition} restricted to $r \ge 4$ parts,
which is NP-complete since the instances with $r \le 3$ are decided in
constant time. Given $q_1, \dots, q_r$ summing to $2Q$, put
$C = 1 + \max_k q_k$ and, for $k = 1, \dots, r$,
\[
a_k = C(r-k+1) + q_k, \qquad b_k = C(r-k+1),
\]
scaled so the total size is $2$. Then $a_k - b_k = q_k > 0$ and
$b_k - a_{k+1} = C - q_{k+1} > 0$, so
$a_1 > b_1 > \dots > b_r > 0$. Take two machines, allow every job on both,
set $T = 1$ and $x_{j,0} = x_{j,1} = 1/2$ for every job. The loads sum to
the total size $2$ over two machines, so no target below $1$ is feasible and
$T = 1$ is least. No job is big for $r \ge 4$, so Step~2 assigns nothing: the
unscaled total size is $Cr(r+1) + 2Q$ and the largest job is
$a_1 = Cr + q_1$, so after scaling to total size $2$ at $T = 1$ a job is big
exactly when its unscaled size exceeds a quarter of that total, and
\[
Cr(r+1) + 2Q - 4a_1 \;=\; Cr(r-3) + 2Q - 4q_1 \;>\; 0
\]
because $Cr(r-3) \ge 4C > 4q_1$ when $r \ge 4$, by the choice
$C = 1 + \max_k q_k$. This is the one place the restriction to $r \ge 4$ is
used: at $r = 3$ the first term vanishes and $a_1$ can be big, which would let
Step~2 break up the $K_{2,2}$ components the reduction depends on.
Each job contributes fraction $1/2$, so in the nonincreasing order the pair
$(a_z, b_z)$ exactly tiles slot $z$ on each machine: slot $z$ is adjacent to
$\{a_z, b_z\}$ and to nothing else. The slot graph is therefore $r$
disjoint copies of $K_{2,2}$ and there are exactly $2^r$ valid matchings,
one per choice of which machine receives $a_z$. A machine's load is
$\sum_z (\text{the member of } \{a_z,b_z\} \text{ it takes})$, and the two
loads sum to $2$; so some valid matching has makespan at most $T = 1$ if and
only if the loads can be equalised, i.e.\ if and only if some
$S \subseteq \{1,\dots,r\}$ has $\sum_{z \in S} q_z = Q$.
\end{proof}

\noindent
Three qualifications, none of which the theorem's statement conceals. The
hardness is for a \emph{given} $x$: it does not show Steps 1 and 3 are
jointly hard, since a different Step-1 output on the same instance may make
Step~3 trivial. It is hardness of \emph{exact} optimization and does not
obstruct an approximate rule: on this family the \emph{worst} valid matching
is at most $\tfrac76 T$, since with $r = 4$ it is
$\bigl(1 + Q/(\sum_z b_z + Q)\bigr) T$, so the construction says nothing
about the $11/6$ question of Section~\ref{sec:variant} or about
the threshold-relative candidate in Definition~\ref{conj:74}, where deciding ``$\le T$'' exactly
may be hard while ``some valid matching at $\le \tfrac74 T$'' is easy. And it is
\emph{weak} NP-completeness, from \textsc{Partition}, so nothing here excludes
a pseudo-polynomial algorithm for job sizes with a bounded common denominator,
or a fixed-parameter algorithm in the number of machines or of slots per
machine. The reduction is
nonetheless polynomial-time many-one: the numbers produced have bit-length
$O(\log r + \log \max_k q_k)$, polynomial in the input's bit-length.

The slot structure itself is not disposable. The measurement below exhibits one
polynomial-time completion rule that fails the $11/6$ guarantee; it does not
show that every polynomial-time rule must fail it, and it is not a complexity
claim.

\begin{measurement}\label{meas:greedy}
Replace Step~3 by the following polynomial-time completion: take the
unassigned jobs in nonincreasing order of size and put each on the least
loaded machine in its fractional support. On a $12$-machine, $14$-job
instance found by plateau walk, this rule returns makespan $4.054$ against
$\OPT = 2$, a ratio of $2.027$, so it does not meet the $11/6$ guarantee. The
search is seeded from its own directory's contents, so a fresh run reaches a
different instance --- $12$ machines, $15$ jobs, ratio $2.2382$ --- and the
conclusion is unchanged either way. The instance is recorded in
\texttt{p29amb\_ratio\_\allowbreak audit\_\allowbreak smart\_\allowbreak walk\_\allowbreak 8000\_\allowbreak s30.json};
the rule is \texttt{ws\_\allowbreak schedule\_\allowbreak smart} in mode
\texttt{"greedy"}, reached with \texttt{-\/-smart} on the command line.
\end{measurement}

\noindent
We state this as a measurement and not as a proposition on purpose. The
greedy rule is run on whatever feasible $x$ the solver returns, so the
number is solver-dependent, and an existence claim about an instance
should not be.

\noindent
The measurement is about \emph{this} completion rule only, and does not
extend to any completion that ignores the slot structure: a
completion may ignore the slot structure and still be good, for instance
one that solves the residual assignment problem exactly. With the slot cap removed, a machine
can receive arbitrarily many jobs in which it holds only a tiny share,
which is the failure mode Shmoys--Tardos rounding exists to prevent.

\section{Above \texorpdfstring{$\tfrac74$}{7/4}, a machine has one of two shapes}\label{sec:classify}

Both forms are refuted, so what follows is a classification rather than a route
to $\tfrac74$: given that the bound fails, which machines can fail it, and what
would have had to hold for them not to.

\noindent
The next theorem gives two conditions on where the matching puts a single job
that together deliver $\tfrac74$, and shows that in any valid matching a
machine above $\tfrac74$ has one of two shapes.

\begin{theorem}[reduction]\label{thm:reduction}
Let $x$ be feasible at $T = 1$ and let a valid matching satisfy
\begin{enumerate}
\item[(R1)] every big job that is matched (rather than assigned in
Step~2) sits on a machine where its fractional share is at least $1/2$,
and
\item[(R2$'$)] on every machine that received a Step-2 job of size
$b_i > 3/4$, the job matched to slot $1$, if any, has size at most
$3/4 - b_i/3$.
\end{enumerate}
Then its makespan is at most $7/4$. Consequently, in \emph{any} valid
matching, a machine of load above $7/4$ is either
\begin{itemize}
\item[(O1)] a machine with no Step-2 job whose matched big job has share
below $1/2$: its load is at most $2 - x_q/2$; or
\item[(O2)] a machine whose Step-2 job has size $b_i > 3/4$: its load
is at most $3/2 + b_i/3$, exceeding $7/4$ by at most $1/12$.
\end{itemize}
\end{theorem}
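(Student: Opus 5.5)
The plan is a case analysis over the per-machine trichotomy of \cite{Paper1}, imported in Section~\ref{sec:prelim} for exactly this use. Fix a valid matching and a machine $i$. If $i$ carries no slot job, its load is $b_i \le 1 \le \tfrac74$ and there is nothing to prove. Otherwise exactly one of (N0), (N1), (S) holds.

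The first two cases are short. In case (N0) the load is below $\tfrac32$. In case (N1) a big job $q$ sits in slot~$1$, and the bound is $2 - x_{iq}/2$. Under (R1) we have $x_{iq} \ge \tfrac12$, so the load is below $2 - \tfrac14 = \tfrac74$. Without (R1) this is exactly shape (O1), with its stated bound.

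In case (S) with $b_i \le \tfrac34$, the trichotomy gives load below $\tfrac32 + b_i/3 \le \tfrac32 + \tfrac14 = \tfrac74$, so no condition is needed. If $b_i > \tfrac34$, the same bound is $\tfrac32 + b_i/3 \le \tfrac32 + \tfrac13 = \tfrac{11}{6}$. This exceeds $\tfrac74$ by at most $\tfrac1{12}$, which is shape (O2). The "consequently" half therefore follows by contraposition: a machine above $\tfrac74$ escapes (N0) and the two benign subcases, so it is in (N1) with $x_{iq} < \tfrac12$, or in (S) with $b_i > \tfrac34$.

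The remaining step, and the main obstacle, is showing that (R2$'$) brings case (S) with $b_i > \tfrac34$ down to $\tfrac74$. The trichotomy's bound does not see the slot-$1$ occupant, so I will use the chain bound (ii) instead. It gives load at most $b_i + s_1 + L'_i - w_{K_i}$, where $s_1$ is the largest size adjacent to slot~$1$. Since the load counts only the job actually matched to slot~$1$, the first step is to rerun the chain argument with $s_1$ replaced by the size $a$ of that job. I expect this to be legitimate, because the chain bound controls slots $z \ge 2$ by $w_{z-1}$ independently of slot~$1$. That gives load at most $b_i + a + L'_i - w_{K_i}$.

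Next I apply $L'_i \le 1 - \beta b_i = 1 - \tfrac23 b_i$ and drop $w_{K_i} > 0$. Together with (R2$'$), $a \le \tfrac34 - b_i/3$, the bound becomes
\[
b_i + \tfrac34 - \tfrac{b_i}{3} + 1 - \tfrac{2b_i}{3} \;=\; \tfrac74 .
\]
The $b_i$ terms cancel exactly, which is presumably why (R2$'$) has that form.

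One detail remains. If slot~$1$ is empty, or $K_i = 1$, the bound is at most $b_i + L'_i \le 1 + \tfrac13 b_i \le \tfrac43$. I would check this subcase separately so that the argument never relies on a nonexistent slot-$1$ occupant.
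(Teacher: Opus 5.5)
Your proof is correct and is essentially the paper's. The paper also uses the trichotomy to dispose of (N0), of (N1) under (R1), and of (S) with $b_i \le \tfrac34$. For $b_i > \tfrac34$ it likewise bounds slot~$1$ by its actual occupant and slots $z \ge 2$ by $s_z$, which follows at once from part~(i) of the chain bound, so no rerun is needed. It then uses $L'_i \le 1 - \tfrac23 b_i$ and $w_{K_i} > 0$ to reach $1 + b_i/3 + a \le \tfrac74$.

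One slip is in your closing remark. For $K_i = 1$ the load is not bounded by $b_i + L'_i$, because a lone occupant with small share can have size exceeding $L'_i$. That subcase needs no separate treatment, though: your main computation, with an empty sum over $z \ge 2$, already covers it.
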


\begin{proof}
A machine carrying no slot job has load $b_i \le 1 < 7/4$. That is the
case the proof of the $11/6$ ceiling in \cite{Paper1} disposes of before its case
split, and it is easily dropped: the three cases
(N0), (N1), (S) are exhaustive and disjoint only among the machines that
do carry a slot job. On those, each case is already bounded above.
Case (N0) --- no Step-2 job and a small or absent slot-$1$ occupant ---
gives load $< 3/2 \le 7/4$ unconditionally.
Case (N1) gives load $< 2 - x_q/2$, which is at most $7/4$ exactly when
$x_q \ge 1/2$, i.e.\ under (R1).
Case (S) gives load $< 3/2 + b_i/3$, which is at most $7/4$ when
$b_i \le 3/4$. For $b_i > 3/4$ write $t_1$ for the size of the job
matched to slot $1$. Bounding slot $1$ by $t_1$ and slot $z \ge 2$ by
$s_z$, and using $\sum_{z \ge 2} s_z \le L'_i - w_{K_i}$ from
the chain bound of \cite{Paper1}, $w_{K_i} > 0$ and $L'_i \le 1 - \tfrac23 b_i$,
\[
\text{load} \ \le\ b_i + t_1 + L'_i - w_{K_i}
\ <\ 1 + \tfrac{b_i}{3} + t_1 \ \le\ \tfrac74
\]
by (R2$'$).
So every machine is at most $7/4$, which is the first claim. For the
second, a machine above $7/4$ must fail the case bound it falls under:
it carries a slot job, since otherwise its load is at most $1$; it
cannot be (N0); if it is (N1) it must have $x_q < 1/2$, which is
(O1); if it is (S) it must have $b_i > 3/4$, which is (O2), and the
excess over $7/4$ is at most $(3/2 + b_i/3) - 7/4 \le 1/12$ at
$b_i \le 1$.
\end{proof}

\begin{remark}[what the reduction achieves]
(R2$'$) replaces the more obvious hypothesis, that every
machine with $b_i > 3/4$ receive matched load at most $7/4 - b_i$. That
hypothesis is the theorem's own conclusion on exactly the machines where
the conclusion is in doubt, so the reduction it supports is close to
circular: it reduces the threshold-relative candidate in
Definition~\ref{conj:74} to (R1) together with itself
on the hard machines. (R2$'$) is a condition on where the matching puts a
single job, of the same kind as (R1), and it implies the old hypothesis
through the chain bound: $t_1 \le 3/4 - b_i/3$ gives matched load
$< t_1 + 1 - \tfrac23 b_i \le 7/4 - b_i$. It is therefore the stronger
hypothesis and the theorem under it the weaker statement. What the
reduction achieves is that both hypotheses are now structural, so the
candidate guarantee became a question about which slots a valid matching can be
made to use. What it does not achieve is any progress on satisfying them:
Proposition~\ref{prop:twophase} shows (R1) alone can fail for every valid
matching, and nothing here shows the shapes (O1) and (O2) can be routed
away \emph{at $\tfrac74$}, a constant the phrase needs, since
Theorem~\ref{thm:74false} has already shown they cannot always be, and
Theorem~\ref{thm:116} puts the least constant at which they always can be
at $\tfrac{11}{6}$.

The direction of the implication has to be said, because the two shapes are
threshold-relative while form~(b) is not. Routing both away is \emph{sufficient}
for form~(b), since a valid matching with no machine above $\tfrac74 T$ is
form~(a)'s conclusion and form~(a) implies form~(b); it is \emph{not necessary},
and Theorem~\ref{thm:74false}'s instance settles that on its own, since there
every valid matching leaves a machine above $\tfrac74 T$ --- an (O1), shape (O2)
being unavailable where Step~2 fires on nothing --- while $\ALG/\OPT = 88/75$
sits below $\tfrac74$ with $173/300$ to spare. The obvious strengthening that
would have supplied both hypotheses at once --- that some valid matching caps
every machine's slot-$1$ occupant at its residual budget --- is false: a
$14$-machine instance whose feasible region is a single point refutes it in exact
rational arithmetic, and a separate $9$-machine instance refutes its restriction
to the case with no Step-2 jobs.

\end{remark}

\begin{remark}[assuming (O2) away does not help the scheme]
\label{rem:o2}
The obvious way to simplify is to assume it away: suppose every Step-2 job
has size at most $3/4$, so shape (O2) cannot occur and only (O1) is left.
That hypothesis does not help the scheme at all as
\cite{WS16} state it. The $11/6$ family of \cite{Paper1} has no Step-2
job at all---Step~2 never fires on it, which the shares $\tfrac13 + \delta$ and $\tfrac23 - \delta$
were chosen to ensure---so it satisfies the
hypothesis vacuously while driving $\ALG/\OPT$ to $11/6$. The same is true
of Proposition~\ref{prop:twophase}'s instance, which is why that
proposition's refutation of (R1) survives the hypothesis too. So excluding
(O2) can only ever help the best-matching variant of
Section~\ref{sec:variant}, never the polynomial-time algorithm, and any
statement of that shape must say which of the two it is about.
\end{remark}

\begin{proposition}[the two-phase strategy fails]\label{prop:twophase}
Condition (R1) is not always satisfiable: there is an instance on seven
machines with six jobs and a feasible $x$ for which no valid matching
places every matched big job on a machine where its share is at least
$1/2$, although every valid matching has makespan $1$.
\end{proposition}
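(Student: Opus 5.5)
The plan is to build the instance around the collective forcing of Lemma~\ref{lem:deficiency}, not around any single unavoidable edge. Corollary~\ref{cor:collective} says this is the only way to do it: a big job of slot-degree two can always avoid either of its slots, so (R1) can fail for \emph{every} valid matching only if the big jobs' good slots are oversubscribed as a set. I will take a small number of big jobs $q_1,\dots,q_m$, each with a share strictly above $\tfrac12$ on a ``good'' machine and below $\tfrac12$ on a ``bad'' one. They will compete, together with a few small jobs, for the slots of the good machines. The aim is that the competing jobs outnumber the good slots by at least one, and that the small jobs have no slot outside the good machines other than their own.

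The order of work is as follows.

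\emph{First,} fix the shares so that every good machine has integral residual mass, since (Section~\ref{sec:prelim}) the slot count is $\lceil M_i\rceil$ and the deficiency count needs the boundary. Where the construction needs to occupy a slot exactly, I will do it with one degree-one small job, so that it never opens an extra slot. All big shares must stay below $\beta=\tfrac23$, so that Step~2 assigns nothing and every big job enters the slot structure. A natural choice is shares $\tfrac35$ on the good machine and $\tfrac25$ on the bad one, with each big budget at most $1$. The seventh machine is there to absorb small shares, so that the loads stay at most $1$ and each good machine's mass is exactly an integer.

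\emph{Second,} apply Lemma~\ref{lem:deficiency} with $S$ the good machines' slots and $J$ the big jobs together with the small jobs that are adjacent to nothing outside $S$. Each big job's designated machine is its bad machine, so $|J|-|S|\ge 1$ forces some big job onto its bad machine, at share below $\tfrac12$. That is the failure of (R1) in every valid matching.

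\emph{Third,} verify that every valid matching has makespan exactly $1$ by bounding each machine separately, in the style of Lemma~\ref{lem:116attained}. For each machine I list the jobs adjacent to each of its slots and check that the largest possible slot occupants sum to at most $1$. To make this automatic, I intend to give each bad machine only the slot that the big job can reach, with all other adjacent jobs tiny or dedicated elsewhere, so that a displaced big job lands essentially alone. Existence of a valid matching is Hall's theorem, or Lemma~\ref{lem:realize}, and a lower bound of $1$ comes from a big job of size $1$, or simply from the load attained.

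The main obstacle is the second and third steps pulling against each other. The deficiency wants the good machines to be crowded, but crowded machines tend to carry loads above $1$ once a big job and a small job share them. In addition, the nonincreasing order puts a big job first on every machine, so a small job meant to block a slot usually also reaches the next slot and dissolves the count. I would first try to resolve this by letting the small blocking jobs be dedicated (degree one) with share exactly $1$ placed after the big job. That keeps the blocking jobs deterministic and the slot adjacencies explicit, and the sizes can then be tuned against the per-machine load checks.
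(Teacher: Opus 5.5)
\paragraph{Review.} You have the right mechanism: collective forcing in the sense of Lemma~\ref{lem:deficiency}, with big shares $\tfrac35$ and $\tfrac25$ so that Step~2 is idle. But the proposal never produces an instance, and for an existence statement the instance is the proof. Worse, the one concrete choice you make at the point you yourself flag as the obstacle cannot succeed.

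A blocking job $d$ supported on a single machine is matched there by every valid matching. Suppose a size-$1$ job is adjacent to a slot of that machine, as it is when $d$ is ``placed after the big job''. Then Lemma~\ref{lem:realize} gives a valid matching that puts the size-$1$ job there as well. That machine's load is at least $1 + p_d > 1$, so ``every valid matching has makespan $1$'' is lost. On a good machine the device does not even block: share $\tfrac35$ plus share $1$ is mass $\tfrac85$, hence two slots, and the big job takes slot~$1$ while $d$ takes slot~$2$. So the blockers must be two-machine jobs. The machines a size-$1$ job can reach should carry mass at most $1$, giving a single slot that the big job and a blocker can never occupy together. Mass at most $1$, not integrality, is the condition the construction needs.

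Your deficiency step also needs repairing. You take $S$ to be the good machines' slots and put in $J$ only the small jobs adjacent to nothing outside $S$. A seventh machine that absorbs the small jobs' shares then removes those jobs from $J$, and the count becomes vacuous. The missing idea is to put the absorber \emph{inside} the count, and the paper's instance does exactly this:
\begin{itemize}
\item On each good machine $A_k$, the big job $a_k$ has share $\tfrac35$ and a half-sized job $u_k$ has share $\tfrac25, \tfrac25, \tfrac15$ for $k = 1,2,3$. The masses are $1, 1, \tfrac45$, so each $A_k$ has one slot shared by the two jobs.
\item The remainder of every $u_k$ sits on a single machine $D$ of mass exactly $2$, hence two slots.
\item Each bad machine $E_k$ carries only $a_k$, at share $\tfrac25$.
\end{itemize}
Under (R1) every $a_k$ fills $A_k$'s only slot, so $u_1, u_2, u_3$ must share $D$'s two slots and Hall fails. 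This is Lemma~\ref{lem:deficiency} with $S$ the three $A$-slots together with $D$'s two, six jobs against five slots, and $E_k$ designated for $a_k$. The makespan bound then needs no tuning. Every machine except $D$ has one slot, $D$'s two slots hold jobs of size $\tfrac12$, and a size-$1$ job gives the lower bound.
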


\begin{proof}
Machines $A_1,A_2,A_3$, $D$, $E_1,E_2,E_3$. Three big jobs $a_k$ of size
$1$ with $x_{A_k a_k} = 3/5$, $x_{E_k a_k} = 2/5$; three jobs $u_k$ of
size $1/2$ with $(x_{A_k u_k}, x_{D u_k})$ equal to $(2/5, 3/5)$,
$(2/5, 3/5)$, $(1/5, 4/5)$. The loads are $4/5$, $4/5$, $7/10$, $1$,
$2/5$, $2/5$, $2/5$, and the big fraction on any machine is $3/5$ or
$2/5$, so $x$ is feasible; every big share is below $\beta = 2/3$, so no
job is assigned in Step~2. Machine $D$ has mass $2$, hence two slots.
Machine $A_k$ has mass $1$, $1$, $4/5$ for $k = 1,2,3$, hence
$\lceil M_{A_k}\rceil = 1$ slot in every case. Both $a_k$'s and $u_k$'s
intervals are contained in that one slot; they are not in its interior
---$a_k$'s interval starts at the slot's left endpoint, and the interval of $u_k$ ends at its right endpoint---but
containment is all that is used, and it makes each of $a_k, u_k$ adjacent
to slot $1$ of $A_k$ and to no other slot of $A_k$. The mass exactly $1$
on $A_1$ and $A_2$ is the case that would break the argument: any more
fraction there and the machine has a second slot for $u_k$ to use.
Condition (R1) pins each $a_k$ to $A_k$, its only share of at least
$1/2$ being $x_{A_k a_k} = 3/5$, and that consumes $A_k$'s only slot. So
under (R1) the three jobs $u_1,u_2,u_3$ have available slots only on
$D$, of which there are two. Hall's condition fails, and no valid
matching satisfies (R1). Enumeration confirms it: $21$ valid matchings,
none satisfying (R1), every one of makespan $1$.
\end{proof}

\noindent
So a proof of the threshold-relative candidate in Definition~\ref{conj:74}
could not proceed by
first pinning every big job to a machine where its share is at least $1/2$
--- condition~(R1) --- and then matching what remains: it would have
to be an exchange argument in which a big job pushed to its lighter side
draws slack from the deficiency that pushed it. What is refuted is that one
pinning rule, not pinning as a strategy; the proposition is stated in that
weak form and nothing here strengthens it.
Theorem~\ref{thm:74false} has since closed form~(a) outright, so this
proposition no longer constrains a live proof attempt. We keep it for the
obstruction it isolates --- that the deficiency and the slack that would
cover it sit on different machines --- which is what the counterexample turns
into a pigeonhole. It is \emph{not} a constraint on the
vertex-restricted question of Remark~\ref{rem:74bvertex}, tempting as that
reading is:
the point exhibited above is not a vertex --- its twelve coordinates are all
positive and the constraints active at it, six job equalities and machine
$D$'s load, have rank $7$ --- and by the reading this note applies elsewhere,
a non-vertex point bears on nothing vertex-restricted.

\section{The family the \texorpdfstring{$\tfrac74$}{7/4} bound is known to be tight against}\label{sec:threepath}

\paragraph{The family the $7/4$ bound is known to be tight against.}
The remaining question about $\tfrac74$ concerns not our families but the one
already in the literature: the three-path family of \cite{EKS14} named in
Section~\ref{sec:prelim}, which witnesses the relaxation's integrality gap of
$\tfrac74$. The next three results generalize it, bound it strictly below
$\tfrac74$, and say what that does and does not settle. The generalization is
ours; the family is theirs.

\begin{lemma}[the family's optimum]\label{lem:familyopt}
Let $F(r,k,S,c)$ be the generalized three-path family: machines $u, v$ and
inner vertices $v^p_1, \dots, v^p_{2k}$ for paths $p = 1, \dots, r$; jobs
$e^p_t$ joining $v^p_{t-1}$ to $v^p_t$, with $v^p_0 = u$ and
$v^p_{2k+1} = v$, of size $1$ for $t$ odd and $S$ for $t$ even; a dedicated
singleton of size $c$ on every machine. For $r \ge 3$, $k \ge 1$,
$0 < S \le 1$ and $c \ge 0$, $\OPT(F(r,k,S,c)) = 1 + S + c$.

\emph{Both bounds are what Section~\ref{sec:prelim} requires, and neither is
cosmetic.} Job sizes there are positive: a size-zero job carries no fractional
mass, so it meets no slot in positive length, is adjacent to none, and no valid
matching can place it. Hence $S > 0$. At $c = 0$ the dedicated singletons are
\emph{absent} rather than present at size zero, which is the reading used
throughout and the one under which every job size stays positive.
\end{lemma}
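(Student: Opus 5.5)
The plan is to prove the two inequalities separately: an explicit schedule for the upper bound, and a counting argument for the lower bound in the spirit of Lemma~\ref{lem:deficiency}. Throughout, the dedicated singleton on each machine is allowed on that machine only, so it sits there in every schedule. That fact alone adds exactly $c$ to every machine's load.

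\emph{Upper bound.} Orient each path toward $v$, stopping one short. On path $p$, send $e^p_t$ to $v^p_t$ for $t = 1, \dots, 2k$, and send the last job $e^p_{2k+1}$ to $v^p_{2k}$; this vertex exists because $k \ge 1$. Then the path jobs are placed as follows.
\begin{itemize}
\item $v^p_t$ for $t < 2k$ carries one path job, of size at most $1$.
\item $v^p_{2k}$ carries $e^p_{2k}$ of size $S$ (since $2k$ is even) and $e^p_{2k+1}$ of size $1$.
\item $u$ and $v$ carry no path job.
\end{itemize}
Adding $c$ everywhere, the makespan is $1 + S + c$, with equality at each $v^p_{2k}$. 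This works for every $r$.

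\emph{Lower bound.} I will show that some machine takes two path jobs, and that any two path jobs sharing a machine have total size at least $1 + S$.
\begin{itemize}
\item Each inner vertex $v^p_t$ is allowed exactly two path jobs, $e^p_t$ and $e^p_{t+1}$. These have consecutive indices, so one is odd and one is even, with sizes $1$ and $S$.
\item The endpoint $u$ is allowed only the jobs $e^p_1$, and $v$ only the jobs $e^p_{2k+1}$. Both indices are odd, so every path job an endpoint can take has size $1$.
\end{itemize}
Hence a machine holding two path jobs carries either $1 + S$ (an inner vertex) or $2 \ge 1 + S$ (an endpoint, using $S \le 1$). In both cases its load is at least $1 + S + c$.

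Now suppose every machine held at most one path job. There are $2kr + 2$ machines with path jobs available: $2k$ inner vertices per path, plus $u$ and $v$. There are $r(2k+1)$ path jobs. Since $r(2k+1) > 2kr + 2$ exactly when $r > 2$, the hypothesis $r \ge 3$ makes this impossible. So every schedule has makespan at least $1 + S + c$.

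\emph{Where the care goes.} None of the steps is hard. The only real content is the lower-bound observation that both endpoints see only odd-indexed, unit-size jobs, because each path has odd length $2k+1$. Two hypotheses deserve an explicit mention in the write-up:
\begin{itemize}
\item $S \le 1$ is what makes an endpoint pair cost at least $1 + S$.
\item $S > 0$ is needed only so that the instance is a legitimate one, per the remark in the statement.
\end{itemize}
At $c = 0$ the singletons are absent, and both arguments go through verbatim with $c = 0$.
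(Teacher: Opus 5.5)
Your proof is correct. Your upper bound is the paper's construction with the particular choice $t_p = 2k$ on every path; the paper lets each path place its doubly loaded vertex at any $t_p \in \{1,\dots,2k\}$.

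Your lower bound uses the same ingredients as the paper's, organised as a different count.

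\textbf{The paper's route} is path by path. It encodes an orientation of path $p$ as a word $\sigma \in \{L,R\}^{2k+1}$. An inner vertex receives both its edges exactly at a factor $RL$. A word with no $RL$ factor has the form $L^aR^b$, so it sends a unit edge to a pole. The paper then pigeonholes the $r$ pole-bound unit jobs over the two poles, which puts $\lceil r/2\rceil \ge 2$ of them on one pole.

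\textbf{Your route} pigeonholes once, globally. Placing $r(2k+1)$ path jobs on $2kr+2$ machines leaves an excess of exactly $r-2$. You close with one observation: any two path jobs sharing a machine weigh at least $1+S$. At an inner vertex they are one of each parity; at a pole they are two unit jobs.

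\textbf{Comparison.} Your version is shorter and needs no encoding. It makes both the use of $r \ge 3$ and the failure at $r = 2$ visible in one line: at $r = 2$ the two counts coincide, so one path job per machine is possible, as in the paper's example with $L^{2k+1}$ on one path and $R^{2k+1}$ on the other. The paper's version is finer-grained, telling you on each path separately which of the two overload configurations must occur. For the lemma as stated, nothing is lost by your route.
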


\begin{proof}
($\le$) Fix $t_p \in \{1, \dots, 2k\}$ for each path and orient $e^p_s$ to
its right endpoint for $s \le t_p$ and to its left endpoint for
$s > t_p$. Then $u$ and $v$ receive no edge and carry load $c$; the vertex
$v^p_{t_p}$ receives $e^p_{t_p}$ and $e^p_{t_p+1}$, whose indices have
opposite parity, so its load is $1 + S + c$; every other inner vertex
receives one edge, of size at most $1$.

($\ge$) Encode an orientation of path $p$ by $\sigma \in \{L,R\}^{2k+1}$,
where $\sigma_t = L$ means $e^p_t$ goes to $v^p_{t-1}$ and $\sigma_t = R$ to
$v^p_t$. The inner vertex
$v^p_t$ receives both its edges exactly when
$\sigma_t \sigma_{t+1} = RL$, and then carries $1 + S + c$, the two indices
having opposite parity. If no such $t$ exists the word contains no factor
$RL$, so $\sigma = L^a R^b$ with $a + b = 2k+1$; then $u$ receives $e^p_1$
if $a \ge 1$ and $v$ receives $e^p_{2k+1}$ if $b \ge 1$, so path $p$
delivers a size-$1$ job to a pole either way. If that happens on all $r$
paths, one pole receives at least $\lceil r/2 \rceil \ge 2$ size-$1$ jobs
and carries at least $2 + c \ge 1 + S + c$.
\end{proof}

\noindent
We use the hypothesis $r \ge 3$ once, in the last step, and it is necessary:
at $r = 2$ the orientation $L^{2k+1}$ on one path and $R^{2k+1}$ on the
other gives makespan $1 + c$.

\begin{proposition}[the family defeats no matching]\label{prop:floor}
Let $r \ge 3$, $k \ge 1$, $0 < S \le \tfrac12$ and $0 \le c < 1$, let $x$ be
any feasible solution at any target $T \in [\max(1, 2c),\, 2)$, and take any
valid matching. (The bound $c < 1$ is what keeps that interval non-empty; the interval's right
endpoint $2$ is what keeps every size-$1$ job big, which the proof assumes
throughout.) Then the makespan on $F(r,k,S,c)$ is exactly $1 + S + c$.
\end{proposition}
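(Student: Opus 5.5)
The plan is to prove the two inequalities separately. The lower bound comes from Lemma~\ref{lem:familyopt}; the upper bound is a machine-by-machine inspection of the slot structure.

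\emph{Lower bound.} Every valid matching, together with the Step-2 assignments, places each job on a machine in its support. It is therefore an integral schedule of $F(r,k,S,c)$, and its makespan is at least $\OPT = 1 + S + c$ by Lemma~\ref{lem:familyopt}. This uses $r \ge 3$, $k \ge 1$ and $0 < S \le 1$, all of which the hypotheses supply. So only the upper bound needs work: no machine exceeds $1 + S + c$.

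\emph{Upper bound at inner vertices.} An inner vertex $v^p_t$ is allowed exactly three jobs: $e^p_t$, $e^p_{t+1}$ and its singleton. Their sizes are one $1$, one $S$ and $c$, because consecutive path indices have opposite parity. However the jobs are split between Step~2 and Step~3, the machine's load is at most $1 + S + c$.

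\emph{Upper bound at the poles.} At $u$ (and symmetrically at $v$) the allowed jobs are the $r$ size-$1$ jobs $e^p_1$ and the singleton of size $c$. The classification of jobs is as follows.
\begin{itemize}
\item Since $T < 2$, every size-$1$ job is big, and the big-job row gives $\sum_p x_{u e^p_1} \le 1$.
\item Since $T \ge 2c$, the singleton is not big.
\item Since $S \le \tfrac12 \le T/2$, no size-$S$ job is big.
\end{itemize}
In the nonincreasing order the big fractions occupy $[0, m)$ with $m \le 1$, and the singleton, if present, occupies $[m, m+1)$. By the half-open convention every big job at $u$ is adjacent to slot~$1$ only. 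If Step~2 assigns nothing at $u$, at most one size-$1$ job is matched there, and the load is at most $1 + c \le 1 + S + c$.

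\emph{The main obstacle.} The hard case is a pole that receives a Step-2 job, that is, some $x_{u e^p_1} \ge \beta$. The residual big mass at $u$ is then below $\tfrac13$, but it still sits in slot~$1$. A second size-$1$ job $e^q_1$ with positive share at $u$ looks matchable there, with the singleton taking slot~$2$. That would give load $2 + c$, and the per-machine trichotomy only caps case (S) at $\tfrac32 + \tfrac{b_i}{3}$.

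What I would try first is a feasibility-plus-Hall argument on the path. Follow path $q$ from $u$ and use the inner vertices' load rows at target $T$ to show that $e^q_1$'s share at $v^q_1$ leaves it adjacent to a slot that every valid matching must in fact give it. That would exclude such a matching.

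I am not confident this works. The local constraints I can see do not obviously forbid a feasible $x$ with $x_{u e^p_1} \ge \tfrac23$ and $0 < x_{u e^q_1} < \tfrac13$. If the argument fails, the case $b_u > 0$ is exactly where the statement would need an additional hypothesis. Two candidates are excluding Step-2 assignments at the poles, or taking $c = 0$ and checking directly whether $u$ then has a single slot. So this step should be checked first, and it may call for a counterexample rather than a proof.
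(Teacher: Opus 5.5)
Your lower bound, your inner-vertex bound and your pole case with no Step-2 job are correct, and they follow the paper's argument. The gap is the case you flag yourself: a pole $u$ at which Step~2 assigns some $e^{p_0}_1$. It closes with one observation you are missing, namely that Step~2 also tests the \emph{other} endpoint of each edge. The big-job row at $u$ gives $x_{u e^q_1} \le 1 - x_{u e^{p_0}_1} \le 1 - \beta = \tfrac13$ for every $q \ne p_0$. Since $e^q_1$ is allowed only on $u$ and $v^q_1$, its share at $v^q_1$ is $1 - x_{u e^q_1} \ge x_{u e^{p_0}_1} \ge \beta$. The Step-2 test is non-strict, so every such $e^q_1$ is assigned to its inner endpoint $v^q_1$, and its fraction at $u$ is deleted before the slot structure is built. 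So once a pole receives a Step-2 job, no size-$1$ job remains in its slot structure at all. Only the singleton $d_u$ is left, in one slot, and the load is $1 + c \le 1 + S + c$. The job $e^q_1$ now sitting at $v^q_1$ is already covered by your inner-vertex bound, which holds however that vertex's three jobs are split between Steps~2 and~3.

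The feasible point you worry about, with $x_{u e^{p_0}_1} \ge \tfrac23$ and $0 < x_{u e^q_1} < \tfrac13$, does exist, but it is harmless. There $e^q_1$ has share above $\tfrac23$ at $v^q_1$ and never reaches $u$'s slots. The residual big mass at $u$ is therefore zero, not ``below $\tfrac13$''. No Hall argument along the path is needed, and the statement needs no extra hypothesis such as excluding Step-2 assignments at the poles or taking $c = 0$. As written, though, your proposal leaves this case open, and even allows that it might call for a counterexample, so it does not yet prove the proposition.
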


\begin{proof}
Since $T < 2$, every size-$1$ job is big. First, $c \le T/2$: otherwise
every dedicated job is big with share $1$ on its own machine, exhausting
that machine's big-job budget, so no size-$1$ edge could be covered at all.
Hence $d_u$ is not big, Step~2 does not take it, and being a singleton it
must be matched to a slot of $u$. (At $c = 0$ there is no $d_u$: $u$'s mass is
then at most $1$, so it has a single slot and takes at most one edge, and the
bound below holds unchanged.) Next, every job on $u$ other than $d_u$ is
big, so the big-job constraint gives $\sum_p x_{e^p_1, u} \le 1$ and $u$'s
fractional mass is at most $2$: at most two slots. Finally, if Step~2
assigns some $e^{p_0}_1$ to $u$ then $x_{e^{p_0}_1,u} \ge \beta$, so
$x_{e^p_1,u} \le 1 - \beta$ for every other $p$, whence
$x_{e^p_1, v^p_1} \ge \beta$ and Step~2 assigns those edges to their inner
endpoints; no edge remains in $u$'s slot structure.

Now bound each machine. An inner vertex is incident to exactly three jobs
---its dedicated job, one size-$1$ edge and one size-$S$ edge---so under any
assignment whatever its load is at most $1 + S + c$. For $u$: if Step~2
placed an edge there, $u$ has a single slot, which $d_u$ takes, and the load
is $1 + c$; otherwise $u$ carries no Step-2 job and at most two slots, one
of them $d_u$'s, hence at most one matched edge, and the load is again at
most $1 + c$. The same for $v$. Since $S \ge 0$ every machine is at most
$1 + S + c$, and a valid matching is a schedule, so by
Lemma~\ref{lem:familyopt} the makespan is at least $\OPT = 1 + S + c$.
\end{proof}

\noindent
So on this family Step~3's freedom makes no difference: best matching, worst
matching and optimum coincide, for every feasible $x$, every target in the
stated range and every $\beta > 1/2$, and not for the best matching alone.

The relaxation's least feasible threshold on the family is bounded below by
an explicit maximum. Write
\[
\theta_{\mathrm{cnt}} \;=\; \frac{k(1+S) + 1}{2k + 2/r},
\qquad
\theta_{\mathrm{inn}} \;=\; \frac{r\bigl(k(1+S) + 1\bigr) - 2}{2kr},
\]
the first the total work spread over all $2kr + 2$ machines, the second what
the $2kr$ inner vertices carry once each pole is capped at edge weight $1$.
Then, \emph{at every feasible target $T < 2$},
\[
T \;\ge\; L \;:=\; \max\Bigl(1,\ 2c,\ c + \tfrac{r}{r+1},\
c + \theta_{\mathrm{cnt}},\ c + \theta_{\mathrm{inn}}\Bigr) .
\]
The hypothesis is not decoration. Each term after the first is derived from the
poles being capped at big-job share $1$, and the size-$1$ edges are big only
while $T < 2$; above $2$ nothing is big, the cap is not a constraint, and the
terms bound nothing. Dropping it makes the display false: at $r = 5$, $k = 1$,
$S = \tfrac12$, $c = \tfrac{99}{100}$ the least feasible target is
$\tfrac{1219}{600}$ while $L = \tfrac{51}{25}$ is larger by $\tfrac1{120}$
(\texttt{p29lever\_tlp\_bound\_check\_2026-09-08.py}). What the hypothesis
still buys is the only thing used below: if $L \ge 2$ then no target under $2$
is feasible.
Whenever the relaxation is feasible at a target $T < 2$ this gives
$L \le T_{\mathrm{LP}} \le T < 2$; $L$ itself is not below $2$ for every
parameter, and at $r = 5$, $k = 1$, $S = \tfrac12$, $c = \tfrac{99}{100}$ it is
$\tfrac{51}{25}$, which is only to say that the family is infeasible below $2$
there.
The first four terms are the size-$1$ jobs; the bound $c \le T/2$ above; the
cut $\{e^p_1\}_p$; and the total work spread over $2kr + 2$ machines. The
third needs its argument written out, because the inequality it starts from
gives the term only in one of two cases. Write $\tau = T - c$ for what a
machine has left once its dedicated job is placed. The $r$ jobs of the cut
have total weight $r$ and meet only $u$ and the $v^p_1$; each $v^p_1$
absorbs at most $\tau$, and $u$ at most $\min(\tau, 1)$, the $1$ because
those jobs have size $1$ and are big for every $T < 2$, so the big-job
constraint caps their total share at $u$ by $1$. Hence
$r \le \min(\tau,1) + r\tau$. If $\tau \le 1$ that reads
$r \le \tau(1 + r)$, i.e.\ $\tau \ge r/(r+1)$; and if $\tau > 1$ then
$T > c + 1 > c + r/(r+1)$ outright. Either way $T \ge c + r/(r+1)$.

The fifth term sharpens the fourth. Only
$e^p_1$ reaches $u$ and only $e^p_{2k+1}$ reaches $v$; both have size $1$
and so are big for every $T < 2$, so the big-job constraint caps the edge
weight at each pole by $1$ and the rest of the work --- total
$r\bigl(k(1+S)+1\bigr) - 2$ at least --- is confined to the $2kr$ inner
vertices, which also carry their own dedicated jobs.

Neither counting term dominates: $\theta_{\mathrm{inn}} >
\theta_{\mathrm{cnt}}$ exactly when $\theta_{\mathrm{cnt}} > 1$, both
inequalities reducing to $r\bigl(k(1+S)+1\bigr) > 2kr + 2$. Inside the range
$r \ge 3$, $k \ge 1$, $0 < S \le \tfrac12$ used below, that happens only
for $k = 1$ with $rS > 2$; everywhere else the fifth term is slack and $L$
is the four-term maximum.

\begin{measurement}[the general lower bound is attained where it was sampled]\label{meas:familythreshold}
Lemma~\ref{lem:threepathexact} proves $T_{\mathrm{LP}} = L$ on the extremal line
and nothing depends on this measurement any longer; what it still says is that
the equality is not peculiar to that line.
$T_{\mathrm{LP}} = L$ at every point of a grid over $r \in \{3,4,5,6,7,9,
12\}$, $k \le 4$, $S \in \{0, \allowbreak \tfrac18, \allowbreak \tfrac14, \allowbreak \tfrac25, \allowbreak \tfrac12\}$ and
$c \in \{0, \allowbreak \tfrac18, \allowbreak \tfrac14, \allowbreak
\tfrac25\}$ with at most $120$ machines
(\texttt{p29amb\_\allowbreak family\_\allowbreak threshold.py}); the four-term maximum, omitting
$\theta_{\mathrm{inn}}$, fails at $44$ of those points, all with $k = 1$ and $rS > 2$. Equality
also holds at every $k \le 20$ and at $k = 25, 30, 40$ along the extremal line
$r = 3$, $S = \tfrac12$, $c = \tfrac14$, agreeing there with the value
Lemma~\ref{lem:threepathexact} proves. We do not claim equality in general, and
nothing here does: the proposition's strict half uses only the displayed
inequality and its extremal half uses only the lemma.
\end{measurement}

\begin{lemma}[the threshold on the extremal line]\label{lem:threepathexact}
For every $k \ge 1$,
\[
  T_{\mathrm{LP}}\bigl(F(3,k,\tfrac12,\tfrac14)\bigr) \;=\; \frac{12k+7}{12k+4} .
\]
So the least feasible threshold on this line is fixed by $k$ alone, and it
exceeds $1$ for every $k$, approaching $1$ from above as $k$ grows.
\end{lemma}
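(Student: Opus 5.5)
The plan is to prove two matching bounds. The lower bound comes from the displayed inequality $T \ge L$. The upper bound comes from an explicit feasible point at $T^\ast = \tfrac{12k+7}{12k+4}$ that loads every machine to exactly $T^\ast$.

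\emph{Lower bound.} Specialize the five terms of $L$ to $r = 3$, $S = \tfrac12$, $c = \tfrac14$.
\begin{itemize}
\item The counting term is $\theta_{\mathrm{cnt}} = \tfrac{3k/2+1}{2k+2/3} = \tfrac{9k+6}{12k+4}$, so $c + \theta_{\mathrm{cnt}} = \tfrac{12k+7}{12k+4}$.
\item The other terms are $1$, $2c = \tfrac12$ and $c + \tfrac{r}{r+1} = 1$, all smaller.
\item Here $rS = \tfrac32 < 2$, so by the remark after the display the fifth term $c + \theta_{\mathrm{inn}}$ is slack. Directly, $\theta_{\mathrm{inn}} = \tfrac{9k+2}{12k} < \theta_{\mathrm{cnt}}$ because $\theta_{\mathrm{cnt}} < 1$.
\end{itemize}
Hence $L = T^\ast < 2$. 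The display applies at every feasible target below $2$. Any target at or above $2$ exceeds $T^\ast$ trivially, so once the upper bound shows the relaxation is feasible at $T^\ast$, we get $T_{\mathrm{LP}} \ge T^\ast$. The stated asymptotics ($T^\ast > 1$, $T^\ast \to 1$ from above) are then immediate from the closed form.

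\emph{Upper bound.} Write $\tau = T^\ast - c = \tfrac{9k+6}{12k+4}$. For $k \ge 1$ we have $\tfrac34 < \tau < 1$.

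On each path $p$, give edge $e^p_t$ a share $a_t$ at its left endpoint $v^p_{t-1}$ and $1 - a_t$ at its right endpoint $v^p_t$. The same $a_t$ is used on all three paths, by symmetry.
\begin{itemize}
\item At the pole $u$, set $a_1 = \tau/3$. Then $u$ carries $3 \cdot \tau/3 = \tau$ of edge weight plus its dedicated $c$, for a load of exactly $T^\ast$.
\item The big-job budget at $u$ is $3a_1 = \tau \le 1$.
\item At each inner vertex, impose $(1-a_t)p_t + a_{t+1}p_{t+1} = \tau$ and solve recursively for $a_{t+1}$. For example, $a_2 = 2\bigl(\tfrac{4\tau}{3} - 1\bigr) \ge 0$, which holds precisely because $\tau \ge \tfrac34$.
\end{itemize}
Total work equals $(2kr+2)T^\ast$, which is exactly what defines $\theta_{\mathrm{cnt}}$. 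With every other machine loaded to exactly $T^\ast$, the recursion therefore closes consistently at $v$: the last edge leaves exactly $\tau/3$ on $v$ from each path. The reflection symmetry $t \mapsto 2k+2-t$ of the path should make the solution palindromic, $1 - a_{2k+2-t} = a_t$.

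The remaining feasibility checks are short. Each inner vertex meets exactly one size-$1$ edge, so its big-job budget is at most $1$. Each job's shares sum to $1$ by construction. Every job is supported on its two allowed machines.

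\emph{Main obstacle.} The step I expect to be hardest is showing $a_t \in [0,1]$ for all $t$ and all $k$. The recursion alternates between the two update rules $a_{t+1} = 2(\tau - 1 + a_t)$ after odd $t$ and $a_{t+1} = \tau - \tfrac12(1 - a_t)$ after even $t$. Over one period (a pair of steps) it is affine with slope $1$, so the odd-indexed $a_t$ form an arithmetic progression, and so do the even-indexed ones.

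To handle this, first derive closed forms $a_{2s+1} = \tau/3 + s\delta$ with $\delta = 3\tau - \tfrac52$, and the corresponding formula for $a_{2s}$. Then check positivity and the bound $\le 1$ only at the two ends of each progression, $s = 0$ and $s = k$, where they reduce to $\tau \ge \tfrac34$ and the boundary condition at $v$.

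If some $a_t$ left $[0,1]$, the fallback would be to give the three paths different share patterns. Since the display's lower bound already matches $T^\ast$, such a violation would show the lemma false rather than only the construction.
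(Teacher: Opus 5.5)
Your plan follows the paper's proof. The lower bound is the displayed $T \ge L$ with the counting term $c+\theta_{\mathrm{cnt}}$ dominant, and your treatment of targets $T \ge 2$ and of the fifth term is correct. The upper bound is the symmetric point loading every machine to exactly $T^\ast$, with the odd-indexed shares in arithmetic progression. You differ in one place, and usefully: the paper obtains $\tau$ by solving the closure equation $3(1-y_{2k+1})=\tau$ at the right pole. You instead fix $\tau=\theta_{\mathrm{cnt}}$ from the lower bound and get closure at $v$ from conservation of total work. That is valid, because each job's two shares sum to $1$ by construction.

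The step you single out as the main obstacle, however, rests on a miscomputed increment. Composing your two update rules, for odd $t$,
\[
  a_{t+2} \;=\; \tau - \tfrac12 + \tfrac12\cdot 2\bigl(\tau - 1 + a_t\bigr) \;=\; a_t + 2\tau - \tfrac32 ,
\]
so $\delta = 2\tau - \tfrac32 = \tfrac{6}{12k+4}$, not $3\tau - \tfrac52 = \tfrac{8-3k}{12k+4}$.

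Your value contradicts your own closure $a_{2k+1} = 1 - \tfrac{\tau}{3}$ at every $k$. The two differ by $k(\tau-1) \ne 0$; at $k=1$ you would get $\tfrac58$ instead of $\tfrac{11}{16}$. Your $\delta$ is also negative from $k=3$, and from $k=4$ on it makes $\tfrac{\tau}{3}+k\delta$ negative. Your fallback paragraph would then misread that as a refutation of the lemma.

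With the correct increment the shares are $a_{2m+1} = \tfrac{3k+2+6m}{12k+4}$ for $0\le m\le k$ and $a_{2m} = \tfrac{3m-1}{3k+1}$ for $1 \le m \le k$, which are the paper's values. They form two increasing progressions with endpoints $\tfrac{\tau}{3}$, $1-\tfrac{\tau}{3}$ and $\tfrac{8\tau}{3}-2$, $3-\tfrac{8\tau}{3}$. All of these lie strictly inside $(0,1)$ because $\tau > \tfrac34$.

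Finally, your inner-vertex big-job check tacitly uses $T^\ast>1$. Only then are the size-$\tfrac12$ edges not big, so that each inner vertex's budget is the single share of its size-$1$ edge. The paper states this explicitly, and you should too.
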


\begin{proof}
\emph{The lower bound} is the displayed one, and on this line its fourth term is
the largest: $c + \theta_{\mathrm{cnt}} = \tfrac14 + \tfrac{9k+6}{12k+4} =
\tfrac{12k+7}{12k+4}$, against $1$, $\tfrac12$ and $1$ from the first three and
$1 + \tfrac1{6k}$ from the fifth, which is smaller for every $k \ge 1$.

\emph{The upper bound} needs a feasible solution at exactly that target, and one
is found rather than guessed by asking for the solution that loads every machine
equally. Write $\tau$ for the edge load every machine is to carry, so that with
its dedicated job its load is $T = \tfrac14 + \tau$, and let $y_t$ be the share
of $e^p_t$ placed on its left endpoint, the same on all three paths since the
paths are interchangeable. The left pole carries three shares $y_1$ of size-$1$
jobs, so $3y_1 = \tau$. An odd inner vertex carries the right share of a size-$1$
job and the left share of a size-$\tfrac12$ one, and an even inner vertex the
reverse, so
\[
  (1 - y_{2m+1}) + \tfrac12 y_{2m+2} \;=\; \tau ,
  \qquad
  \tfrac12(1 - y_{2m}) + y_{2m+1} \;=\; \tau ,
\]
that is $y_{2m+2} = 2(\tau - 1 + y_{2m+1})$ and
$y_{2m+1} = \tau - \tfrac12 + \tfrac12 y_{2m}$. Substituting the first into the
second, the odd shares advance by a constant,
\[
  y_{2m+3} - y_{2m+1} \;=\; 2\tau - \tfrac32 ,
\]
so the whole point is determined by $\tau$ --- and $\tau$ is determined by the
far end, since the right pole carries three shares $1 - y_{2k+1}$ and
$3(1 - y_{2k+1}) = \tau$ closes the system. It closes at
$\tau = \tfrac{9k+6}{12k+4}$, giving $T = \tfrac{12k+7}{12k+4}$ and
\[
  y_{2m+1} \;=\; \frac{3k+2+6m}{12k+4} \quad (0 \le m \le k),
  \qquad
  y_{2m} \;=\; \frac{3m-1}{3k+1} \quad (1 \le m \le k).
\]

Feasibility is then four checks. Every share lies strictly between $0$ and $1$:
the odd ones run from $\tfrac{3k+2}{12k+4}$ up to $\tfrac{9k+2}{12k+4}$ and the
even ones from $\tfrac{2}{3k+1}$ up to $\tfrac{3k-1}{3k+1}$. Every machine's load
is exactly $T$, by the three equalities the shares were built to satisfy. Every
job has size at most $1 < T$, so none is zeroed by the target. And $T > 1$ makes
$T/2 > \tfrac12$, so the size-$1$ edges are the only big jobs: an inner vertex
meets exactly one of them and its big-job budget is that single share, while a
pole meets three, of total share $\tau < 1$ for every $k \ge 1$. Hence
$T_{\mathrm{LP}} \le T$, and with the lower bound, equality.
\end{proof}

\begin{proposition}[the family's ceiling]\label{prop:familyceiling}
On $F(r,k,S,c)$ with $r \ge 3$ and $0 < S \le \tfrac12$, the ratio
$(1+S+c)/T_{\mathrm{LP}}$ is strictly below $\tfrac74$, and $\tfrac74$ is the
supremum: along $r = 3$, $S = \tfrac12$, $c = \tfrac14$ the ratio is exactly
$\tfrac74 \cdot \tfrac{12k+4}{12k+7}$, which increases to $\tfrac74$ as
$k \to \infty$. We do not claim that line is the only one approaching it; the
proof's own upper bound tends to $\tfrac74$ for every $r \ge 3$.
\end{proposition}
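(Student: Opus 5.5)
The plan is to compare the optimum of Lemma~\ref{lem:familyopt}, $\OPT = 1+S+c$, with a lower bound on $T_{\mathrm{LP}}$, so that the strict inequality reduces to elementary inequalities in $(r,k,S,c)$. The supremum half then comes from Lemma~\ref{lem:threepathexact}. Since $S \le \tfrac12$, I will use $1+S+c \le \tfrac32 + c$ wherever $S$ is not needed.

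I would split on whether $T_{\mathrm{LP}} < 2$.

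\emph{Case $T_{\mathrm{LP}} < 2$.} Here the displayed bound $T_{\mathrm{LP}} \ge L$ applies. I keep only three of its terms: $1$, $c + \tfrac{r}{r+1} \ge c + \tfrac34$ (using $r \ge 3$), and $c + \theta_{\mathrm{cnt}}$.
\begin{itemize}
\item If $c > \tfrac14$, the term $c + \tfrac34$ suffices. The inequality $1+S+c < \tfrac74(c+\tfrac34)$ is equivalent to $S - \tfrac{5}{16} < \tfrac34 c$. This follows from $S \le \tfrac12$ and $c > \tfrac14$.
\item If $c \le \tfrac14$ and $S + c < \tfrac34$, the term $1$ suffices, since then $1+S+c < \tfrac74$.
\item The remaining region is $c \le \tfrac14$ and $S + c \ge \tfrac34$. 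It is a small triangle with corner $(S,c) = (\tfrac12,\tfrac14)$, and there I use $c + \theta_{\mathrm{cnt}}$. The first step is to note that $\theta_{\mathrm{cnt}}$ is increasing in $r$. Hence $\theta_{\mathrm{cnt}} \ge \frac{k(1+S)+1}{2k+2/3}$, and the target becomes
\[
  1 + S \;<\; \tfrac34 c + \tfrac74\,\frac{k(1+S)+1}{2k+2/3} .
\]
The right side increases in $c$, so the worst case is the smallest admissible value, $c = \tfrac34 - S$. After clearing denominators, what remains is a polynomial inequality in $k$ and $S$ over $S \in [\tfrac12,\tfrac34]\cap(0,\tfrac12] = \{\tfrac12\}$, together with neighbouring values when $c$ sits at $\tfrac14$.
\end{itemize}
At the corner the inequality reads $\theta_{\mathrm{cnt}} > \tfrac34$. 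This holds because $\tfrac32 k + 1 > \tfrac32 k + \tfrac12$, and it is exactly where strictness comes from.

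\emph{Case $T_{\mathrm{LP}} \ge 2$.} The terms of $L$ coming from the pole cap are unavailable here. However, the total-work bound $T \ge c + \theta_{\mathrm{cnt}}$ uses only the load rows, so it holds at every $T$. Together with $\theta_{\mathrm{cnt}} > \tfrac34$ this gives
\[
  \frac{\OPT}{T_{\mathrm{LP}}} \;\le\; \frac{\tfrac32 + c}{\max\bigl(2,\ c + \tfrac34\bigr)} .
\]
Its maximum is at $c = \tfrac54$, where it equals $\tfrac{11}{8} < \tfrac74$.

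\emph{Supremum.} On the line $r=3$, $S=\tfrac12$, $c=\tfrac14$ we have $\OPT = \tfrac74$, and Lemma~\ref{lem:threepathexact} gives $T_{\mathrm{LP}} = \frac{12k+7}{12k+4}$. So the ratio is exactly $\tfrac74\cdot\frac{12k+4}{12k+7}$, which is increasing in $k$ with limit $\tfrac74$.

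\emph{Main obstacle.} I expect the hardest step to be the corner region $c \le \tfrac14 \le \tfrac34 - S$. The slack there vanishes as $k \to \infty$, so the estimate must be uniform in $k$, and it must not lose the strict inequality that $\theta_{\mathrm{cnt}} > \tfrac34$ provides. My first attempt would be to eliminate $c$ as above and then check the resulting inequality by clearing the denominator $2k + \tfrac23$. If the $S$-dependence gets in the way, I would instead set $S = \tfrac12$ and use the fact that the left side minus the right side is increasing in $S$.
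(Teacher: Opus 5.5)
Your argument works in substance, but it takes a different route from the paper's, and one inequality is misapplied.

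\textbf{Comparison with the paper.} The paper uses only the two terms $1$ and $c+\theta_{\mathrm{cnt}}$ of $L$. Both hold at every feasible target: the first because size-$1$ jobs are zeroed below $1$, the second by total-work averaging. So the paper needs no split on $T_{\mathrm{LP}}<2$. It bounds the ratio by the single expression $2+S-\theta_{\mathrm{cnt}}$ in both regimes $c+\theta_{\mathrm{cnt}}\le 1$ and $c+\theta_{\mathrm{cnt}}>1$; the second regime uses $\theta_{\mathrm{cnt}}<1+S$. It notes that this expression increases in $S$ because $\partial_S\theta_{\mathrm{cnt}}<\tfrac12$. At $S=\tfrac12$ it reads $\tfrac52-\frac{\frac32 k+1}{2k+2/r}$, which is below $\tfrac74$ exactly when $r>\tfrac32$.

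You instead bring in the cut term $c+\tfrac{r}{r+1}$. That term is valid only below $2$, so you pay for it with the extra case $T_{\mathrm{LP}}\ge 2$. In return, each subcase is a one-line check, and you isolate the only place $\theta_{\mathrm{cnt}}$ matters: the corner where the extremal line sits. The paper's route buys a single closed-form bound, uniform in $c$. That bound is what the statement's last sentence means by ``the proof's own upper bound tends to $\tfrac74$ for every $r\ge 3$''; your case split does not produce such a bound. Your supremum half is identical to the paper's.

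\textbf{First correction: the third region is a single point.} It is not a triangle: $c\le\tfrac14$, $S\le\tfrac12$ and $S+c\ge\tfrac34$ together force $(S,c)=(\tfrac12,\tfrac14)$. So there is no elimination of $c$, no polynomial inequality in $(k,S)$, and no ``neighbouring values''. The whole region is the single check $\theta_{\mathrm{cnt}}>\tfrac34$ at $S=\tfrac12$, and you already give it. It holds at each finite $k$, which is all that strictness requires. Your ``main obstacle'' about uniformity in $k$ therefore does not arise.

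\textbf{Second correction: the case $T_{\mathrm{LP}}\ge 2$.} There you use $\theta_{\mathrm{cnt}}>\tfrac34$ for every $S\in(0,\tfrac12]$. You only proved it at $S=\tfrac12$, and it is false in general. At $r=3$, $k=4$, $S=\tfrac18$ one gets $\theta_{\mathrm{cnt}}=\tfrac{33}{52}<\tfrac34$. The repair is immediate. Since $\theta_{\mathrm{cnt}}-\tfrac{1+S}{2}=\tfrac{r-1-S}{2rk+2}>0$, we have $\theta_{\mathrm{cnt}}>\tfrac12$. Then $\frac{\OPT}{T_{\mathrm{LP}}}\le(\tfrac32+c)/\max(2,\,c+\tfrac12)\le\tfrac32<\tfrac74$, with the maximum now at $c=\tfrac32$ rather than $c=\tfrac54$. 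With these two corrections your proof is complete.
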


\begin{proof}
Only the first and fourth of the five lower-bound terms for $T_{\mathrm{LP}}$
carry the strict bound. The approach clause is separate: it evaluates the ratio
along one extremal line, where Lemma~\ref{lem:threepathexact} supplies the
threshold exactly.

Write $\theta = \theta_{\mathrm{cnt}}$, and note
$\theta - \tfrac{1+S}{2} = \tfrac{r-1-S}{2rk+2} > 0$ for $r \ge 3$. Only
$T_{\mathrm{LP}} \ge \max(1, c + \theta)$ is used. If $c + \theta \le 1$ then
the ratio is at most $1 + S + c \le 2 + S - \theta$; if $c + \theta > 1$ it
is $1 + (1 + S - \theta)/(c + \theta) < 2 + S - \theta$, using
$\theta < 1 + S$. The bound $2 + S - \theta$ increases in $S$, since
$\partial_S \theta = k/(2k + 2/r) < 1/2$, so at $S = 1/2$
\[
\frac{1+S+c}{T_{\mathrm{LP}}} \;\le\; \tfrac52 - \frac{\tfrac32 k + 1}{2k + \tfrac2r}
\;<\; \tfrac52 - \tfrac34 \;=\; \tfrac74 ,
\]
the last step because $r > 3/2$. For the supremum, at $r = 3$, $S = 1/2$,
$c = 1/4$ Lemma~\ref{lem:threepathexact} gives
$T_{\mathrm{LP}} = (12k+7)/(12k+4)$ exactly, while
Lemma~\ref{lem:familyopt} gives $\OPT = 1 + S + c = \tfrac74$. So the ratio
there is $\tfrac74 \cdot (12k+4)/(12k+7)$, below $\tfrac74$ at every finite $k$
and increasing to it.
\end{proof}

\noindent
Three warnings about the obvious closed form. Reading the threshold off the
extremal line gives $\max(1, \tfrac12 + \tfrac S2 + c) + \Theta(1/k)$; that is
right on $r = 3$, $S = 1/2$---where the correction is
$\tfrac{r-1-S}{2rk+2}$, positive, with constant tending to $1/4$---and wrong
elsewhere in three ways. The term $c + \tfrac{r}{r+1}$ is missing, and below
$S = \tfrac{(k-1)(r-1)}{k(r+1)}$ the error is a \emph{constant}, not
$O(1/k)$: at $r = 3$, $k = 4$, $S = \tfrac14$, $c = \tfrac12$ that form gives
$\tfrac98$, which is below $L = \tfrac54$ --- and $L$ is a lower bound, so the
form is not the threshold. There $L$ is carried by the very term the form omits,
$c + \tfrac{r}{r+1}$, and holds at $\tfrac54$ for every $S \le \tfrac38$. The term $2c$ is also missing, and where
$\max(1, \tfrac{1+S}{2} + c) = 1$ strictly exceeds $c + \theta_{\mathrm{cnt}}$
the threshold is exactly $1$ with no correction at all. A measurement taken on
the extremal line sees none of this.

Second, the extremal choice is $r = 3$, not any $r \ge 3$: for $r \ge 4$ the
cut term forces the ratio below $\tfrac32 + \tfrac1{r+1}$, which is $1.7$ at
$r = 4$. Exact maximization over $r \le 8$, $k \le 40$ and a rational grid in
$(S,c)$ puts the global maximum at $r = 3$.

Third --- and this one is a correction to the four-term form above, not to
the draft before it --- that maximum was asserted as an equality, and it is
not one. At every feasible target below $2$ the four terms are each a valid
lower bound, but their maximum is
not always attained: at $r = 5$, $k = 1$, $S = \tfrac12$, $c = 0$ the
four-term value is $25/24$ while the relaxation is infeasible there and
first becomes feasible at $21/20 = c + \theta_{\mathrm{inn}}$. Both verdicts
are by exact rational simplex and are deposited
(\texttt{p29amb\_family\_\allowbreak threshold\_\allowbreak refutation\_\allowbreak r5k1S1\_2c0\_1.json}).
The fifth term repairs it. Nothing downstream moves: the four-term maximum
understates, never overstates, so every ratio computed from it is an
overestimate of the true one, and the maximization just described --- whose
optimum sits at $r = 3$, where the fifth term is slack and the four-term
form is exact --- keeps its conclusion with a little room to spare. The four-term
form is a valid lower bound on $T_{\mathrm{LP}}$ and not in general its value;
it is used here only where it is exact.

So the family that forces $7/4$ on the relaxation forces nothing more on
the variant: the extra load that would overshoot also inflates the
threshold, at exactly the compensating rate. The separation is now exact rather
than sampled. On the generalized three-path family, at every feasible target
below $2$ and for $c < 1$, every valid matching finishes at $\OPT$
(Proposition~\ref{prop:floor}), while $\OPT/T_{\mathrm{LP}}$ stays
strictly below $\tfrac74$ at every finite member and reaches it only in the
limit. That family is extremal for the \emph{relaxation} and costs the rounding
nothing; the families of Sections~\ref{sec:74} to~\ref{sec:vertex} are extremal
for the \emph{rounding} at points where the relaxation's own gap is comfortable.
The two are different phenomena, which is Remark~\ref{rem:gapscope}'s point met
from the other side.

\paragraph{Role of computation.}
The results above rest on explicit constructions and proofs, not on the absence
of counterexamples in a search. The deposited computations regenerate the
reported measurements and independently check the arithmetic, matching
enumerations, and vertex certificates used in those constructions.

\section{What remains open}\label{sec:open}

The constants are determined. Against the threshold the relaxation was solved at
and against the true optimum alike, optimizing Step~3 leaves the supremum at
$\tfrac{11}{6}$, restricting Step~1 to a vertex leaves it there, and imposing
both at once leaves it there as well; the strict ceiling of \cite{Paper1} rules
out attainment in every case. That is Theorem~\ref{thm:invariance}, and it is
what this note has to say about the two freedoms. What the threshold buys is
settled too, in \cite{Paper1}: above $\tfrac12$ the guarantee is exactly
$\max\{\tfrac32 + \tfrac\beta2,\ \tfrac52 - \beta\}$, and at or below $\tfrac12$
exactly $\tfrac32 + (1-\beta)\lfloor 1/\beta\rfloor$, on both scales, so a lower
threshold does not improve the guarantee but destroys it, and $\tfrac23$ is the
unique best choice. Three questions remain.

\emph{How small can a near-extremal vertex instance be?} Write $N(\varepsilon)$
for the fewest machines carrying a vertex whose every valid matching is within
$\varepsilon$ of $\tfrac{11}{6}$, and $J(\varepsilon)$ for the fewest jobs. A gap
of $\tfrac1{4g}$ needs $g \ge \tfrac1{4\varepsilon}$, and by
Proposition~\ref{prop:compressed} the member at that $g$ can be built on $5g+2$
machines with $12g+2$ jobs, so
\[
  N(\varepsilon) \;\le\; 5\Bigl\lceil \tfrac1{4\varepsilon} \Bigr\rceil + 2 ,
  \qquad
  J(\varepsilon) \;\le\; 12\Bigl\lceil \tfrac1{4\varepsilon} \Bigr\rceil + 2 ,
\]
whence $\limsup_{\varepsilon \downarrow 0}\varepsilon N(\varepsilon) \le \tfrac54$
and $\limsup_{\varepsilon \downarrow 0}\varepsilon J(\varepsilon) \le 3$. Is
$N(\varepsilon) = \Theta(1/\varepsilon)$? The explicit constant sharpens the
question: what is $\liminf_{\varepsilon \downarrow 0}\varepsilon N(\varepsilon)$,
and is the $\tfrac54$ this construction achieves optimal? A lower bound is the
missing half.

\emph{Can the two choices be made together in polynomial time?}
Theorem~\ref{thm:step3hard} decides the subproblem for a \emph{supplied} $x$ and
no more: Step~1 is not adversarial, and a different Step-1 output on the same
instance may make Step~3 trivial. Is there a polynomial-time map taking an
instance and a feasible threshold to a feasible relaxation solution together with
a minimum-makespan valid matching for it? If none exists, everything here about
the oracle concerns a procedure of no direct algorithmic use.

\emph{What does completion cost at a structured $x$?} The same question with
Step~1 pinned: is minimum-makespan completion polynomial-time solvable when $x$
is a vertex returned by a specified polynomial-time linear-programming algorithm?
And since Theorem~\ref{thm:step3hard} is \emph{weak} NP-completeness, from
\textsc{Partition}, nothing here excludes a pseudo-polynomial algorithm for job
sizes with a bounded common denominator, or a fixed-parameter algorithm in the
number of machines or of slots per machine.

\section{What this note establishes}\label{sec:summary}

The note ends where a reader is most likely to want the account in one place,
so this section states it. Nothing here is new; every line points at the result
that carries it, and the three headings are the note's own distinction between
what is proved, what is measured at sampled parameters, and what is open.

\paragraph{Proved.} Three theorems, one per setting of the scheme, and they
agree: Section~\ref{sec:74} against the threshold, Section~\ref{sec:oracle}
against the optimum, Section~\ref{sec:vertex} at a vertex. Both forms of the
$\tfrac74$ expectation are false, and they fall to different
instances. Against the threshold the relaxation was solved at,
Theorem~\ref{thm:74false} gives a $19$-machine instance on which every valid
matching finishes at $44/25$ or above, and Theorem~\ref{thm:116} fixes the
constant any such bound must reach at exactly $\tfrac{11}{6}$, which is what the
algorithm already guarantees. Against the true optimum,
Theorem~\ref{thm:74bfalse} gives each unit job of that family its own partner,
which leaves every forced load unchanged and drops the optimum to $1$, forcing
the oracle to $85/48 > \tfrac74$ at $g = 4$; Corollary~\ref{cor:oracle} then
puts the oracle's worst ratio at exactly $\tfrac{11}{6}$, approached and
attained by nothing. Removing Step~3's freedom entirely --- at a cost we do not know how to pay in
polynomial time, though Theorem~\ref{thm:step3hard} does not show it cannot be
paid --- does not improve the constant that freedom was assumed to account
for.

Restricting Step~1 to a \emph{vertex} of the relaxation does not help either:
Theorem~\ref{thm:pinvertex} gives, for every $g \ge 2$, a vertex at which every
valid matching finishes at exactly $\tfrac{11}{6} - \tfrac1{4g}$. That family
sits at $T = \OPT = 1$, so it settles both scales at once
(Corollary~\ref{cor:vertexconst}) and binds every rule for choosing the matching,
randomized ones included (Corollary~\ref{cor:anyrule}). The four settings of the
two freedoms therefore carry one constant, $\tfrac{11}{6}$, attained in none of
them: Theorem~\ref{thm:invariance}. We also prove a reduction
confining every overload to two shapes (Theorem~\ref{thm:reduction}), show that
the natural two-phase repair cannot work (Proposition~\ref{prop:twophase}), and
establish that $\tfrac74$ strictly bounds the generalized three-path family and
is its exact supremum (Lemma~\ref{lem:threepathexact} and
Proposition~\ref{prop:familyceiling}).

\paragraph{Measured, and not proved.}
Nothing the body relies on. Measurement~\ref{meas:thrvertex} reports the
threshold-relative constant on a smaller dump-free construction only over the
range enumerated, and Corollary~\ref{cor:vertexconst} settles that constant
without it. Every statement labelled
\emph{Measurement} reports what the deposited code returned at the stated
parameters; none of them is a proof.

\paragraph{Open.}
Not the vertex restriction, and not attainment: the ceiling of \cite{Paper1} is
strict at every feasible point, so no execution reaches $\tfrac{11}{6}$ anywhere,
and Theorem~\ref{thm:invariance} records all four suprema as unattained. Not the
threshold either, which \cite{Paper1} settles over its whole range. What is open
is the size of a near-extremal vertex instance and the complexity of the
selection, both in Section~\ref{sec:open}.

Nothing in this note improves the approximation ratio for graph balancing,
which stands at $1.75$.

\appendix

\section{Computational record}\label{app:computational}

This appendix collects what the body does not need in order to be read: the
controls, the independent implementations, the search sizes, and the deposited
script behind each measured claim. Nothing here carries a proof. The deposit's
README maps every claim to the command and artifact that establish it.

\paragraph{Theorem~\ref{thm:74false}'s instance, and Lemma~\ref{lem:ce74opt}.}
The exact, solver-free check \texttt{p29amb\_ce74\_deposit.py} rebuilds the
instance, verifies feasibility and least feasibility, decides infeasibility at
$\tfrac74$ and feasibility at $\tfrac{44}{25}$, and writes the attaining
schedule to \texttt{p29amb\_ce74\_witness.json}. The independent arithmetic
check \texttt{p29lever\_review4\_checks.py} verifies the support facts and load
inequalities used in Lemma~\ref{lem:ce74opt}.

\paragraph{Theorem~\ref{thm:116}'s family (Measurement~\ref{meas:116}).}
The exact builder \texttt{p29amb\_116\_deposit.py} regenerates the family, and
\texttt{p29amb\_partner\_scarcity\_referee.py} independently enumerates its
valid matchings. The structure is checked for $g = 1, \dots, 40$, and for every
job rather than a
sample: each $A_k$ meets exactly one slot at its blocker, each $D_k$ exactly
two, each crammer's share mass is exactly $2$, the sink's exactly $g - 1$, and
the deficiency is exactly $1$. The minimum over valid matchings is decided
exactly for $g = 1, \dots, 6$, each time infeasible strictly below
$\tfrac{11}{6} - \tfrac1{4g}$ and feasible at it, with the probe set to half the
instance's own load lattice rather than to a fixed constant; and every valid
matching is enumerated there --- $5$, $135$, $3{,}040$, $65{,}745$,
$1{,}407{,}275$ and $30{,}041{,}280$ of them --- minimum equal to maximum each
time. The tie-break sweep covers all orderings the nonincreasing-size rule
leaves free up to $g = 5$, $3{,}840$ of them at that parameter; $g = 6$ is
measured under the canonical order alone, because the sweep re-enumerates
everything once per ordering and there are $46{,}080$ of them, some hundreds of
hours at the rate the deposited file measures for itself. That one row is
therefore weaker than the other five, and the artifact carries the count and the
cost. The fifth member, at $107/60 = 1.78\overline{3}$, is the first to exceed
$16/9$. Both routines carry positive controls designed for other claims: the
enumerator returns exactly the eight valid matchings \cite{Paper1} enumerates,
worst $\tfrac{11}{6} - \delta$ and best $1$, and the optimum routine returns
$\tfrac43 + \tfrac52\delta$ on the six-job family. The deposit scripts
\texttt{p29amb\_\allowbreak 116\_\allowbreak deposit.py} and
\texttt{p29amb\_\allowbreak rigidity\_\allowbreak deposit.py} refuse to write when a recomputed value
disagrees with the claim.

\paragraph{Theorem~\ref{thm:74bfalse}'s family (Measurement~\ref{meas:74b}).}
\texttt{p29amb\_partner\_scarcity\_referee.py} rebuilds the slot structure from
Section~\ref{sec:prelim}'s definitions in exact rational arithmetic and
enumerates every valid matching. The deposited log records $g \le 4$; the
$g = 5$ and $g = 6$ rows come from re-running it, and $g = 7$'s forced load is
the closed form rather than a run. Three controls, each built to fail and each
failing for a named reason. The calibration run reproduces the two values
Theorem~\ref{thm:116} states in print at $g = 1$, $\OPT = \tfrac{13}{12}$ and a
forced $\tfrac{19}{12}$. The unmodified family, run through the identical code,
does not clear $\tfrac74$ at $g = 4$ --- $\tfrac{85}{61}$ --- despite an
identical forced value, so the partner change and not the harness is what clears
$\tfrac74$. And moving a little of one filler's share from its crammer to the
sink, which gives the sink one more slot and changes nothing else, drops the
forced load by exactly $\tfrac12$ at every $g$: the pigeonhole, isolated.
Reversing the free tie-break between the two equal-sized $A$ jobs on a crammer
changes nothing. The paired control on the unmodified family is measured rather
than predicted at $g \le 6$: run at $g = 6$ the deposited script returns
$\tfrac{31}{24}$ there, by the same exhaustive branch and bound over each job's
own support.

\texttt{p29amb\_partner\_vertex\_check.py} computes the rank of the tight
constraints by exact Gauss--Jordan, with a feasible integral point as a positive
control; \texttt{p29amb\_\allowbreak vertex\_\allowbreak forcing\_\allowbreak search.py} runs the purification
search of Remark~\ref{rem:74bvertex}, rejecting any endpoint that fails either
the feasibility or the vertex test rather than counting it; and
\texttt{p29amb\_vertex\_oracle\_search.py} runs the random-instance search with
the unpurified point of Theorem~\ref{thm:74bfalse} as its positive control,
which must return $\tfrac74$ at $g = 3$ and $\tfrac{85}{48}$ at $g = 4$, since
a low maximum over vertices would otherwise mean only that the pipeline cannot
find a large ratio anywhere. Both harnesses refuse to report any ratio at or
above $\tfrac{11}{6}$, which the ceiling of \cite{Paper1} forbids and which
would therefore be a defect in the harness rather than a result.

\paragraph{Theorem~\ref{thm:pinvertex}'s family.}
The general counts are reproduced at every checked $g$: $10g + 3$ machines,
$22g + 4$ variables, and a minimum over valid matchings of exactly
$\tfrac{11}{6} - \tfrac1{4g}$. The climb up $g$ stopped at a recursion limit
rather than at a failure; the largest instance checked carries $2{,}253$
machines and an active set of rank $4{,}954$. Vertexhood at $g = 4$ and $g = 8$
was decided independently by Bareiss fraction-free integer elimination in a
program sharing no code, giving rank $92$ of $92$ and $180$ of $180$, which the
general count $22g + 4$ reproduces. Four controls, each able to fail: the same
pipeline returns $\tfrac{85}{48}$ on Theorem~\ref{thm:74bfalse}'s own
\emph{non}-vertex point; buying the sink one extra slot collapses the lower
bound to $\tfrac{61}{48}$; a smaller family on $18$ machines reaches
$\tfrac{37}{21} > \tfrac74$ at a vertex of rank $79$ of $79$, so $43$ machines
is not what the construction needs; deleting one padding job leaves the rank
full while deleting two drops it, which is the single dependency a spanning-tree
support predicts; and the decider returns both verdicts at every $g$, infeasible
at $\tfrac74$ and feasible at the claimed value.

\paragraph{The two vertex searches, and the per-machine and closed-form checks.}
The three measurements below are reported here rather than in the body: none of
them carries a statement the argument depends on.

\begin{measurement}[the two vertex searches, and they are not one search]
\label{meas:vertexoracle}
Both searches build random instances around a job of size exactly $1$, so that
the least feasible threshold and the optimum are both pinned at $1$ and the
oracle's ratio is its makespan; both purify to a point whose active constraints
have full rank, which is what ``certified'' means here.
(a) \emph{One pool.} \texttt{p29amb\_vertex\_oracle\_search.py}, run at $900$
trials (\texttt{python3 p29amb\_vertex\_oracle\_search.py 900}) at the script's
own default seed, certifies $207$ vertices. Their ratios are
$\tfrac{17}{12}$ once, $\tfrac43$ five times, $\tfrac54$ nine times,
$\tfrac76$ seven times, $\tfrac{13}{12}$ twice, and exactly $1$ at the
remaining $183$; the tally sums to $207$. The largest is
$\tfrac{17}{12} \approx 1.417$. The script's deposited $400$-trial default
certifies $97$ vertices with $84$ at exactly $1$ and the same largest value.
Both runs are deterministic in the seed and were re-executed for this
revision.
(b) \emph{Four size bands.} \texttt{p29amb\_vertex\_oracle\_trend.py} runs
$700$ trials in each of four bands of increasing instance size and certifies
$162$, $134$, $71$ and $39$ vertices; the largest ratio is $\tfrac43$ in the
first three bands and $\tfrac76$ in the fourth. \textbf{No band produced
$\tfrac{17}{12}$}, which is why (a) and (b) are reported apart: the largest
value either search found comes from (a), and reading it off (b)'s bands would
attribute it to an experiment that did not produce it.
The search of (a) runs a control before it reports anything --- the same
pipeline on Theorem~\ref{thm:74bfalse}'s own unpurified point, which must
return $\tfrac{85}{48}$ at $g = 4$ --- and abandons the run if it fails; both
scripts stop outright on any ratio at or above $\tfrac{11}{6}$, which
the $11/6$ ceiling of \cite{Paper1} forbids and which would therefore be a defect
in the harness rather than a result.
\end{measurement}

\begin{measurement}\label{meas:permachine}
\texttt{p29amb\_permachine\_check.py} evaluates the three per-machine bounds
directly --- (N0) against $\tfrac32$, (N1) against $2 - x_q/2$, and (S) against
$\tfrac32 + p/3$ --- on \emph{every} valid matching of $4{,}000$ random feasible
instances: $13{,}174$ matchings enumerated in full, zero violations, and the
proposition's own conclusion (every machine strictly below $\tfrac{11}{6}$) holds
throughout. (N1) is checked against the bound the proof derives rather than
against $\tfrac{11}{6}$, since a check of the weaker number would pass even where
the derivation is wrong.

The margins say what this does and does not test. The tightest the random sample
comes is $\tfrac7{36}$ below (S), $\tfrac14$ below (N1) and $\tfrac5{12}$ below
(N0), so it exercises the bounds without approaching them; what approaches them
is Theorem~\ref{thm:116}'s family, which the same file runs as a control and
which sits $\tfrac1{16}$ under the ceiling at $g = 4$.
\end{measurement}

\begin{measurement}\label{meas:prop7}
(a) The closed form
$f(S,c) = (1 + S + c)/\max(1, \tfrac12 + \tfrac S2 + c)$ was maximized on
a $2000 \times 1200$ grid of intervals over $S \in [0, 1/2]$, $c \in [0,1]$: supremum
$1.75000000$, attained at $(S, c) = (0.5, 0.25)$. This form of the threshold
coincides with the truth only on
the extremal line $r = 3$, $S = 1/2$; the maximization is recorded because
it located that line, not because the form is correct off it.
(b) The sign and order of the finite-$k$ correction were measured
directly at $S = 0.499$, $c = 0.25$, where the limit value is $1$. The relaxation
threshold is $1.1871$, $1.1067$, $1.0745$, $1.0572$, $1.0390$, $1.0295$,
$1.0198$, $1.0148$ at $k = 1,2,3,4,6,8,12,16$; the excess over $1$ times
$k$ rises from $0.1871$ to $0.2371$ and flattens, consistent with
$+\Theta(1/k)$. The corresponding ratios
$(1 + S + c)/T_{\mathrm{LP}}$ are $1.4733$ up to $1.7235$, increasing
toward $7/4$ from below. \texttt{p29amb\_meas\_prop7\_check.py} walks the grid
and the $r = 3$, $S = 0.499$, $c = 0.25$ line that these numbers come from.
Two conventions in the above need stating, and both are easier to get wrong
by reading than by running: the four-decimal rule is nearest with ties
toward zero, not truncation, and $(T-1) \cdot k$ dips by $0.0003$ at
$k = 16$, so it flattens rather than being nondecreasing.
\end{measurement}

\section*{Code and data}
The code and artifacts needed to reproduce every computational result reported
in this note are deposited on the Open Science Framework as
\texttt{osf\_package\_p29amb.zip} at
\url{https://doi.org/10.17605/OSF.IO/SKX86}, the same project that holds
the deposit of \cite{Shavit26}: the claim-supporting scripts, the artifacts and logs they write, a
driver that regenerates all of it, and pinned dependency versions. The
deposit is the replication set, not the project's working directory. A
README maps each claim to the command and artifact that establish it, and
the driver runs in two tiers: a quick pass of minutes, and a full pass that
adds the extended verification suite.

Theorem~\ref{thm:pinvertex}'s scripts are in the deposit with the JSON they
write: the builder
\texttt{p29amb\_\allowbreak pinned\_\allowbreak vertex\_\allowbreak family.py}
and the second opinion
\texttt{p29amb\_\allowbreak independent\_\allowbreak verify.py}, which shares
no code with it by design, together with the proof checks, the smaller
$18$-machine witness, the climb up $g$, the perturbed family of
Measurement~\ref{meas:pinvertex}, and the counting-crux programs run under both
boundary readings. All of them carry a different prefix from the rest, because
the family they build post-dates the note's own naming. The README's claim map
names each with the command that runs it.

The deposit reproduces every numerical measurement reported here except the
solver-behaviour fact inside Measurement~\ref{meas:74b}, which that measurement
marks as having no regenerating script; the README
records the package's layout and the modules it carries from the deposit of
\cite{Shavit26}.

All computational results reported here were regenerated from the deposited
code and artifacts. The public package contains the final claim-supporting
checks; exploratory proof attempts and the internal revision diary are not part
of the replication set.

\section*{Acknowledgements}

I thank Baruch Schieber (NJIT) for introducing me to the
restricted-assignment and graph-balancing gap problems; this note began
from reading Wang and Sitters within the problem area he opened up for
me. I thank JMS and CS for reading and commenting on drafts.

\section*{Authorship and computational process}

I conceived and directed this work and am its author. Claude and ChatGPT
assisted with writing; ChatGPT and Gemini provided adversarial review; and
Claude helped me run the simulations. I did not personally inspect every
generated line of code or independently repeat every computation. The
computational results are supported by deposited code and artifacts, exact
checks, cross-checks, and positive controls. I chose the claims presented here
and take responsibility for the paper.

\vfill
\begin{center}
\begin{minipage}{0.72\textwidth}
\itshape\small
We chose the visible steps\\
the best to ever tread\\
but we were pinned \& didn't move\\
even with the corners of the polytope\\
the same not there.\\
The lever is in some other place\ldots
\end{minipage}
\end{center}

\end{document}